\renewcommand{\vec}[1]{\boldsymbol{#1}}
\newtheorem{thm}{Theorem}
\newtheorem{lem}{Lemma}
\newtheorem{defn}{Definition}
\newtheorem{rem}{Remark}
\newtheorem{pro}{Problem}
\begin{document}
%
% paper title
% Titles are generally capitalized except for words such as a, an, and, as,
% at, but, by, for, in, nor, of, on, or, the, to and up, which are usually
% not capitalized unless they are the first or last word of the title.
% Linebreaks \\ can be used within to get better formatting as desired.
% Do not put math or special symbols in the title.
\title{Continuous Profit Maximization: A Study of Unconstrained Dr-submodular Maximization}
%
%
% author names and IEEE memberships
% note positions of commas and nonbreaking spaces ( ~ ) LaTeX will not break
% a structure at a ~ so this keeps an author's name from being broken across
% two lines.
% use \thanks{} to gain access to the first footnote area
% a separate \thanks must be used for each paragraph as LaTeX2e's \thanks
% was not built to handle multiple paragraphs
%

\author{Jianxiong Guo,
	Weili Wu,~\IEEEmembership{Member,~IEEE}
	\thanks{J. Guo and W. Wu are with the Department
		of Computer Science, Erik Jonsson School of Engineering and Computer Science, Univerity of Texas at Dallas, Richardson, TX, 75080 USA
		
		E-mail: jianxiong.guo@utdallas.edu}% <-this 
	\thanks{Manuscript received April 19, 2005; revised August 26, 2015.}}

% note the % following the last \IEEEmembership and also \thanks - 
% these prevent an unwanted space from occurring between the last author name
% and the end of the author line. i.e., if you had this:
% 
% \author{....lastname \thanks{...} \thanks{...} }
%                     ^------------^------------^----Do not want these spaces!
%
% a space would be appended to the last name and could cause every name on that
% line to be shifted left slightly. This is one of those "LaTeX things". For
% instance, "\textbf{A} \textbf{B}" will typeset as "A B" not "AB". To get
% "AB" then you have to do: "\textbf{A}\textbf{B}"
% \thanks is no different in this regard, so shield the last } of each \thanks
% that ends a line with a % and do not let a space in before the next \thanks.
% Spaces after \IEEEmembership other than the last one are OK (and needed) as
% you are supposed to have spaces between the names. For what it is worth,
% this is a minor point as most people would not even notice if the said evil
% space somehow managed to creep in.

% The paper headers
\markboth{Journal of \LaTeX\ Class Files,~Vol.~14, No.~8, August~2015}%
{Shell \MakeLowercase{\textit{et al.}}: Bare Demo of IEEEtran.cls for IEEE Journals}
% The only time the second header will appear is for the odd numbered pages
% after the title page when using the twoside option.
% 
% *** Note that you probably will NOT want to include the author's ***
% *** name in the headers of peer review papers.                   ***
% You can use \ifCLASSOPTIONpeerreview for conditional compilation here if
% you desire.

% If you want to put a publisher's ID mark on the page you can do it like
% this:
%\IEEEpubid{0000--0000/00\$00.00~\copyright~2015 IEEE}
% Remember, if you use this you must call \IEEEpubidadjcol in the second
% column for its text to clear the IEEEpubid mark.

% use for special paper notices
%\IEEEspecialpapernotice{(Invited Paper)}

% make the title area
\maketitle

% As a general rule, do not put math, special symbols or citations
% in the abstract or keywords.
\begin{abstract}
	Profit maximization (PM) is to select a subset of users as seeds for viral marketing in online social networks, which balances between the cost and the profit from influence spread. We extend PM to that under the general marketing strategy, and form continuous profit maximization (CPM-MS) problem, whose domain is on integer lattices. The objective function of our CPM-MS is dr-submodular, but non-monotone. It is a typical case of unconstrained dr-submodular maximization (UDSM) problem, and take it as a starting point, we study UDSM systematically in this paper, which is very different from those existing researcher. First, we introduce the lattice-based double greedy algorithm, which can obtain a constant approximation guarantee. However, there is a strict and unrealistic condition that requiring the objective value is non-negative on the whole domain, or else no theoretical bounds. Thus, we propose a technique, called lattice-based iterative pruning. It can shrink the search space effectively, thereby greatly increasing the possibility of satisfying the non-negative objective function on this smaller domain without losing approximation ratio. Then, to overcome the difficulty to estimate the objective value of CPM-MS, we adopt reverse sampling strategies, and combine it with lattice-based double greedy, including pruning, without losing its performance but reducing its running time. The entire process can be considered as a general framework to solve the UDSM problem, especially for applying to social networks. Finally, we conduct experiments on several real datasets to evaluate the effectiveness and efficiency of our proposed algorithms.
\end{abstract}

% Note that keywords are not normally used for peerreview papers.
\begin{IEEEkeywords}
	Continuous profit maximization, Social networks, Integer lattice, dr-submodular maximization, Sampling strategies, Approximation algorithm
\end{IEEEkeywords}

% For peer review papers, you can put extra information on the cover
% page as needed:
% \ifCLASSOPTIONpeerreview
% \begin{center} \bfseries EDICS Category: 3-BBND \end{center}
% \fi
%
% For peerreview papers, this IEEEtran command inserts a page break and
% creates the second title. It will be ignored for other modes.
\IEEEpeerreviewmaketitle

\section{Introduction}
\IEEEPARstart{O}{nline} social networks (OSNs) were becoming more and more popular to exchange ideas and make friends gradually in recent years, and accompanied by the rise of a series of social giants, such as Twitter, Facebook, Wechat, and LinkedIn. People tended to share what one sees and hears, and discuss some hot issues on these social platforms instead of traditional ways. Many companies or advertisers exploited to spread their products, opinions or innovations. By offering those influential users free or discounted samples, information can be spread across the whole network through word-of-mouth effect \cite{domingos2001mining} \cite{richardson2002mining}. Inspired from that, the influence maximization (IM) problem \cite{kempe2003maximizing} was formulated, which selects a subset of users (seed set) to maximizing the expected follow-up adoptions (influence spread) for a given information cascade. In this Kempe \textit{et al.}'s seminal work \cite{kempe2003maximizing}, IM was defined on the two basic discrete diffusion models, independent cascade model (IC-model) and linear threshold model (LT-model), and these two models can be generalized to the triggering model. Then, they proved the IM problem is NP-hard, and obtain a $(1-1/e)$-approximation under the IC/LT-model by use of a simple hill-climbing in the framework of monotonicity and submodularity.

Since this seminal work, plenty of related problems based on IM that used for different scenarios emerged \cite{guo2019novel} \cite{guo2019multi}. Among them, profit maximization (PM) \cite{lu2012profit} \cite{zhang2016profit} \cite{tang2016profit} \cite{tong2018coupon} \cite{8952599} is the most representative and widely used one. Consider viral marketing for a given product, the gain is the influence spread generated from our selected seed set in a social network. However, it is not free to activate those users in this seed set. For instance, in a real advertisement scenario, discounts and rewards are usually adopted to improve users' desire to purchase and stimulate consumption. Thus, the net profit is equal to influence spread minus the expense of seed set, where more incentives do not imply more benefit. Tang \textit{et al.} \cite{tang2016profit} proved the objective function of PM is submodular, but not monotone, and double greedy algorithm has a $(1/2)$-approximation if the objective value is non-negative. Before this, Kempe \textit{et al.} \cite{kempe2015maximizing} proposed the generalized marketing instead of the seed set. A marketing strategy is denoted by $\vec{x}\in\mathbb{Z}^d_+$ where a user $u$ will be activated as a seed with probability $h_u(\vec{x})$. Thus, the seed set is not deterministic, but activated probabilistically according to a marketing strategy. In this paper, we propose a continuous profit maximization under the general marketing strategies (CPM-MS) problem, which aims to choose the optimal marketing vector $\vec{x}^*\preceq\vec{b}$ such that the net profit can be maximized. Each component $\vec{x}(i)\in\vec{x}$ stand for the investment to marketing action $M_i$. Actually, in order to promote their products, a company often adopts multiple marketing techniques, such as advertisements, discounts, cashback, and propagandas, whose effects are different to customers at different levels. Therefore, CPM-MS is much more generalized than traditional PM.

In this paper, after formulating our CPM-MS problem, we discuss its properties first. We show that its objective function is NP-hard, and given a marketing vector $\vec{x}$, it is \#P-hard to compute the expected profit exactly. Because of the difficulty to compute the expected profit, we give an equivalent method that needs to run Monte-Carlo simulations on a constructed graph. Then, we prove that the objective function of CPM-MS problem is dr-submodular, but not monotone. Extended from set function to vector function on integer lattice, the dr-submodularity has a diminishing return property. For the unconstrained submodular maximization (USM), Buchbinder \textit{et al.} \cite{buchbinder2015tight} proposed a randomized double greedy algorithm that can achieve a tight $(1/2)$-approximation ratio. To our CPM-MS problem, we are able to consider it as a case of unconstrained dr-submodular maximization (UDSM) inspired by USM. Here, we introduce a lattice-based double greedy algorithm for the UDSM, and a $(1/2)$-approximation can be obtained as well if objective value is non-negative. The marketing vector $\vec{x}$ is defined on $\vec{0}\preceq\vec{x}\preceq\vec{b}$, thus this approximation can be guaranteed only when the sum of objective values on $\vec{0}$ and $\vec{b}$ is not less than zero, which is hard to be satisfied in the real applications. Imagine to offer all marketing actions full investments, is it still profitable? The answer is no. To overcome this defect, we design a lattice-based iterative pruning technique. It shrinks the searching space gradually in an iterative manner, and then we initialize our lattice-based double greedy with this smaller searching space. According to this revised process, the objective values on this smaller space are very likely to be non-negative, thereby increasing greatly the applicability of our algorithm's approximation. As mentioned earlier, even if we can use Monte-Carlo simulations to estimate the expected profit, its time complexity is too high, and thus restrict its scalability. Here, based on the reverse influence sampling (RIS) \cite{borgs2014maximizing} \cite{tang2014influence} \cite{tang2015influence} \cite{tang2018online}, we design an unbiased estimator for the profit function, which can estimate the objective value of a given marketing vector accurately if the number of samplings is large enough. Next, we take this estimator as our new objective function, combine with lattice-based pruning and double greedy algorithm, and propose DG-IP-RIS algorithm eventually. It guarantees to obtain a $(1/2-\varepsilon)$-approximation under a weak condition, whose time complexity is improved significantly. Finaly, we conduct several experiments to evaluate the superiority of our proposed DG-IP-RIS algorithm to other heuristic algorithms and compare their running times respectively, especially with the help of reverse sampling, which support the effectiveness and efficiency of our approaches strongly.

\textbf{Organization:} Sec. \uppercase\expandafter{\romannumeral2} discusses the related work. Sec. \uppercase\expandafter{\romannumeral3} is dedicated to introduce diffusion model, and formulate the problem based on that. The properties and computability of our CPM-MS problem are presented in Sec. \uppercase\expandafter{\romannumeral4}. Sec. \uppercase\expandafter{\romannumeral5} is the main contributions, including lattice-based double greedy and pruning algorithms. Sec. \uppercase\expandafter{\romannumeral6} analyzes the time complexity and designs speedup algorithms based on sampling strategies. Experiments and discussions are presented in Sec. \uppercase\expandafter{\romannumeral7} and \uppercase\expandafter{\romannumeral8} is the conclusion for this paper. 

\section{Related Work}
\textbf{Influence Maximization:} Kempe \textit{et al.} \cite{kempe2003maximizing} formulated IM to a combinatorial optimization problem, generalized the triggering model, including IC-model and LT-model, and proposed a greedy algorithm with $(1-1/e-\varepsilon)$-approximation by adopting Monte-Carlo simulations. Given a seed set, Chen \textit{et al.} proved that computing its exact influence spread under the IC-model \cite{chen2010scalable} and LT-model \cite{chen2010scal} are \#P-hard, and they designed two heuristic algorithms that can solve IM problem under the IC-model \cite{chen2010scalable} and LT-model \cite{chen2010scal}, which reduce the computation overhead effectively. Brogs \textit{et al.} \cite{borgs2014maximizing} took RIS to estimate the influence spread first, subsequently, a lot of researchers utilized RIS to design efficient algorithms with $(1-1/e-\varepsilon)$-approximation. Tang \textit{et al.} \cite{tang2014influence} proposed TIM/TIM+ algorithms, which were better than Brogs \textit{et al.}'s IM method regardless of accuracy and time complexity. Then, they developed a more efficient algorithm, IMM \cite{tang2015influence}, based on martingale analysis. Nguyen \textit{et al. } \cite{nguyen2016stop} designed SSA/D-SSA and claimed it reduces the running time significantly without losing approximation ratio, but still be doubted by other researchers. Recently, Tang \textit{et al.} \cite{tang2018online} created an online process of IM, and it can be terminated at any time and get a solution with its approximation guarantee.

\textbf{Profit Maximization:} Domingos \textit{et al.} \cite{domingos2001mining} \cite{richardson2002mining} studied viral marketing systematically first, and they proposed customers' value and utilized markov random fields to model the process of viral marketing. Lu \textit{et al.} \cite{lu2012profit} distinguished between influence and actual adoption, and designed a decision-making process to explain how to adopt a product. Zhang \textit{et al.} \cite{zhang2016profit} studied the problem of distributing a limited budget across multiple products such that maximizing total profit. Tang \textit{et al.} \cite{tang2016profit} analyzed and solved USM problem by double greedy algorithm thoroughly with PM as background, and proposed iterative pruning technique, which is different from our pruning process, because our objective function is defined on integer lattice. Tong \textit{et al.} \cite{tong2018coupon} considered the coupon allocation in the PM problem, and designed efficient randomized algorithms to achieve $(1/2-\varepsilon)$-approximation with high probability. Guo \textit{et al.} \cite{8952599} proposed a budgeted coupon problem whose domain is constrained, and provided a continuous double greedy algorithm with a valid approximation.

\textbf{(Dr-)submodular maximization:} Nemhauser \textit{et al.} \cite{nemhauser1978analysis} \cite{fisher1978analysis} began to study monotone submodular maximization problem, and laid the theoretical foundation, where IM problem was the most relevant work. However, PM is submodular, but not monotone, which is a case of USM \cite{feige2011maximizing} \cite{buchbinder2015tight}. Feige \textit{et al.} \cite{feige2011maximizing} pointed out no approximation algorithm exists for general USM unless giving an additional assumption, and they developed a deterministic local search wich $(1/3)$-approximation and a randomized local search with $(2/5)$-approximation for maximizing non-negative submodular function. Assume non-negativity satisfied as well, Buchbinder \textit{et al.} \cite{buchbinder2015tight} optimized it to $(1/2)$-approximation further with much lower computational complexity. Soma \textit{et al.} \cite{soma2015generalization} generalized the diminishing return property to the integer lattice first, and solved submodular cover problem with a bicriteria approximation algorithm. Then, they \cite{soma2018maximizing} studied monotone dr-submodular maximization problem exhaustively, where they designed algorithms with $(1-1/e)$-approximation under the cardinality, polymatroid, and knapsack constraint. Even if the UDSM problem was discussed in \cite{soma2017non}, their techniques and backgrounds were very different from ours. Therefore, how to solve UDSM, especially when non-negativity cannot be satisfied, is still an open problem, which is the main contribution of this paper.

\section{Problem Formulation}
In this section, we provides some preliminaries to the rest of this paper, and fomulate our continuous profit maximization problem under the general marketing strategies.
\subsection{Influence model}
An OSN can be abstracted as a directed graph $G=(V,E)$ where $V=\{v_1,v_2,\cdots,v_n\}$ is the set of $n$ nodes (users) and $E=\{e_1,e_2,\cdots,e_m\}$ is the set of $m$ edges (relationship between users). We default $|V|=n$ and $|E|=m$ when given $G=(V,E)$. For each directed edge $(u,v)\in E$, we say $v$ is an outgoing neighbor of $u$, and $u$ is an incoming neighbor of $v$. For any node $u\in V$, let $N^-(u)$ denote its set of incoming neighbors, and $N^+(u)$ denote its set of outgoing neighbors. In the process of influence diffusion, we consider a user is active if she accepts (is activated by) the information cascade from her neighbors or she is selected as a seed successfully. To model the influence diffusion, Kempe et al. \cite{kempe2003maximizing} proposed two classical models, IC-model and LT-model.

Let $S\subseteq V$ be a seed set and $S_i\subseteq V$ be the set of all active nodes at time step $t_i$. The influence diffusion initiated by $S$ can be represented by a discrete-time stochastic process. At time step $t_0$, all nodes in $S$ are activated, so we have $S_0:=S$. Under the IC-model, there is a diffusion probabiltiy $p_{uv}\in(0,1]$ associated with each edge $(u,v)\in E$. We set $S_i:=S_{i-1}$ at time step $t_i$ $(t\geq 1)$ first; then, for each node $u\in S_{i-1}\backslash S_{i-2}$, activated first at time step $t_{i-1}$, it have one chance to activate each of its inactive outgoing neighbor $v$ with probability $p_{uv}$. We add $v$ into $S_i$ if $u$ activates $v$ successfully at $t_i$. Under the LT-model, each edge $(u,v)\in E$ has a weight $b_{uv}$, and each node $v\in V$ has a threshold $\theta_v$ sampled uniformly in $[0,1]$ and $\sum_{u\in N^-(v)}b_{uv}\leq 1$. We set $S_i:=S_{i-1}$ at time step $t_i$ $(t\geq 1)$ first; then, for each inactive node $v\in V\backslash S_{i-1}$, it can be activated if $\sum_{u\in S_{i-1}\cap N^-(v)}b_{uv}\geq\theta_v$. We add $v$ into $S_i$ if $v$ is activated successfully at $t_i$. The influence diffusion terminates when no more inactive nodes can be activated. In this paper, we consider the triggering mode, where IC-model and LT-model are its special cases.

\begin{defn}[Triggering Model \cite{kempe2003maximizing}]
Each node $v$ selects a triggering set $T_v$ randomly and independently according to a distribution $\mathcal{D}_v$ over the subsets of $N^-(v)$. We set $S_i:=S_{i-1}$ at time step $t_i$ $(t\geq 1)$ first; then, for each inactive node $v\in V\backslash S_{i-1}$, it can be activated if there is at least one node in $T_v$ activated in $t_{i-1}$. We add $v$ into $S_i$ if $v$ is activated successfully at $t_i$. The influence diffusion terminates when no more inactive nodes can be activated.
\end{defn}

From above, a triggering model can be defined as $\Omega=(G,\mathcal{D})$, where $\mathcal{D}=\{\mathcal{D}_{v_1},\mathcal{D}_{v_2},\cdots\mathcal{D}_{v_n}\}$ is a set of distribution over the subsets of each $N^-(v_i)$.

\subsection{Realization}
For each node $v\in V$, under the IC-model, each node $u\in N^-(v)$ appears in $v$'s random triggering set $T_v$ with probability $p_{uv}$ independently. Under the LT-model, at most one node can appear in $T_v$, thus, for each node $u\in N^-(v)$, $T_v=\{u\}$ with probability $b_{uv}$ exclusively and $T_v=\emptyset$ with probability $1-\sum_{u\in N^-(v)}b_{uv}$. Now, we can define the realization (possible world) $g$ of graph $G$ under the triggering model $\Omega=(G,\mathcal{D})$, that is

\begin{defn}[Realization]
Given a directed graph $G=(V,E)$ and triggering model $\Omega=(G,\mathcal{D})$, a realization $g=\{T_{v_1},T_{v_2},\cdots,T_{v_n}\}$ of $G$ is a set of triggering set sampled from distribution $\mathcal{D}$, denoted by $g\sim\Omega$. For each node $v\in V$, we have $T_{v}\sim\mathcal{D}_{v}$ respectively.
\end{defn}

If a node $u$ appears in $v$'s triggering set, $u\in T_v$, we say edge $(u,v)$ is live, or else edge $(u,v)$ is blocked. Thus, realization $g$ can be regarded as a subgraph of $G$, which is the remaining graph by removing these blocked edges. Let $\Pr[g|g\sim\Omega]$ be the probability of realization $g$ of $G$ sampled from distribution $\mathcal{D}$, that is,
\begin{equation}
\Pr[g|g\sim\Omega]=\prod_{i=1}^{n}\Pr[T_{v_i}|T_{v_i}\sim\mathcal{D}_{v_i}]
\end{equation}
where $\Pr[T_{v_i}|T_{v_i}\sim\mathcal{D}_{v_i}]$ is the probability of $T_{v_i}$ sampled from $\mathcal{D}_{v_i}$. Under the IC-model, $\Pr[T_{v}|T_{v}\sim\mathcal{D}_{v}]=\prod_{u\in T_{v}}p_{uv}\prod_{u\in N^-(v)\backslash T_v}(1-p_{uv})$, and under the LT-model, $\Pr[T_{v}=\{u\}|T_{v}\sim\mathcal{D}_{v}]=b_{uv}$ for each $u\in N^-(v)$ and $\Pr[T_{v}=\emptyset|T_{v}\sim\mathcal{D}_{v}]=1-\sum_{u\in N^-(v)}b_{uv}$ deterministically.

Given a seed set $S\subseteq V$, we consider $I_\Omega(S)$ as a random variable that denotes the number of active nodes (influence spread) when the influence diffusion of $S$ terminates under the triggering model $\Omega=(G,\mathcal{D})$. Then, the number of nodes that are reachable from at least one node in $S$ under a realization $g$, $g\sim\Omega$, is denoted by $I_g(S)$. Thus, the expected influence spread $\sigma_\Omega(S)$, that is
\begin{equation}
\sigma_\Omega(S)=\mathbb{E}_{g\sim\Omega}[I_g(S)]=\sum_{g\sim\Omega}\Pr[g]\cdot I_g(S)
\end{equation}
where it is the weighted average of influence spread under all possible graph realizations. The IM problem aims to find a seed set $S$, such that $|S|\leq k$, to maximize the expected influence spread $\sigma_\Omega(S)$.
\begin{thm}[\cite{kempe2003maximizing}]
Under a triggering model $\Omega=(G,\mathcal{D})$, the expected influence spread $\sigma_\Omega(S)$ is monotone and submodular with respect to seed set $S$.
\end{thm}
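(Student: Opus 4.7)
The plan is to exploit the realization decomposition already established in the excerpt. Starting from the identity $\sigma_\Omega(S)=\sum_{g\sim\Omega}\Pr[g]\cdot I_g(S)$, we observe that this expresses the expected spread as a nonnegative convex combination of the realization-based quantities $I_g(S)$. Since monotonicity and submodularity are both preserved under nonnegative linear combinations, it suffices to prove that for every fixed realization $g$, the function $S\mapsto I_g(S)$ is monotone and submodular on $2^V$.

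Before reducing $I_g$ to a reachability count, I would first verify the equivalence between the sequential activation dynamics and path reachability in the live-edge subgraph. Concretely, I would argue by induction on the time step $t_i$ that under the triggering model with realization $g$, a node $v$ is activated by seed set $S$ if and only if there exists a directed path in $g$ from some $s\in S$ to $v$ consisting entirely of live edges (i.e., edges $(u,w)$ with $u\in T_w$). The base case is immediate since nodes in $S$ are active at $t_0$; the inductive step uses the definition that $v\in V\backslash S_{i-1}$ is activated at $t_i$ precisely when some $u\in T_v$ is already active at $t_{i-1}$, which by induction means a live path from $S$ reaches $u$ and extends to $v$. Hence $I_g(S)=|R_g(S)|$ where $R_g(S)$ denotes the set of nodes reachable from $S$ via live edges in $g$.

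Monotonicity of $I_g$ is then immediate: if $S\subseteq T$ then $R_g(S)\subseteq R_g(T)$, so $I_g(S)\leq I_g(T)$. For submodularity, I would use the key identity $R_g(S\cup\{v\})=R_g(S)\cup R_g(\{v\})$, which gives the marginal gain $I_g(S\cup\{v\})-I_g(S)=|R_g(\{v\})\setminus R_g(S)|$. Since $R_g(S)\subseteq R_g(T)$ whenever $S\subseteq T$, we have $R_g(\{v\})\setminus R_g(T)\subseteq R_g(\{v\})\setminus R_g(S)$, yielding
\begin{equation*}
I_g(S\cup\{v\})-I_g(S)\geq I_g(T\cup\{v\})-I_g(T),
\end{equation*}
which is the diminishing-return form of submodularity. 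Averaging over $g\sim\Omega$ with the nonnegative weights $\Pr[g]$ transfers both properties to $\sigma_\Omega$.

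The main obstacle is the path-equivalence step in the second paragraph: one must carefully couple the time-indexed activation process with the static notion of reachability in a single sampled realization, because the triggering sets $T_v$ are sampled independently of the seed choice and do not depend on which seeds eventually reach $v$. Once this coupling is made rigorous by the induction sketched above, the submodularity and monotonicity reduce to elementary set-theoretic reasoning about reachable sets in a directed graph.
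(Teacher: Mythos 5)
Your proposal is correct and is essentially the standard argument from the cited source (Kempe et al.): decompose $\sigma_\Omega$ over live-edge realizations via Equation (2), establish by induction on time steps that activation under a fixed realization coincides with reachability along live edges, verify monotonicity and the diminishing-return inequality for the reachability count $I_g(S)=|R_g(S)|$, and conclude by closure of both properties under nonnegative linear combinations. The paper itself offers no proof of this theorem, deferring entirely to the citation, so there is nothing further to compare.
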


\subsection{Problem Definition}
Under the general marketing strategies, the definition of IM problem will be different from above \cite{kempe2003maximizing}. Let $\mathbb{Z}^d_+$ be the collection of non-negative integer vector. A marketing strategy can be denoted by a $d$-dimensional vector $\vec{x}=(x_1,x_2,\cdots,x_d)\in\mathbb{Z}^d_+$, and we call it ``marketing vector''. Each component $\vec{x}(i)\in\mathbb{Z}_+$, $i\in[d]=\{1,2,\cdots,d\}$ means the number of investment units assigned to marketing action $M_i$. For example, $\vec{x}(i)=b$ tells us that marketing strategy $\vec{x}$ assigns $b$ investment units to marketing action $M_i$. Given a marketing vector $\vec{x}$, the probability that node $u\in V$ is activated as a seed is denoted by strategy function $h_u(\vec{x})$, where $h_u(\vec{x})\in[0,1]$. Thus, unlike the standard IM problem, the selection of seed set is not deterministic, but stochastic. Given a marketing vector $\vec{x}$, the probability of seed set $S$ sampled from $\vec{x}$, that is
\begin{equation}
\Pr[S|S\sim\vec{x}]=\prod_{u\in S}h_u(\vec{x})\cdot\prod_{v\in V\backslash S}(1-h_v(\vec{x}))
\end{equation}
where $\Pr[S|S\sim\vec{x}]$ is the probability that exactly nodes in $S$ are selected as seeds, but not in S are not selected as seeds under the marketing strategy $\vec{x}$, because each node is select as a seed independently. Thus, the expected influence spread $\mu_\Omega(\vec{x})$ of marketing vector $\vec{x}$ under the triggering model $\Omega(G,\mathcal{D})$ can be formulated, that is
\begin{flalign}
\mu_\Omega(\vec{x})&=\sum_{S\subseteq V}\Pr[S|S\sim\vec{x}]\cdot\sigma_\Omega(S)\\
&=\sum_{S\subseteq V}\sigma_\Omega(S)\cdot\prod_{u\in S}h_u(\vec{x})\cdot\prod_{v\in V\backslash S}(1-h_v(\vec{x}))
\end{flalign}

As we know, benefit is the gain obtained from influence spread and cost is the price required to pay for marketing strategy. Here, we assume each unit of marketing action $M_i$, $i\in[d]$, is associated with a cost $c_i\in\mathbb{R}_+$. Then, the total cost function $c:\mathbb{Z}^d_+\rightarrow\mathbb{R}_+$ can be defined as $c(\vec{x})=\sum_{i\in[d]}c_i\cdot\vec{x}(i)$. For simplicity, we consider the expected influence spread as our benefit. Thus, the expected profit $f_\Omega(\vec{x})$ we can obtain from marketing strategy $\vec{x}$ is the expected influence spread of $\vec{x}$ minus the cost of $\vec{x}$, that is
\begin{equation}
f_\Omega(\vec{x})=\mu_\Omega(\vec{x})-c(\vec{x})
\end{equation}
where $c(\vec{x})=\sum_{i\in[d]}c_i\cdot\vec{x}(i)$. Therefore, the continuous profit maximization under the general marketing strategies (CPM-MS) problem is formulated as follows:
\begin{pro}[CPM-MS]
Given a triggering model $\Omega=(G,\mathcal{D})$, a constraint vector $\vec{b}\in\mathbb{Z}^d_+$ and a cost function $c:\mathbb{Z}^d_+\rightarrow\mathbb{R}_+$, the CPM-MS problem aims to find an optimal marketing vector $\vec{x}^*\preceq\vec{b}$ that maximizes its expected profit $f_\Omega(\vec{x})$. That is, $\vec{x}^*=\arg\max_{\vec{x}\preceq\vec{b}}f_\Omega(\vec{x})$.
\end{pro}

\section{Properties of CPM-MS}
In this section, we introduce the submodularity on integer lattice, and then analyze the submodularity and computability of our CPM-MS problem.
\subsection{Submodularity on Integer Lattice}
Generally, defined on set, a set function $\alpha:2^V\rightarrow\mathbb{R}$ is monotone if $\alpha(S)\leq \alpha(T)$ for any $S\subseteq T\subseteq V$, and submodular if $\alpha(S)+\alpha(T)\geq \alpha(S\cup T)+\alpha(S\cap T)$. The submodularity of set function implies a diminishing return property, thus $\alpha(S\cup\{u\})-\alpha(S)\geq \alpha(T\cup\{u\})-\alpha(T)$ for any $S\subseteq T\subseteq V$ and $u\notin T$. These two definitions of submodularity on set function are equivalent. Defined on integer lattice, a vector function $\beta:\mathbb{Z}^d_+\rightarrow\mathbb{R}$ is monotone if $\beta(\vec{s})\leq \beta(\vec{t})$ for any $\vec{s}\preceq\vec{t}\in\mathbb{Z}^d_+$, and submodular if $\beta(\vec{s})+\beta(\vec{t})\geq \beta(\vec{s}\lor\vec{t})+\beta(\vec{s}\land\vec{t})$ for any $s,t\in\mathbb{Z}^d_+$, where $(\vec{s}\lor\vec{t})(i)=\max\{\vec{s}(i),\vec{t}(i)\}$ and $(\vec{s}\land\vec{t})(i)=\min\{\vec{s}(i),\vec{t}(i)\}$. Here, $\vec{s}\preceq\vec{t}$ implies $\vec{s}(i)\leq\vec{t}(i)$ for each component $i\in[d]$. Besides, we consider a vector function is diminishing return submodular (dr-submodular) if $\beta(\vec{s}+\vec{e}_i)-\beta(\vec{s})\geq \beta(\vec{t}+\vec{e}_i)-\beta(\vec{t})$ for any $\vec{s}\preceq\vec{t}$ and $i\in [d]$, where $\vec{e}_i\in\mathbb{Z}^d_+$ is the $i$-th unit vector with the $i$-th component being $1$ and others being $0$. Different from the submodularity for a set function, for a vector function, $\beta$ is submodular does not mean it is dr-submodular, but the oppposite is true. Thus, dr-submodularity is stronger than submodularity generally.
\begin{lem}
Given a set function $\alpha:2^V\rightarrow\mathbb{R}$ and a vector function $\beta:\mathbb{Z}^d_+\rightarrow\mathbb{R}$, they sastisfy
\begin{equation}
\beta(\vec{x})=\sum_{S\subseteq V}\alpha(S)\cdot\prod_{u\in S}h_u(\vec{x})\cdot\prod_{v\in V\backslash S}(1-h_v(\vec{x}))
\end{equation}
If $\alpha(\cdot)$ is monotone and submodular and $h_u(\cdot)$ is monotone and dr-submodular for each $u\in V$, then $\beta(\cdot)$ is monotone and dr-submodular.
\end{lem}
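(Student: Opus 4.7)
The plan is to factor $\beta$ through the multilinear extension of $\alpha$ and then transfer the required properties via composition. Define $F\colon[0,1]^V\to\mathbb{R}$ by
\begin{equation}
F(\vec{p})=\sum_{S\subseteq V}\alpha(S)\prod_{u\in S}p_u\prod_{v\in V\backslash S}(1-p_v),
\end{equation}
and write $\vec{h}(\vec{x})=(h_u(\vec{x}))_{u\in V}\in[0,1]^V$. The hypothesis of the lemma is precisely $\beta(\vec{x})=F(\vec{h}(\vec{x}))$, so it suffices to (i) record the analytic properties of $F$ coming from the assumptions on $\alpha$, and then (ii) push them through the composition using the assumptions on each $h_u$.

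For step (i), I would verify that multilinearity of $F$ together with monotone-submodularity of $\alpha$ yield two facts: $F$ is coordinatewise nondecreasing on $[0,1]^V$, and $F$ is continuously DR-submodular in the sense that for any $\vec{p}\le\vec{q}$ in $[0,1]^V$ and any $\vec{r}\ge\vec{0}$,
\begin{equation}
F(\vec{p}+\vec{r})-F(\vec{p})\;\ge\;F(\vec{q}+\vec{r})-F(\vec{q}).
\end{equation}
The monotonicity of $F$ comes from $\partial F/\partial p_u\ge 0$, which by multilinearity equals the expected marginal $\mathbb{E}[\alpha(S\cup\{u\})-\alpha(S)]\ge 0$. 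The continuous DR-property follows from $\partial^2 F/\partial p_u\partial p_v\le 0$ for all $u,v$: off-diagonal terms are $\le 0$ by submodularity of $\alpha$, and diagonal terms vanish by multilinearity, so $F$ is concave along every nonnegative direction and the displayed inequality comes out by integrating $\nabla F$ along the parallel segments $\vec{p}+t\vec{r}$ and $\vec{q}+t\vec{r}$ for $t\in[0,1]$.

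For step (ii), monotonicity of $\beta$ is immediate: $\vec{x}\preceq\vec{y}$ and monotonicity of each $h_u$ give $\vec{h}(\vec{x})\le\vec{h}(\vec{y})$, whence $\beta(\vec{x})\le\beta(\vec{y})$ by monotonicity of $F$. For dr-submodularity, fix $\vec{x}\preceq\vec{y}$ and $i\in[d]$, and set $\vec{p}=\vec{h}(\vec{x})$, $\vec{q}=\vec{h}(\vec{y})$, $\vec{a}=\vec{h}(\vec{x}+\vec{e}_i)-\vec{p}$, and $\vec{b}=\vec{h}(\vec{y}+\vec{e}_i)-\vec{q}$. Monotonicity of each $h_u$ gives $\vec{p}\le\vec{q}$ and $\vec{a},\vec{b}\ge\vec{0}$, and dr-submodularity of each $h_u$ in the $i$-th coordinate gives $\vec{a}\ge\vec{b}$ componentwise. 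The desired inequality $\beta(\vec{x}+\vec{e}_i)-\beta(\vec{x})\ge\beta(\vec{y}+\vec{e}_i)-\beta(\vec{y})$ then falls out of the chain
\begin{equation}
F(\vec{p}+\vec{a})-F(\vec{p})\;\ge\;F(\vec{p}+\vec{b})-F(\vec{p})\;\ge\;F(\vec{q}+\vec{b})-F(\vec{q}),
\end{equation}
where the first step uses monotonicity of $F$ applied to $\vec{p}+\vec{a}\ge\vec{p}+\vec{b}$, and the second step is the continuous DR-property from step (i).

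The main obstacle is the mixed-partial identity underlying the continuous DR-property, namely $\partial^2 F/\partial p_u\partial p_v\le 0$ for $u\ne v$; this is the classical statement that the multilinear extension of a submodular set function is concave along every nonnegative direction, and I would prove it by expanding the derivative as an expectation of the second difference $\alpha(S\cup\{u,v\})-\alpha(S\cup\{u\})-\alpha(S\cup\{v\})+\alpha(S)$ over a product distribution on $V\backslash\{u,v\}$ and invoking submodularity of $\alpha$ termwise. Everything else in the argument is bookkeeping about compositions of monotone and DR-submodular maps.
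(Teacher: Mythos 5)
Your argument is correct and, unlike the paper, actually constitutes a proof: the paper disposes of this lemma with a one-line citation to the proof of Section~7 of Kempe et al.\ and to \cite{guo2020continuous}, so you are supplying the content it omits. Your route---factor $\beta=F\circ\vec{h}$ through the multilinear extension $F$ of $\alpha$, establish that $F$ is coordinatewise nondecreasing and has all second partials $\le 0$ (off-diagonal by submodularity of $\alpha$, diagonal zero by multilinearity), and then push the inequality through the composition using $\vec{p}\le\vec{q}$, $\vec{a}\ge\vec{b}\ge\vec{0}$---is precisely the standard argument underlying the cited references, so there is no real divergence in method, only in completeness. The chain $F(\vec{p}+\vec{a})-F(\vec{p})\ge F(\vec{p}+\vec{b})-F(\vec{p})\ge F(\vec{q}+\vec{b})-F(\vec{q})$ is the right decomposition, and the one point worth making explicit (that $\vec{p}+\vec{b}\le\vec{p}+\vec{a}=\vec{h}(\vec{x}+\vec{e}_i)$ stays in $[0,1]^V$, so all evaluations of $F$ occur where the probabilistic interpretation of its derivatives is valid) follows immediately from $\vec{b}\le\vec{a}$, so I see no gap.
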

\begin{proof}
It is an indirect corollary that has been implied by the proof process of section 7 in \cite{kempe2015maximizing} and \cite{guo2020continuous}.
\end{proof}
\begin{thm}
Given a triggering model $\Omega=(G,\mathcal{D})$, the profit function $f_\Omega(\cdot)$ is dr-submodular, but not monotone.
\end{thm}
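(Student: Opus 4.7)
The plan is to split $f_\Omega(\vec{x}) = \mu_\Omega(\vec{x}) - c(\vec{x})$ and establish dr-submodularity for each piece separately, then rule out monotonicity with a simple growth argument.

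First I would apply Lemma 1 with $\alpha = \sigma_\Omega$ and $\beta = \mu_\Omega$. Theorem 1 already gives that $\sigma_\Omega(\cdot)$ is monotone and submodular on $2^V$, and under the standard assumption on the marketing model (inherited from Kempe \emph{et al.}) each strategy function $h_u(\cdot)$ is itself monotone and dr-submodular in $\vec{x}$, so Lemma 1 applies verbatim and yields that $\mu_\Omega(\cdot)$ is monotone and dr-submodular on $\mathbb{Z}^d_+$.

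Next I would observe that $c(\vec{x}) = \sum_{i \in [d]} c_i \vec{x}(i)$ is linear, so its discrete marginal along any unit vector $\vec{e}_i$ is the constant $c_i$, independent of $\vec{x}$. Hence for every $\vec{s} \preceq \vec{t}$ in $\mathbb{Z}^d_+$ and every coordinate $i \in [d]$,
\begin{equation*}
\bigl[f_\Omega(\vec{s}+\vec{e}_i) - f_\Omega(\vec{s})\bigr] - \bigl[f_\Omega(\vec{t}+\vec{e}_i) - f_\Omega(\vec{t})\bigr] = \bigl[\mu_\Omega(\vec{s}+\vec{e}_i) - \mu_\Omega(\vec{s})\bigr] - \bigl[\mu_\Omega(\vec{t}+\vec{e}_i) - \mu_\Omega(\vec{t})\bigr] \geq 0,
\end{equation*}
since the $c$-terms cancel and the remaining expression is the dr-submodularity inequality for $\mu_\Omega$. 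Thus dr-submodularity transfers from $\mu_\Omega$ to $f_\Omega$ because subtracting a linear function does not affect discrete second differences.

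For non-monotonicity, I would exploit that $\mu_\Omega(\vec{x}) \leq n$ is uniformly bounded (it is a convex combination of influence spreads, each at most $n$), while $c(\vec{x})$ grows linearly in $\vec{x}$ with a strictly positive coefficient in at least one coordinate. Picking any $i$ with $c_i > 0$, for sufficiently large $k$ we get $f_\Omega(k\vec{e}_i) \leq n - k c_i < 0 = f_\Omega(\vec{0})$, even though $\vec{0} \preceq k\vec{e}_i$, which directly contradicts monotonicity.

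The main obstacle is really bookkeeping: confirming that the hypotheses of Lemma 1 on the strategy functions $h_u$ (monotonicity and dr-submodularity) are part of the paper's standing marketing-model assumptions, since without them Lemma 1 does not apply. Once that is granted, the structural heavy lifting is absorbed by Lemma 1 and Theorem 1, and only the linear-perturbation argument for $c$ and the boundedness argument against monotonicity remain.
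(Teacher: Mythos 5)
Your proposal matches the paper's own proof almost exactly: both invoke Lemma 1 together with Theorem 1 and Equation (5) to get that $\mu_\Omega(\cdot)$ is monotone and dr-submodular, and both then note that the linear cost term contributes only a constant $c_i$ to each discrete marginal, so it cancels and dr-submodularity carries over to $f_\Omega$. Your explicit boundedness argument for non-monotonicity is a welcome addition the paper actually omits (its proof stops after establishing dr-submodularity), and it is sound provided at least one $c_i>0$.
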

\begin{proof}
From Lemma 1, Theorem 1, and Equation (5), we have known that the expected influence spread $\mu_\Omega(\cdot)$ is monotone and dr-submodular because $\sigma_\Omega(\cdot)$ is monotone and submodular. Thus, we have $f_\Omega(\vec{x}+\vec{e}_i)-f_\Omega(\vec{x})=\mu_\Omega(\vec{x}+\vec{e}_i)-\mu_\Omega(\vec{x})-c_i\geq\mu_\Omega(\vec{y}+\vec{e}_i)-\mu_\Omega(\vec{y})-c_i=f_\Omega(\vec{y}+\vec{e}_i)-f_\Omega(\vec{y})$ iff $\vec{x}\preceq\vec{y}\in\mathbb{Z}^d_+$. Thus, $f_\Omega(\cdot)$ is dr-submodular.
\end{proof}

\subsection{Computability}
Given a seed set $S\subseteq V$, it is \#P-hard to compute the expected influence spread $\sigma_\Omega(S)$ under the IC-model \cite{chen2010scalable} and the LT-model \cite{chen2010scal}. Assume that a marketing vector $\vec{x}\in\{0,1\}^n$ and $h_u(\vec{x})=\vec{x}(u)$ for $u\in V$ where user $u$ is a seed if and only if $\vec{x}(u)=1$. According to the Equation (4), the expected influence spread $\mu_\Omega(\vec{x})$ is equivalent to $\sigma_\Omega(S)$ in which $S=\{u\in V:\vec{x}(u)=1\}$. Thereby, given a marketing vector $\vec{x}$, computing the expected influence spread $\mu_\Omega(\vec{x})$ is \#P-hard as well under the IC-model and LT-model. Subsequently, a natural question how to estimate the value of $\mu_\Omega(\vec{x})$ given $\vec{x}$ effectively. To estimate $\mu_\Omega(\vec{x})$, we usually adopt Monte-Carlo simulations. However, it is inconvenient for us to use such a method here because the randomness comes from two parts, one is from the seed selection, and the other is from the process of influence diffusion. Therefore, we require to design a more simple and efficient method.

First, we are able to establish an equivalent relationship between $\sigma_\Omega(\cdot)$ and $\mu_\Omega(\cdot)$. Given a social network $G=(V,E)$ and a marketing vector $\vec{x}\in\mathbb{Z}^d_+$, we create a constructed graph $\widetilde{G}=(\widetilde{V},\widetilde{E})$ by adding a new node $\widetilde{u}$ and a new directed edge $(\widetilde{u},u)$ for each node $u\in V$ to $G$. Take IC-model for instance, the diffusion probability for this new edge $(\widetilde{u},u)$ can be set as $p_{\widetilde{u}u}=h_u(\vec{x})$. Then, we have
\begin{equation}
	\mu(\vec{x}|G)=\sigma(\widetilde{V}-V|\widetilde{G})-|V|
\end{equation}
where $\mu(\cdot|G)$ and $\sigma(\cdot|\widetilde{G})$ imply that we compute them under the graph $G$ and the constructed graph $\widetilde{G}$.
\begin{thm}
	Given a social network $G=(V,E)$ and a marketing vector $\vec{x}\in\mathbb{Z}^d_+$, the expected influence spread $\mu_\Omega(\vec{x})$ can be estimated with $(\gamma,\delta)$-approximation by Monte-Carlo simulations in $O\left(\frac{(m+3n)n^2\ln(2/\delta)}{2(\gamma\sum_{u\in V}h_u(\vec{x}))^2}\right)$ running time.
\end{thm}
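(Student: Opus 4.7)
The plan is to turn the claimed equivalence $\mu(\vec{x}|G)=\sigma(\widetilde V\setminus V\mid \widetilde G)-n$ (Equation (8)) into a concrete Monte-Carlo estimator on the constructed graph $\widetilde G=(\widetilde V,\widetilde E)$, and then apply a standard concentration bound to pin down the number of simulations needed for a $(\gamma,\delta)$-approximation. First I would regard $\widetilde G$ as a single instance of the triggering model by attaching, for every $u\in V$, the live-edge probability $p_{\widetilde u u}=h_u(\vec{x})$ (and the analogous triggering distributions under LT or the general triggering model). With the deterministic seed set $\widetilde V\setminus V$, the randomness of seed selection from $\vec x$ is absorbed into the randomness of the realization of $\widetilde G$, so both sources of randomness collapse into one: a plain forward simulation of influence spread on $\widetilde G$ from the fixed seeds.

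Next I would quantify one simulation. Sample a realization $g\sim \widetilde\Omega$ and compute $I_g(\widetilde V\setminus V)$ by a BFS/DFS reachability sweep from the $n$ new nodes, which requires $O(|\widetilde V|+|\widetilde E|)=O(2n+(m+n))=O(m+3n)$ time. Define the per-trial random variable $X_t:=I_{g_t}(\widetilde V\setminus V)-n$; since the seeds $\widetilde V\setminus V$ are always reachable from themselves and the only other reachable nodes lie in $V$, we have $X_t\in[0,n]$, and by Equation (8) its expectation is exactly $\mu_\Omega(\vec x)$. The estimator is $\hat\mu_T:=\tfrac{1}{T}\sum_{t=1}^{T}X_t$, which is unbiased and has bounded increments, setting up the use of Hoeffding's inequality.

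Then I would apply Hoeffding to obtain
\begin{equation}
\Pr\bigl[\,|\hat\mu_T-\mu_\Omega(\vec x)|\geq \gamma\mu_\Omega(\vec x)\,\bigr]\;\leq\; 2\exp\!\left(-\frac{2T(\gamma\mu_\Omega(\vec x))^2}{n^2}\right).
\end{equation}
To convert this into the stated sample size I need a computable lower bound on $\mu_\Omega(\vec x)$. The natural one is $\mu_\Omega(\vec x)\geq \sum_{u\in V}h_u(\vec x)$: every realized seed is reachable from itself, so $\sigma_\Omega(S)\geq|S|$, and taking expectation over $S\sim\vec x$ gives the bound. Using $\mu_\Omega(\vec x)\geq \sum_u h_u(\vec x)$ inside the exponent, the failure probability is at most $\delta$ provided $T\geq \tfrac{n^2\ln(2/\delta)}{2(\gamma\sum_{u\in V}h_u(\vec x))^2}$. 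Multiplying by the $O(m+3n)$ per-trial cost yields the announced $O\!\left(\tfrac{(m+3n)n^2\ln(2/\delta)}{2(\gamma\sum_{u\in V}h_u(\vec x))^2}\right)$ running time.

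The only non-routine step, and thus the main obstacle I would flag, is the passage from the absolute Hoeffding bound to the multiplicative $(\gamma,\delta)$-guarantee: without an a priori lower bound on $\mu_\Omega(\vec x)$, the required sample complexity is not expressible in the problem data. The inequality $\mu_\Omega(\vec x)\geq \sum_u h_u(\vec x)$ resolves this cleanly, and a small sanity check on the boundary case $\sum_u h_u(\vec x)=0$ (where $\mu_\Omega(\vec x)=0$ and the estimator trivially returns $0$) shows that the bound is only invoked where it is meaningful. Everything else, the per-simulation cost and the Hoeffding inequality itself, is standard.
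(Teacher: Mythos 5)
Your proposal is correct and follows essentially the same route as the paper's own proof: reduce to $\sigma(\widetilde V\setminus V\mid\widetilde G)$ on the constructed graph, apply Hoeffding with per-trial range $n$, lower-bound the mean by $\sum_{u\in V}h_u(\vec{x})$ to get the multiplicative guarantee, and multiply by the $O(m+3n)$ per-simulation cost. Your version is in fact slightly more careful, since you explicitly justify the lower bound via $\sigma_\Omega(S)\geq|S|$ and note the degenerate case $\sum_u h_u(\vec x)=0$, which the paper glosses over.
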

\begin{proof}
Mentioned above, we can compute $\sigma(\widetilde{V}-V|\widetilde{G})$ on the constructed graph instead of $\mu(\vec{x}|G)$ on the original graph. Let $S=\widetilde{V}-V$, the value of $\sigma_\Omega(S)$ can be estimated by Monte-Carlo simulations according to (2). Based on Hoeffding's inequality, we can note that
\begin{equation*}
	\Pr\left[\left|\hat{\sigma}_\Omega(S)-\sigma_\Omega(S)\right|\geq\gamma(\sigma_\Omega(S)-n)\right]\leq 2e^{-2r\left(\frac{\gamma(\mu_\Omega(S)-n)}{n}\right)^2}
\end{equation*}
where $r$ is the number of Monte-Carlo simulations and $\sigma_\Omega(S)-n\leq n$. We have $\sigma_\Omega(S)-n\geq\sum_{u\in V}h_u(\vec{x})$, and to achieve a $(\gamma,\delta)$-estimation, the number of Monte-Carlo simulations $r\geq\frac{n^2\ln(2/\delta)}{2(\gamma\sum_{u\in V}h_u(\vec{x}))^2}$. For each iteration of simulations in the constructed graph, it takes $O(m+3n)$ running time. Thus, we can obtain a $(\gamma,\delta)$-approximation of $\mu_\Omega(\vec{x})$ in $O\left(\frac{(m+3n)n^2\ln(2/\delta)}{2(\gamma\sum_{u\in V}h_u(\vec{x}))^2}\right)$ running time.
\end{proof}

Based on the Theorem 3, we can get an accurate estimation for the objective function $f_\Omega(\vec{x})$, shown as (6), of CPM-MS problem by adjusting the parameter $\gamma$ and $\delta$ definitely.

\section{Algorithms Design}
From the last section, we have known that the objective function of CPM-MS is dr-submodular, but not monotone. In this section, we develop our new methods based on the double greedy algorithm \cite{buchbinder2015tight} for our CPM-MS, and obtain an optimal approximation ratio.

\subsection{Lattice-based Double Greedy}
For non-negative submodular functions, Buchbinder \textit{et al.} \cite{buchbinder2015tight} designed a double greedy algorithm to get a solution for the USM problem with a tight theoretical guarantee. Under the deterministic setting, the double greedy algorithm has a $(1/3)$-approximation, while it has a $(1/2)$-approximation under the randomized setting. Extending from set to integer lattice, we derive a revised double greedy algorithm that is suitable for dr-submodular functions, namely UDSM problem. We adopt the randomized setting, and the lattice-based double greedy algorithm is shown in Algorithm \ref{a1}. We omit the subscript of $f_\Omega(\cdot)$, denote it by $f(\cdot)$ from now on.

\begin{algorithm}[!t]
	\caption{\text{Lattice-basedDoubleGreedy}}\label{a1}
	\begin{algorithmic}[1]
		\renewcommand{\algorithmicrequire}{\textbf{Input:}}
		\renewcommand{\algorithmicensure}{\textbf{Output:}}
		\REQUIRE $f:\mathbb{Z}^d_+\rightarrow\mathbb{R}$, $[\vec{s},\vec{t}]$ where $\vec{s}\preceq\vec{t}\in\mathbb{Z}_+^d$
		\ENSURE $\vec{x}\in\mathbb{Z}^d_+$
		\STATE Initialize: $\vec{x}\leftarrow\vec{s}$, $\vec{y}\leftarrow\vec{t}$
		\FOR {$i\in[d]$}
		\WHILE {$\vec{x}(i)<\vec{y}(i)$}
		\STATE $a\leftarrow f(\vec{e}_i|\vec{x})$ and $b\leftarrow f(-\vec{e}_i|\vec{y})$
		\STATE $a'\leftarrow\max\{a,0\}$ and $b'\leftarrow\max\{b,0\}$
		\STATE $r\leftarrow\text{Uniform}(0,1)$
		\STATE (Note: we set $a'/(a'+b')=1$ if $a'=b'=0$)
		\IF {$r\leq a'/(a'+b')$}
		\STATE $\vec{x}\leftarrow\vec{x}+\vec{e}_i$ and $\vec{y}\leftarrow\vec{y}$
		\ELSE
		\STATE $\vec{y}\leftarrow\vec{y}-\vec{e}_i$ and $\vec{x}\leftarrow\vec{x}$
		\ENDIF
		\ENDWHILE
		\ENDFOR
		\RETURN $\vec{x}(=\vec{y})$\
	\end{algorithmic}
\end{algorithm}

Here, we denote by $f(\vec{e}_i|\vec{x})=f(\vec{x}+\vec{e}_i)-f(\vec{x})$, the marginal gain of adding component $i\in[d]$ by $1$. Generally, this algorithm is initialized by $[\vec{0},\vec{b}]$, and for each component $i\in[d]$, we increase $\vec{x}(i)$ by $1$ or decrease $\vec{y}(i)$ by $1$ until they are equal in each inner (while) iteration. The result returned by Algorithm \ref{a1} has $\vec{x}=\vec{y}$. Then, we have the following conclusion which can be inferred directly from double greedy algorithm in \cite{buchbinder2015tight}, that is
\begin{thm}
For our CPM-MS problem, if we initalize Algorithm \ref{a1} by $[\vec{0},\vec{b}]$, and $f(\vec{0})+f(\vec{b})\geq0$ is satisfied, the marketing vector $\vec{x}^\circ$ returned by Algorithm \ref{a1} is a $(1/2)$-approximate solution, such that
\begin{equation}
	\mathbb{E}[f(\vec{x}^\circ)]\geq(1/2)\cdot\max_{\vec{x}\preceq\vec{b}}f(\vec{x})
\end{equation}
\end{thm}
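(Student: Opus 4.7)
The plan is to lift the randomized double-greedy analysis of Buchbinder \textit{et al.}~\cite{buchbinder2015tight} from the Boolean cube to the integer lattice, exploiting the fact that Theorem 2 gives us dr-submodularity, which is the natural coordinate-wise strengthening of submodularity. First I would index the elementary (inner \texttt{while}) iterations of Algorithm~\ref{a1} as $j = 1, 2, \dots, N$ with $N = \sum_{i \in [d]} \vec{b}(i)$, and write $(\vec{x}_j, \vec{y}_j)$ for the state after step $j$; by construction $\vec{x}_0 = \vec{0}$, $\vec{y}_0 = \vec{b}$, $\vec{x}_j \preceq \vec{y}_j$ throughout, and $\vec{x}_N = \vec{y}_N = \vec{x}^\circ$. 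At a step that touches coordinate $i$, write $a := f(\vec{e}_i \mid \vec{x}_{j-1})$ and $b := f(-\vec{e}_i \mid \vec{y}_{j-1})$; dr-submodularity applied at $\vec{x}_{j-1} \preceq \vec{y}_{j-1} - \vec{e}_i$ gives $a + b \geq 0$, so the randomized rule $\Pr[\text{increase}] = a'/(a'+b')$ is well-defined.

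Next I would introduce a shadow sequence $\vec{z}_j$ interpolating between the optimum and the algorithm's output. Set $\vec{z}_0 := \vec{x}^* \in \arg\max_{\vec{x} \preceq \vec{b}} f(\vec{x})$ and, at step $j$ on coordinate $i$, put $\vec{z}_j := \vec{z}_{j-1} \lor \vec{x}_j$ if the algorithm incremented $\vec{x}$, and $\vec{z}_j := \vec{z}_{j-1} \land \vec{y}_j$ if it decremented $\vec{y}$. A straightforward induction shows $\vec{x}_j \preceq \vec{z}_j \preceq \vec{y}_j$ for every $j$, so in particular $\vec{z}_N = \vec{x}^\circ$, and $\vec{z}_j$ differs from $\vec{z}_{j-1}$ by at most one unit along coordinate $i$.

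The core of the argument is the per-step inequality
\[
\mathbb{E}\bigl[\,f(\vec{z}_{j-1}) - f(\vec{z}_j) \,\big|\, \mathcal{F}_{j-1}\,\bigr] \;\leq\; \tfrac{1}{2}\,\mathbb{E}\bigl[\,(f(\vec{x}_j) - f(\vec{x}_{j-1})) + (f(\vec{y}_j) - f(\vec{y}_{j-1})) \,\big|\, \mathcal{F}_{j-1}\,\bigr],
\]
where $\mathcal{F}_{j-1}$ is the history up to step $j-1$. To prove it I would condition on $\mathcal{F}_{j-1}$ and split according to whether $\vec{z}_{j-1}(i)$ equals $\vec{x}_{j-1}(i)$, sits strictly between $\vec{x}_{j-1}(i)$ and $\vec{y}_{j-1}(i)$, or equals $\vec{y}_{j-1}(i)$. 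In each case $f(\vec{z}_{j-1}) - f(\vec{z}_j)$ is either $0$ or can be upper bounded by $a$ or $b$ via dr-submodularity (using the anchors $\vec{x}_{j-1}$ and $\vec{y}_{j-1} - \vec{e}_i$); substituting the randomized probabilities $a'/(a'+b')$ and $b'/(a'+b')$ and clearing the denominator reduces the target inequality to the elementary fact $(a - b)^2 \geq 0$, which is exactly what forces the $1/2$ constant. The convention in line 7 of Algorithm~\ref{a1} handles the degenerate case: when $a' = b' = 0$, the conditions $a \leq 0$, $b \leq 0$ and $a + b \geq 0$ jointly force $a = b = 0$, so the inequality is vacuous. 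I expect this per-step lemma to be the main obstacle, mainly because the integer-lattice setting introduces the interior sub-case (strictly between $\vec{x}_{j-1}(i)$ and $\vec{y}_{j-1}(i)$) that has no analogue in the set version and must be checked separately.

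Finally, summing the per-step inequality over $j = 1, \dots, N$ and taking total expectations telescopes both sides to
\[
f(\vec{x}^*) - \mathbb{E}[f(\vec{x}^\circ)] \;\leq\; \tfrac{1}{2}\bigl(\mathbb{E}[f(\vec{x}_N)] + \mathbb{E}[f(\vec{y}_N)] - f(\vec{0}) - f(\vec{b})\bigr) \;=\; \mathbb{E}[f(\vec{x}^\circ)] - \tfrac{1}{2}\bigl(f(\vec{0}) + f(\vec{b})\bigr),
\]
where the equality uses $\vec{x}_N = \vec{y}_N = \vec{x}^\circ$. The hypothesis $f(\vec{0}) + f(\vec{b}) \geq 0$ then rearranges to $\mathbb{E}[f(\vec{x}^\circ)] \geq \tfrac{1}{2} f(\vec{x}^*) = \tfrac{1}{2} \max_{\vec{x} \preceq \vec{b}} f(\vec{x})$, which is the claim.
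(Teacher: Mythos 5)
Your proposal is correct and is exactly the argument the paper relies on: the paper's ``proof'' of this theorem simply defers to the randomized double-greedy analysis of Buchbinder et al., and your write-up is that analysis faithfully lifted to the integer lattice (shadow sequence $\vec{z}_j$, the three-way case split with the extra interior case contributing zero loss, per-step inequality reducing to $(a'-b')^2\geq 0$, and telescoping against $f(\vec{0})+f(\vec{b})\geq 0$). In fact you supply more detail than the paper does, and your version also recovers the paper's more general Lemma~7 as a by-product when the initialization $[\vec{0},\vec{b}]$ is replaced by $[\vec{g}^\circ,\vec{h}^\circ]$.
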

\noindent
Here, $f(\vec{x})\geq0$ for any $\vec{x}\preceq\vec{b}$ is equivalent to say $f(\vec{0})+f(\vec{b})\geq0$, namely $f(\vec{b})\geq0$ because of $f(\vec{0})=0$, which is a natural inference from the dr-submodularity.

\begin{algorithm}[!t]
	\caption{\text{Lattice-basedPruning}}\label{a2}
	\begin{algorithmic}[1]
		\renewcommand{\algorithmicrequire}{\textbf{Input:}}
		\renewcommand{\algorithmicensure}{\textbf{Output:}}
		\REQUIRE $f:\mathbb{Z}^d_+\rightarrow\mathbb{R}$, $\vec{b}\in\mathbb{Z}^d_+$
		\ENSURE $\pi_t=[\vec{g}_t,\vec{h}_t]$
		\STATE Initialize: $\vec{g}_t\leftarrow\vec{0}$, $\vec{h}_t\leftarrow\vec{b}$
		\STATE Initialize: $t\leftarrow 0$
		\WHILE {$\vec{g}_t\neq\vec{g}_{t-1}$ or $\vec{h}_t\neq\vec{h}_{t-1}$}
		\FOR {$i\in[d]$}
		\IF {$\vec{g}_t(i)=\vec{h}_t(i)$}
		\STATE $\vec{g}_{t+1}(i)\leftarrow\vec{g}_t(i)$
		\STATE $\vec{h}_{t+1}(i)\leftarrow\vec{h}_t(i)$
		\STATE Continue
		\ENDIF
		\IF {$f(\vec{e}_i|\vec{h}_t-\vec{h}_t(i)\vec{e}_i+\vec{g}_t(i)\vec{e}_i)\leq0$}
		\STATE $\vec{g}_{t+1}(i)\leftarrow\vec{g}_t(i)$
		\ELSE
		\STATE $\vec{g}_{t+1}(i)\leftarrow\vec{g}_t(i)+\max\{k:f(\vec{e}_i|\vec{h}_t-\vec{h}_t(i)\vec{e}_i+\vec{g}_t(i)\vec{e}_i+(k-1)\vec{e}_i)>0\}$, $k\in\{1,\cdots,\vec{h}_t(i)-\vec{g}_t(i)\}$
		\ENDIF
		\IF {$f(\vec{e}_i|\vec{g}_t)<0$}
		\STATE $\vec{h}_{t+1}(i)\leftarrow\vec{h}_t(i)$
		\ELSE
		\STATE $\vec{h}_{t+1}(i)\leftarrow\vec{g}_t(i)+\max\{k:f(\vec{e}_i|\vec{g}_t+(k-1)\vec{e}_i)\geq 0\}$, $k\in\{0,\cdots,\vec{h}_t(i)-\vec{g}_t(i)\}$
		\ENDIF
		\ENDFOR
		\STATE $t\leftarrow t+1$
		\ENDWHILE
		\RETURN $\pi_t=[\vec{g}_t,\vec{h}_t]$
	\end{algorithmic}
\end{algorithm}

\subsection{Lattice-based Iterative Pruning}
According to Theorem 4, $(1/2)$-approximation is based on an assumption that $f(\vec{0})+f(\vec{b})\geq0$, and this is almost impossible in may real application scenarios. It means that we are able to gain profit if giving all marketing action full investments, which is ridiculous for viral marketing. However, a valid approximation ratio cannot be obtained by using Algorithm \ref{a1} when $f(\vec{b})<0$ exists. To address this problem, Tang \textit{et al.} \cite{tang2016profit} proposed a groundbreaking techniques, called iterative pruning, to reduce the search space such that the objective is non-negative in this space and without losing approximation guarantee. But their techniques are designed for submodular function based on set domain, it cannot be applied to dr-submodular function directly. Inspired by \cite{tang2016profit}, we develop an iterative pruning technique suitable for dr-submodular functions in this section, which is a non-trival transformation from set to integer lattice.

Given a dr-submodular function $f(\vec{x})$ defined on $\vec{x}\preceq\vec{b}$, we have two vectors $\vec{g}_1$ and $\vec{h}_1$, such that: (1) $\vec{g}_1(i)=0$ if $f(\vec{e}_i|\vec{b}-\vec{b}(i)\vec{e}_i)\leq0$, or else $\vec{g}_1(i)=\max\{k:f(\vec{e}_i|\vec{b}-\vec{b}(i)\vec{e}_i+(k-1)\vec{e}_i)>0\}$ for $k\in\{1,\cdots,\vec{b}(i)\}$; (2) $\vec{h}_1(i)=0$ if $f(\vec{e}_i|\vec{0})<0$, or else $\vec{h}_1(i)=\max\{k:f(\vec{e}_i|\vec{0}+(k-1)\vec{e}_i)\geq0\}$ for $k\in\{1,\cdots,\vec{b}(i)\}$.

\begin{lem}
	We have $\vec{g}_1\preceq\vec{h}_1$.
\end{lem}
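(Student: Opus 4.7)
The plan is to prove the inequality coordinate by coordinate using dr-submodularity, exploiting the asymmetry in the two definitions and the fact that marginal gains along a single coordinate are non-increasing.

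Fix an arbitrary $i \in [d]$. The key tool is the dr-submodular inequality specialized to the two base vectors used in the definitions. Since $\vec{0} \preceq \vec{b} - \vec{b}(i)\vec{e}_i$, dr-submodularity gives, for every $k \in \{1,\ldots,\vec{b}(i)\}$,
\begin{equation*}
f(\vec{e}_i \mid \vec{0} + (k-1)\vec{e}_i) \;\geq\; f(\vec{e}_i \mid \vec{b} - \vec{b}(i)\vec{e}_i + (k-1)\vec{e}_i).
\end{equation*}
Note also that applying dr-submodularity along the $i$-th coordinate alone shows that both sides are non-increasing in $k$, so each of the sets whose maxima define $\vec{g}_1(i)$ and $\vec{h}_1(i)$ is an initial segment of $\{1,\ldots,\vec{b}(i)\}$; this makes both maxima well-defined and equal to the last index at which the marginal gain has the required sign.

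Next I would case-split on $\vec{g}_1(i)$. If $\vec{g}_1(i) = 0$, there is nothing to show since $\vec{h}_1(i) \geq 0$. Otherwise, $\vec{g}_1(i) \geq 1$, and by its definition $f(\vec{e}_i \mid \vec{b} - \vec{b}(i)\vec{e}_i + (\vec{g}_1(i)-1)\vec{e}_i) > 0$. Combining this with the displayed inequality at $k = \vec{g}_1(i)$ yields
\begin{equation*}
f(\vec{e}_i \mid \vec{0} + (\vec{g}_1(i)-1)\vec{e}_i) \;>\; 0 \;\geq\; 0,
\end{equation*}
so $\vec{g}_1(i)$ belongs to the set $\{k : f(\vec{e}_i \mid \vec{0} + (k-1)\vec{e}_i) \geq 0\}$ whose maximum is $\vec{h}_1(i)$; a quick check using non-increasingness of the marginal at $k=1$ also guarantees we are in the non-trivial branch of the definition of $\vec{h}_1(i)$ rather than the $\vec{h}_1(i) = 0$ branch. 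Hence $\vec{h}_1(i) \geq \vec{g}_1(i)$, which is exactly the desired inequality for coordinate $i$. Since $i$ was arbitrary, $\vec{g}_1 \preceq \vec{h}_1$ follows.

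The main (and essentially only) subtlety I expect is the mismatch between the strict inequality $> 0$ in the definition of $\vec{g}_1$ and the non-strict $\geq 0$ in the definition of $\vec{h}_1$; it is precisely this asymmetry that makes the argument go through cleanly, because strict positivity of the "high base" marginal passes through the dr-submodular bound and lands safely inside the $\geq 0$ set defining $\vec{h}_1$. The rest is just bookkeeping: checking that the maxima are attained, and that the edge case $\vec{g}_1(i)=0$ is handled trivially.
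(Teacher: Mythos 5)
Your proof is correct and uses essentially the same argument as the paper: a single dr-submodular comparison of the coordinate-$i$ marginal gains at the two bases $\vec{0}$ and $\vec{b}-\vec{b}(i)\vec{e}_i$, coordinate by coordinate. You run the comparison in the opposite direction (strict positivity of the marginal at the larger base at step $\vec{g}_1(i)-1$ forces positivity at base $\vec{0}$, placing $\vec{g}_1(i)$ inside the set whose maximum defines $\vec{h}_1(i)$), which has the minor advantage of explicitly handling the edge cases $\vec{g}_1(i)=0$ and $\vec{h}_1(i)=\vec{b}(i)$ that the paper's phrasing glosses over.
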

\begin{proof}
	For any component $i\in[d]$, we have $f(\vec{e}_i|\vec{b}-\vec{b}(i)\vec{e}_i+\vec{g}_1(i)\vec{e}_i)\leq0$ but $f(\vec{e}_i|\vec{b}-\vec{b}(i)\vec{e}_i+(\vec{g}_1(i)-1)\vec{e}_i)>0$; and $f(\vec{e}_i|\vec{0}+\vec{h}_1(i)\vec{e}_i)<0$ but $f(\vec{e}_i|\vec{0}+(\vec{h}_1(i)-1)\vec{e}_i)\geq0$. Because of dr-submodularity, it satisfies $f(\vec{e}_i|\vec{b}-\vec{b}(i)\vec{e}_i+\vec{h}_1(i)\vec{e}_i)\leq f(\vec{e}_i|\vec{0}+\vec{h}_1(i)\vec{e}_i)<0$. Thus, $(\vec{g}_1(i)-1)<\vec{h}_1(i)$ and $\vec{g}_1(i)\leq\vec{h}_1(i)$. Subsequently, $\vec{g}_1\preceq\vec{h}_1$.
\end{proof}
\noindent
Then, we define a collection denoted by $\pi_1=[\vec{g}_1,\vec{h}_1]$ that contains all the marketing vectors $\vec{x}$ that satisfies $\vec{g}_1\preceq\vec{h}_1$. Apparently, $\pi_1$ is a subcollection of $[\vec{0},\vec{b}]$.
\begin{lem}
All optimal solutions $\vec{x}^*$ that satisfy $f(\vec{x}^*)=\max_{\vec{x}\preceq\vec{b}}f(\vec{x})$ are contained in the collection $\pi_1=[\vec{g}_1,\vec{h}_1]$, i.e., $\vec{g}_1\preceq\vec{x}^*\preceq\vec{h}_1$ for all $\vec{x}^*$.
\end{lem}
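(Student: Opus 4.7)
I will argue by contradiction coordinate by coordinate, using dr-submodularity to propagate the sign of a single marginal gain across the lattice. The underlying observation is that if $\vec{x}^*$ is optimal then for every $i\in[d]$ we must have $f(\vec{e}_i\mid \vec{x}^*)\leq 0$ whenever $\vec{x}^*+\vec{e}_i\preceq\vec{b}$, and $f(\vec{e}_i\mid \vec{x}^*-\vec{e}_i)\geq 0$ whenever $\vec{x}^*(i)\geq 1$; otherwise $\vec{x}^*\pm\vec{e}_i$ would strictly beat $\vec{x}^*$. The task reduces to deriving a violation of one of these inequalities from any violation of $\vec{g}_1\preceq\vec{x}^*$ or $\vec{x}^*\preceq\vec{h}_1$.

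For the lower bound, suppose $\vec{x}^*(i)<\vec{g}_1(i)$ for some $i$, so $\vec{g}_1(i)\geq 1$ and, by its defining maximality, $f(\vec{e}_i\mid \vec{b}-\vec{b}(i)\vec{e}_i+(\vec{g}_1(i)-1)\vec{e}_i)>0$. Since dr-submodularity makes coordinate-wise marginals non-increasing in the base point, this strict positivity propagates backward in the $i$-th coordinate: for every $k\leq\vec{g}_1(i)$, the marginal $f(\vec{e}_i\mid \vec{b}-\vec{b}(i)\vec{e}_i+(k-1)\vec{e}_i)$ remains strictly positive. Setting $k=\vec{x}^*(i)+1\leq\vec{g}_1(i)\leq\vec{b}(i)$ and noting that $\vec{x}^*\preceq\vec{b}-\vec{b}(i)\vec{e}_i+\vec{x}^*(i)\vec{e}_i$ (they agree in coordinate $i$ and $\vec{x}^*(j)\leq\vec{b}(j)$ elsewhere), a second use of dr-submodularity yields $f(\vec{e}_i\mid \vec{x}^*)\geq f(\vec{e}_i\mid \vec{b}-\vec{b}(i)\vec{e}_i+\vec{x}^*(i)\vec{e}_i)>0$, so $\vec{x}^*+\vec{e}_i\preceq\vec{b}$ strictly improves on $\vec{x}^*$, a contradiction. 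The upper bound is the mirror image: assume $\vec{x}^*(i)>\vec{h}_1(i)$, which forces $\vec{x}^*(i)\geq 1$ and $\vec{h}_1(i)\leq\vec{b}(i)-1$, so that the defining maximality of $\vec{h}_1(i)$ yields $f(\vec{e}_i\mid\vec{h}_1(i)\vec{e}_i)<0$ (the case $\vec{h}_1(i)=0$ is subsumed, since then $f(\vec{e}_i\mid\vec{0})<0$ by hypothesis). Pushing this negative marginal forward by dr-submodularity gives $f(\vec{e}_i\mid (\vec{x}^*(i)-1)\vec{e}_i)<0$, and from $(\vec{x}^*(i)-1)\vec{e}_i\preceq\vec{x}^*-\vec{e}_i$ one more application yields $f(\vec{e}_i\mid\vec{x}^*-\vec{e}_i)<0$, so $\vec{x}^*-\vec{e}_i$ strictly improves on $\vec{x}^*$, again a contradiction.

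The main obstacle is purely bookkeeping: the two endpoints $\vec{g}_1$ and $\vec{h}_1$ are defined with asymmetric strict/non-strict inequalities ($>0$ for $\vec{g}_1$, $\geq 0$ for $\vec{h}_1$), so in each direction one has to track carefully that the marginal extracted from maximality is in fact strictly signed, which is exactly what lets the contradiction go through. Once that is pinned down, the two halves are structurally identical, each turning the tightness in the definition of $\vec{g}_1(i)$ or $\vec{h}_1(i)$ into a certified direction of improvement at any putative optimal $\vec{x}^*$ that escapes the box $[\vec{g}_1,\vec{h}_1]$.
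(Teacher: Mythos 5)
Your proof is correct and takes essentially the same route as the paper's: in each coordinate you compare the marginal at $\vec{x}^*$ against the marginal at the extreme base point used to define $\vec{g}_1(i)$ (resp.\ $\vec{h}_1(i)$) via dr-submodularity, and turn the resulting strict sign into a local improvement contradicting optimality. You are in fact slightly more careful than the paper about the edge cases ($\vec{h}_1(i)=\vec{b}(i)$, $\vec{h}_1(i)=0$) and about why strict positivity propagates to all $k\le\vec{g}_1(i)$, which the paper leaves implicit.
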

\begin{proof}
	For any component $i\in[d]$, we consider any vector $\vec{x}$ with $\vec{x}(i)<\vec{g}_1(i)$, we have $f(\vec{e}_i|\vec{x})\geq f(\vec{e}_i|\vec{b}-\vec{b}(i)\vec{e}_i+\vec{x}(i)\vec{e}_i)>0$ because of dr-submodularity. Thereby, $\vec{x}+\vec{e}_i$ has a larger profit than $\vec{x}$ for sure, so the $i$-th component of the optimal marketing vector $\vec{x}^*$ at least eqauls to $\vec{x}(i)+1$, which indicates $\vec{x}^*(i)\geq\vec{g}_1(i)$. On the other have, consider $\vec{x}(i)
	\geq\vec{h}_1(i)$, we have $f(\vec{e}_i|\vec{x})\leq f(\vec{e}_i|\vec{0}+\vec{x}(i)\vec{e}_i)<0$. Thereby, $\vec{x}+\vec{e}_i$ has a less profit than $\vec{x}$ for sure, so the $i$-th component of the optimal marketing vector $\vec{x}^*$ at most equals to $\vec{x}(i)$, which indicates $\vec{x}^*(i)\leq\vec{h}_1(i)$. Thus, $\vec{g}_1\preceq\vec{x}^*\preceq\vec{h}_1$.
\end{proof}

From above, Lemma 3 determines a range for the optimal vector, thus reducing the search space. Then, the collection $\pi_1=[\vec{g}_1,\vec{h}_1]$ can be pruned further in an iterative manner. Now, the upper bound of the optimal vector is $\vec{h}_1$, i.e., $\vec{x}^*\preceq\vec{h}_1$, hereafter, we are able to increase $\vec{g}_1$ to $\vec{g}_2$, where $\vec{g}_2(i)=\vec{g}_1(i)$ if $f(\vec{e}_i|\vec{h}_1-\vec{h}_1(i)\vec{e}_i+\vec{g}_1(i)\vec{e}_i)\leq0$, or else $\vec{g}_2(i)=\vec{g}_1(i)+\max\{k:f(\vec{e}_i|\vec{h}_1-\vec{h}_1(i)\vec{e}_i+\vec{g}_1(i)\vec{e}_i+(k-1)\vec{e}_i)>0\}$ for $k\in\{1,\cdots,\vec{h}_1(i)-\vec{g}_1(i)\}$. The lower bound of the optimal vector is $\vec{g}_1$, i.e., $\vec{x}^*\succeq\vec{g}_1$, and similarly we are able to decrease $\vec{h}_1$ to $\vec{h}_2$, where $\vec{h}_2(i)=\vec{h}_1(i)$ if $f(\vec{e}_i|\vec{g}_1)<0$, or else $\vec{h}_2(i)=\vec{g}_1(i)+\max\{k:f(\vec{e}_i|\vec{g}_1+(k-1)\vec{e}_i)\geq 0\}$ for $k\in\{1,\cdots,\vec{h}_1(i)-\vec{g}_1(i)\}$. In this process, it generates a more compressed collection $\pi_2=[\vec{g}_2,\vec{h}_2]$ than $\pi_1$. We repeat this process iteratively until $\vec{g}_t$ and $\vec{h}_t$ cannot be increased and decreased further. The Lattice-basedPruning algorithm is shown in Algorithm \ref{a2}. The collection returned by Algorithm \ref{a2} is denoted by $\pi^\circ=[\vec{g}^\circ,\vec{h}^\circ]$.

\begin{lem}
All optimal solutions $\vec{x}^*$ that satisfy $f(\vec{x}^*)=\max_{\vec{x}\preceq\vec{b}}f(\vec{x})$ are contained in the collection $\pi^\circ=[\vec{g}^\circ,\vec{h}^\circ]$, and $\vec{g}_t\preceq\vec{g}_{t+1}\preceq\vec{g}^\circ\preceq\vec{x}^*\preceq\vec{h}^\circ\preceq\vec{h}_{t+1}\preceq\vec{h}_t$ holds for all $\vec{x}^*$ and any $t\geq 0$.
\end{lem}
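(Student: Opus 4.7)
My plan is to establish the chain by induction on the iteration counter $t$. The inductive hypothesis I would maintain is: for every $t \geq 0$, (i) every optimizer satisfies $\vec{g}_t \preceq \vec{x}^* \preceq \vec{h}_t$, and (ii) $\vec{g}_t \preceq \vec{g}_{t+1}$ and $\vec{h}_{t+1} \preceq \vec{h}_t$. The base case $t=0$ is immediate from $\vec{g}_0=\vec{0}$ and $\vec{h}_0=\vec{b}$, and the case $t=1$ is just Lemma 3. The monotonicity half of (ii) is read off the update rules in Algorithm 2: on the first branch of each \textbf{if} the coordinate is left unchanged, while on the second branch $\vec{g}_t(i)$ is increased by a nonnegative integer and $\vec{h}_t(i)$ is lowered to something in $[\vec{g}_t(i),\vec{h}_t(i)]$. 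Since the whole process lives in the finite lattice $[\vec{0},\vec{b}]$, monotonicity forces termination in finitely many steps, so $\vec{g}^\circ=\vec{g}_T$ and $\vec{h}^\circ=\vec{h}_T$ for some $T$, and the displayed chain follows.

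The substance is the inductive step $\vec{g}_t \preceq \vec{x}^* \preceq \vec{h}_t \Rightarrow \vec{g}_{t+1} \preceq \vec{x}^* \preceq \vec{h}_{t+1}$, mimicking the proof of Lemma 3 but with the current bracket $[\vec{g}_t,\vec{h}_t]$ in place of $[\vec{0},\vec{b}]$. For $\vec{g}_{t+1}\preceq\vec{x}^*$ I would argue by contradiction: if $\vec{x}^*(i)<\vec{g}_{t+1}(i)$, then the definition of $\vec{g}_{t+1}(i)$ guarantees $f(\vec{e}_i\mid\vec{h}_t-\vec{h}_t(i)\vec{e}_i+\vec{x}^*(i)\vec{e}_i)>0$. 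Since $\vec{x}^*\preceq\vec{h}_t$ and the two vectors agree in coordinate $i$, coordinatewise comparison gives $\vec{x}^*\preceq\vec{h}_t-\vec{h}_t(i)\vec{e}_i+\vec{x}^*(i)\vec{e}_i$, so dr-submodularity yields $f(\vec{e}_i\mid\vec{x}^*)>0$, meaning $\vec{x}^*+\vec{e}_i$ beats $\vec{x}^*$ — contradiction.

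The direction $\vec{x}^*\preceq\vec{h}_{t+1}$ is slightly more delicate, because the update rule for $\vec{h}_{t+1}(i)$ is expressed with $\vec{g}_t$ rather than $\vec{x}^*$. Suppose $\vec{x}^*(i)>\vec{h}_{t+1}(i)$; the rule gives $f(\vec{e}_i\mid\vec{g}_t+(\vec{h}_{t+1}(i)-\vec{g}_t(i))\vec{e}_i)<0$. I would introduce the auxiliary vector $\vec{z}$ obtained from $\vec{x}^*-\vec{e}_i$ by overwriting each coordinate $j\neq i$ with $\vec{g}_t(j)$. The inductive hypothesis $\vec{g}_t\preceq\vec{x}^*$ gives $\vec{z}\preceq\vec{x}^*-\vec{e}_i$, while $\vec{z}(i)=\vec{x}^*(i)-1\geq\vec{h}_{t+1}(i)$ places $\vec{z}$ above the vector appearing in the update rule. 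Two applications of dr-submodularity then chain into $f(\vec{e}_i\mid\vec{x}^*-\vec{e}_i)\leq f(\vec{e}_i\mid\vec{z})\leq f(\vec{e}_i\mid\vec{g}_t+(\vec{h}_{t+1}(i)-\vec{g}_t(i))\vec{e}_i)<0$, so $\vec{x}^*-\vec{e}_i$ beats $\vec{x}^*$, again contradicting optimality.

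The main obstacle I anticipate is exactly this bookkeeping in the upper-bound direction: the algorithm deliberately mixes $\vec{h}_t$ into the definition of $\vec{g}_{t+1}$ and $\vec{g}_t$ into the definition of $\vec{h}_{t+1}$, so one has to insert the right auxiliary vector to be able to invoke dr-submodularity twice in the correct direction. Everything else is a straightforward lift of Lemmas 2 and 3, and in particular the feasibility check $\vec{g}_{t+1}\preceq\vec{h}_{t+1}$ can be handled in passing by the same argument used for Lemma 2 (comparing the marginal conditions defining the two updates via dr-submodularity).
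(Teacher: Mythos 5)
Your proposal is correct and follows essentially the same route as the paper: induction on $t$ lifting Lemmas 2 and 3, with dr-submodularity used to transfer the sign conditions defining $\vec{g}_{t+1}$ and $\vec{h}_{t+1}$ from the bracket $[\vec{g}_t,\vec{h}_t]$ to the optimum $\vec{x}^*$ (your auxiliary vector $\vec{z}$ is exactly the comparison point $\vec{g}_t-\vec{g}_t(i)\vec{e}_i+(q-1)\vec{e}_i$ that the paper writes inline). The only cosmetic difference is that you read the monotonicity $\vec{g}_t\preceq\vec{g}_{t+1}$ and $\vec{h}_{t+1}\preceq\vec{h}_t$ directly off the increment form of the update rules, whereas the paper rederives it via dr-submodularity; both are valid.
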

\begin{proof}
	First, we show that the collection generated in current iteration is a subcollection of that generated in previous iteration, namely $\vec{g}_t\preceq\vec{g}_{t+1}\preceq\vec{h}_{t+1}\preceq\vec{h}_t$. We prove it by induction. In Lemma 2, we have shown that $\vec{g}_0=\vec{0}\preceq\vec{g}_1\preceq\vec{h}_1\preceq\vec{h}_0=\vec{b}$. For any $t>1$, we assume that $\vec{g}_{t-1}\preceq\vec{g}_t\preceq\vec{h}_t\preceq\vec{h}_{t-1}$ is satisfied. Given a component $i\in[d]$, for any $q\leq\vec{g}_t(i)$, we have $f(\vec{e}_i|\vec{h}_{t-1}-\vec{h}_{t-1}(i)\vec{e}_i+(q-1)\vec{e}_i)>0$. Because of the dr-submodularity, we have $f(\vec{e}_i|\vec{h}_{t}-\vec{h}_{t}(i)\vec{e}_i+(q-1)\vec{e}_i)\geq f(\vec{e}_i|\vec{h}_{t-1}-\vec{h}_{t-1}(i)\vec{e}_i+(q-1)\vec{e}_i)>0$, which indicates $\vec{g}_t\preceq\vec{g}_{t+1}$. Similarly, for any $q\leq\vec{h}_{t+1}(i)$, we have $f(\vec{e}_i|\vec{g}_t-\vec{g}_t(i)\vec{e}_i+(q-1)\vec{e}_i)\geq0$. Because of the dr-submodularity, we have $f(\vec{e}_i|\vec{g}_{t-1}-\vec{g}_{t-1}(i)\vec{e}_i+(q-1)\vec{e}_i)\geq f(\vec{e}_i|\vec{g}_{t}-\vec{g}_{t}(i)\vec{e}_i+(q-1)\vec{e}_i)\geq0$, which indicates $\vec{h}_{t+1}\preceq\vec{h}_t$. Moreover, for any $q\leq\vec{g}_{t+1}(i)$, we have $f(\vec{e}_i|\vec{h}_{t}-\vec{h}_{t}(i)\vec{e}_i+(q-1)\vec{e}_i)>0$. Due to $\vec{g}_t\preceq\vec{h}_t$ and dr-submodularity, we have $f(\vec{e}_i|\vec{g}_{t}-\vec{g}_{t}(i)\vec{e}_i+(q-1)\vec{e}_i)\geq f(\vec{e}_i|\vec{h}_{t}-\vec{h}_{t}(i)\vec{e}_i+(q-1)\vec{e}_i)>0$, which indicates $\vec{g}_{t+1}\preceq\vec{h}_{t+1}$. Thus, we conclude that  $\vec{g}_t\preceq\vec{g}_{t+1}\preceq\vec{h}_{t+1}\preceq\vec{h}_t$ holds for any $t\geq 0$.
	
	Then, we show that any optimal solutions $\vec{x}^*$ are contained in the collection $\pi^\circ=[\vec{g}^\circ,\vec{h}^\circ]$ returned by Algorithm \ref{a2}, namely $\vec{g}^\circ\preceq\vec{x}^*\preceq\vec{h}^\circ$. We prove it by induction. In Lemma 3, we have shown that $\vec{g}_1\preceq\vec{x}^*\preceq\vec{h}_1$. For any $t>1$, we assume that $\vec{g}_t\preceq\vec{x}^*\preceq\vec{h}_t$ is satisfied. Given a component $i\in[d]$, for any $q\leq\vec{g}_{t+1}(i)$, we have $f(\vec{e}_i|\vec{h}_t-\vec{h}_t(i)\vec{e}_i+(q-1)\vec{e_i})>0$. Because of the dr-submodularity, we have $f(\vec{e}_i|\vec{x}^*-\vec{x}^*(i)\vec{e}_i+(q-1)\vec{e_i})\geq f(\vec{e}_i|\vec{h}_t-\vec{h}_t(i)\vec{e}_i+(q-1)\vec{e_i})>0$, which implies $\vec{x}^*(i)\geq q$. Otherwise, if $\vec{x}^*(i)<q$, we have $f(\vec{e}_i|\vec{x}^*)>0$, which contradicts the optimality of $\vec{x}^*$, thus $\vec{x}^*\succeq\vec{g}_{t+1}$. Similarly, for any $q>\vec{h}_{t+1}(i)$, we have $f(\vec{e}_i|\vec{g}_t-\vec{g}_t(i)\vec{e}_i+(q-1)\vec{e}_i)<0$. Because of the dr-submodularity, we have 
	$f(\vec{e}_i|\vec{x}^*-\vec{x}^*(i)\vec{e}_i+(q-1)\vec{e_i})<f(\vec{e}_i|\vec{g}_t-\vec{g}_t(i)\vec{e}_i+(q-1)\vec{e}_i)<0$, which implies $\vec{x}^*(i)<q$. Otherwise, if $\vec{x}^*(i)\geq q$, we have $f(\vec{e}_i|\vec{x}^*-\vec{x}^*(i)\vec{e}_i+(q-1)\vec{e}_i)<0$, which contradicts the optimality of $\vec{x}^*$, thus $\vec{x}^*\preceq\vec{h}_{t+1}$. Thus, we conclude that $\vec{g}_{t+1}\preceq\vec{x}^*\preceq\vec{h}_{t+1}$ holds for any $t\geq 0$, and $\vec{g}^\circ\preceq\vec{x}^*\preceq\vec{h}^\circ$. The proof of Lemma is completed.
\end{proof}
\begin{lem}
For any two vectors $\vec{x},\vec{y}\in\mathbb{Z}^d_+$ with $\vec{x}\preceq\vec{y}$ and a dr-submodular function $f:\mathbb{Z}^d_+\rightarrow\mathbb{R}$, we have
\begin{equation}
f(\vec{y})=f(\vec{x})+\sum_{i=1}^{d}\sum_{j=1}^{\vec{z}(i)}f\Bigg(\vec{e}_i|\vec{x}+\sum_{k=1}^{i-1}\sum_{l=1}^{\vec{z}(k)}\vec{e}_k+\sum_{l=1}^{j-1}\vec{e}_i\Bigg)
\end{equation}
where we define $\vec{z}(i)=\vec{y}(i)-\vec{x}(i)$.
\end{lem}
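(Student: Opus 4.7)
The plan is to prove this by a direct telescoping argument; no submodularity is actually needed for the identity itself, only the additive decomposition of $f(\vec{y}) - f(\vec{x})$ along a coordinate-wise path from $\vec{x}$ to $\vec{y}$. First, I would build a chain of vectors starting at $\vec{x}$ and ending at $\vec{y}$ by incrementing one unit at a time: exhaust coordinate $1$ first (raising it from $\vec{x}(1)$ up to $\vec{y}(1)$), then coordinate $2$, and so on up to coordinate $d$. Formally, let $N = \sum_{i=1}^{d}\vec{z}(i)$ and define $\vec{v}_0 = \vec{x}$, with $\vec{v}_{k+1} = \vec{v}_k + \vec{e}_{i(k)}$, where $i(k)$ is the coordinate being incremented at step $k$. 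This guarantees $\vec{v}_N = \vec{y}$.

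Next, I would telescope the differences:
\begin{equation*}
f(\vec{y}) - f(\vec{x}) \;=\; \sum_{k=0}^{N-1}\bigl(f(\vec{v}_{k+1}) - f(\vec{v}_k)\bigr) \;=\; \sum_{k=0}^{N-1} f(\vec{e}_{i(k)} \mid \vec{v}_k).
\end{equation*}
The remaining task is purely bookkeeping: reindex the single index $k$ as a pair $(i,j)$, where $i \in [d]$ is the coordinate currently being incremented and $j \in \{1,\dots,\vec{z}(i)\}$ counts the position within coordinate $i$'s block. Under this reindexing, at the start of the $j$-th increment of coordinate $i$, the current vector is exactly $\vec{x} + \sum_{k=1}^{i-1}\sum_{l=1}^{\vec{z}(k)}\vec{e}_k + \sum_{l=1}^{j-1}\vec{e}_i$, since coordinates $1,\dots,i-1$ have already been fully exhausted (contributing the first double sum) and coordinate $i$ has been incremented $j-1$ times (contributing the last sum). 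Substituting into the telescoping sum yields exactly the claimed expression.

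The only mild obstacle is verifying the double-indexing matches the stated formula term-by-term, which is a mechanical check. One could alternatively give an induction on $\|\vec{y} - \vec{x}\|_1$: the base case $\vec{x} = \vec{y}$ is trivial, and the inductive step peels off one unit from the rightmost nonzero coordinate of $\vec{z}$ and invokes the hypothesis. I prefer the explicit telescoping formulation because it makes the correspondence with the right-hand side transparent, and because it clarifies that the lemma does not actually depend on dr-submodularity of $f$ — it is an unconditional identity that will later be combined with dr-submodularity to bound individual marginal terms.
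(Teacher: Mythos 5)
Your telescoping argument is correct, and it is exactly the right way to establish this identity; note that the paper itself offers no formal proof of Lemma 5, only a worked example ($\vec{x}=(1,1)$, $\vec{y}=(2,3)$) that illustrates precisely the coordinate-by-coordinate path you construct. Your observation that dr-submodularity plays no role in the identity itself is also accurate --- it is an unconditional telescoping decomposition, and the paper only invokes dr-submodularity later (in Lemma 6) to bound the individual marginal terms.
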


To understand Lemma 5, we give a simple example here. Let vector $\vec{x},\vec{y}$ be $\vec{x}=(1,1),\vec{y}=(2,3)$, subsequently we can see $\vec{x}\preceq\vec{y}$ and $\vec{z}=(1,2)$. From the definition of (10), we have $f(\vec{x})+f(\vec{e}_1|\vec{x})+f(\vec{e}_2|\vec{x}+\vec{e}_1)+f(\vec{e}_2|\vec{x}+\vec{e}_1+\vec{e}_2)=f(\vec{x}+\vec{e}_1+2\vec{e}_2)=f(\vec{y})$, which reflects the essence and correctness of Lemma 5 definitely.

\begin{lem}
The $f(\vec{g}_t)$ and $f(\vec{h}_t)$ are monotone non-decreasing with the increase of $t$.
\end{lem}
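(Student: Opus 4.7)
The plan is to use Lemma 5 to telescope the differences $f(\vec{g}_{t+1})-f(\vec{g}_t)$ and $f(\vec{h}_t)-f(\vec{h}_{t+1})$ into sums of rank-one marginals $f(\vec{e}_i\mid\vec{p})$, and then use the construction rules of Algorithm \ref{a2} together with the dr-submodularity of $f$ to force every marginal in the first sum to be strictly positive and every marginal in the second sum to be strictly negative. Lemma 4, which already established the nesting $\vec{g}_t\preceq\vec{g}_{t+1}\preceq\vec{h}_{t+1}\preceq\vec{h}_t$, is the key enabling fact, since it guarantees that the intermediate points produced by the telescoping lie in a region where the sign of the relevant marginal can be transferred through dr-submodularity.

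For the $\vec{g}$-sequence, I would apply Lemma 5 to the pair $\vec{g}_t\preceq\vec{g}_{t+1}$ and examine a generic telescoping point $\vec{p}=\vec{g}_t+\sum_{k<i}(\vec{g}_{t+1}(k)-\vec{g}_t(k))\vec{e}_k+(j-1)\vec{e}_i$ with $1\leq j\leq\vec{g}_{t+1}(i)-\vec{g}_t(i)$. By the defining rule of $\vec{g}_{t+1}$ in Algorithm \ref{a2}, the anchor marginal $f(\vec{e}_i\mid\vec{h}_t-\vec{h}_t(i)\vec{e}_i+(\vec{g}_t(i)+j-1)\vec{e}_i)$ is strictly positive. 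Since $\vec{p}(i)$ equals the $i$-th coordinate of this anchor and $\vec{p}(k)\leq\vec{g}_{t+1}(k)\leq\vec{h}_t(k)$ for $k\neq i$ by Lemma 4, we have $\vec{p}\preceq\vec{h}_t-\vec{h}_t(i)\vec{e}_i+\vec{p}(i)\vec{e}_i$, so dr-submodularity gives $f(\vec{e}_i\mid\vec{p})\geq f(\vec{e}_i\mid\vec{h}_t-\vec{h}_t(i)\vec{e}_i+\vec{p}(i)\vec{e}_i)>0$. Summing yields $f(\vec{g}_{t+1})\geq f(\vec{g}_t)$.

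For the $\vec{h}$-sequence, the symmetric move is to apply Lemma 5 to the pair $\vec{h}_{t+1}\preceq\vec{h}_t$, so that $f(\vec{h}_t)-f(\vec{h}_{t+1})$ is a telescoping sum in which every intermediate point $\vec{p}$ satisfies $\vec{p}(i)\geq\vec{h}_{t+1}(i)$ and $\vec{p}(k)\geq\vec{h}_{t+1}(k)\geq\vec{g}_t(k)$ for all $k\neq i$ (again by Lemma 4). The rule defining $\vec{h}_{t+1}$ says that the anchor marginal $f(\vec{e}_i\mid\vec{g}_t+(\vec{p}(i)-\vec{g}_t(i))\vec{e}_i)$ is strictly negative, because $\vec{p}(i)$ lies strictly beyond the maximal index at which that marginal is still non-negative. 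Since $\vec{p}\succeq\vec{g}_t+(\vec{p}(i)-\vec{g}_t(i))\vec{e}_i$, dr-submodularity flips the inequality in the right direction, giving $f(\vec{e}_i\mid\vec{p})\leq f(\vec{e}_i\mid\vec{g}_t+(\vec{p}(i)-\vec{g}_t(i))\vec{e}_i)<0$. Summing yields $f(\vec{h}_t)-f(\vec{h}_{t+1})\leq 0$, i.e., $f(\vec{h}_{t+1})\geq f(\vec{h}_t)$.

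The only real obstacle is the bookkeeping that certifies $\vec{p}\preceq\vec{h}_t-\vec{h}_t(i)\vec{e}_i+\vec{p}(i)\vec{e}_i$ in the first case and $\vec{p}\succeq\vec{g}_t+(\vec{p}(i)-\vec{g}_t(i))\vec{e}_i$ in the second; both reduce to comparing the non-$i$ coordinates of the telescoping point against the envelopes $\vec{h}_t$ and $\vec{g}_t$, which is exactly what the chain $\vec{g}_t\preceq\vec{g}_{t+1}\preceq\vec{h}_{t+1}\preceq\vec{h}_t$ from Lemma 4 provides. The degenerate case $\vec{g}_t(i)=\vec{h}_t(i)$ is handled by the \textbf{Continue} branch of Algorithm \ref{a2}, which contributes no terms to either telescoping sum and therefore cannot damage monotonicity.
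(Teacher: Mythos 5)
Your proof is correct and takes essentially the same route as the paper's: telescope the differences via Lemma 5, and use the sign constraints built into Algorithm 2 together with dr-submodularity and the nesting $\vec{g}_t\preceq\vec{g}_{t+1}\preceq\vec{h}_{t+1}\preceq\vec{h}_t$ from Lemma 4 to make every marginal in the $\vec{g}$-telescope positive and every marginal in the $\vec{h}$-telescope negative. The only (immaterial) difference is that you compare each telescoping point to the defining anchor in a single dr-submodularity step, whereas the paper routes the comparison through the intermediate points $\vec{g}_{t+1}-\vec{z}_t(i)\vec{e}_i+(j-1)\vec{e}_i$ and $\vec{h}_{t+1}+(j-1)\vec{e}_i$.
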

\begin{proof}
We prove that $f(\vec{g}_t)\leq f(\vec{g}_{t+1})$ and $f(\vec{h}_t)\leq f(\vec{h}_{t+1})$ respectively. Given a component $i\in[d]$, for any $q\leq\vec{g}_{t+1}(i)$, we have $f(\vec{e}_i|\vec{h}_t-\vec{h}_t(i)\vec{e}_i+(q-1)\vec{e}_i)>0$. Because of the dr-submodularity, we have $f(\vec{e}_i|\vec{g}_{t+1}-\vec{g}_{t+1}(i)\vec{e}_i+(q-1)\vec{e}_i)\geq  f(\vec{e}_i|\vec{h}_t-\vec{h}_t(i)\vec{e}_i+(q-1)\vec{e}_i)>0$, where $\vec{g}_{t+1}\preceq\vec{h}_t$. According to the Lemma 5, that is 
\begin{flalign}
&f(\vec{g}_{t+1})=f(\vec{g}_t)+\sum_{i=1}^{d}\sum_{j=1}^{\vec{z}_t(i)}f\Bigg(\vec{e}_i|\vec{g}_t+\sum_{k=1}^{i-1}\sum_{l=1}^{\vec{z}_t(k)}\vec{e}_k+\sum_{l=1}^{j-1}\vec{e}_i\Bigg)\nonumber\\
&\geq f(\vec{g}_t)+\sum_{i=1}^{d}\sum_{j=1}^{\vec{z}_t(i)}f\Bigg(\vec{e}_i|\vec{g}_{t+1}-\vec{z}_t(i)\vec{e}_i+\sum_{l=1}^{j-1}\vec{e}_i\Bigg)
\end{flalign}
where $\vec{z}_t(i)=\vec{g}_{t+1}(i)-\vec{g}_t(i)$. The inequality (11) is established since its dr-submodularity, that is $\vec{g}_t+\sum_{k=1}^{i-1}\sum_{l=1}^{\vec{z}_t(k)}\vec{e}_k\preceq\vec{g}_{t+1}-\vec{z}_t(i)\vec{e}_i$ definitely. Besides, since $f(\vec{e}_i|\vec{g}_{t+1}-\vec{z}_t(i)\vec{e}_i+\sum_{l=1}^{j-1}\vec{e}_i)>0$, we have $f(\vec{g}_{t+1})\geq f(g_{t})$. Similarly, for any $q>\vec{h}_{t+1}(i)$, we have $f(\vec{e}_i|\vec{g}_t-\vec{g}_t(i)\vec{e}_i+(q-1)\vec{e}_i)<0$. Because of the dr-submodularity, we have $f(\vec{e}_i|\vec{h}_{t+1}-\vec{h}_{t+1}(i)\vec{e}_i+(q-1)\vec{e}_i)\leq  f(\vec{e}_i|\vec{g}_t-\vec{g}_t(i)\vec{e}_i+(q-1)\vec{e}_i)<0$, where $\vec{g}_{t}\preceq\vec{h}_{t+1}$. According to the Lemma 5, that is $f(\vec{h}_{t})=$
\begin{flalign}
&=f(\vec{h}_{t+1})+\sum_{i=1}^{d}\sum_{j=1}^{\vec{z}_t(i)}f\Bigg(\vec{e}_i|\vec{h}_{t+1}+\sum_{k=1}^{i-1}\sum_{l=1}^{\vec{z}_t(k)}\vec{e}_k+\sum_{l=1}^{j-1}\vec{e}_i\Bigg)\nonumber\\
&\leq f(\vec{h}_{t+1})+\sum_{i=1}^{d}\sum_{j=1}^{\vec{z}_t(i)}f\Bigg(\vec{e}_i|\vec{h}_{t+1}+\sum_{l=1}^{j-1}\vec{e}_i\Bigg)
\end{flalign}
where $\vec{z}_t(i)=\vec{h}_{t}(i)-\vec{h}_{t+1}(i)$. The inequality (12) is established since its dr-submodularity, that is $\vec{h}_{t+1}+\sum_{k=1}^{i-1}\sum_{l=1}^{\vec{z}_t(k)}\vec{e}_k\succeq\vec{h}_{t+1}$ definitely. Besides, since $f(\vec{e}_i|\vec{h}_{t+1}+\sum_{l=1}^{j-1}\vec{e}_i)<0$, we have $f(\vec{h}_{t+1})\geq f(\vec{h}_t)$.
\end{proof}
At this time, we can initialize $\vec{x},\vec{y}$ with $\vec{x}\leftarrow\vec{g}^\circ,\vec{y}\leftarrow\vec{h}^\circ$ instead of starting with $\vec{x}\leftarrow\vec{0},\vec{y}\leftarrow\vec{b}$ in Algorithm \ref{a1}, where the search space required to be checked is reduced to $[\vec{g}^\circ,\vec{h}^\circ]$. Then, we are able to build the approximation ratio for our revised lattice-based double greedy algorithm.
\begin{lem}
If we initialize Algorithm \ref{a1} by $[\vec{g}^\circ,\vec{h}^\circ]$, the solution $\vec{x}^\circ$ returned by Algorithm \ref{a1} satisfies
\begin{equation}
\mathbb{E}[f(\vec{x}^\circ)]\geq\frac{f((\vec{x}^*\lor\vec{g}^\circ)\land\vec{h}^\circ)+\frac{1}{2}\cdot(f(\vec{g}^\circ)+f(\vec{h}^\circ))}{2}
\end{equation}
\end{lem}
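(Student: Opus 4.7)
The plan is to adapt the Buchbinder--Feldman--Naor--Schwartz potential argument to this integer-lattice initialization. I would index the atomic inner-loop updates of Algorithm \ref{a1} by $i=1,2,\ldots,T$, where each step picks a coordinate $j_i\in[d]$ and either increments $\vec{x}$ by $\vec{e}_{j_i}$ (with the randomized probability $a'/(a'+b')$) or decrements $\vec{y}$ by $\vec{e}_{j_i}$ (with probability $b'/(a'+b')$). Write $\vec{x}_0=\vec{g}^\circ$, $\vec{y}_0=\vec{h}^\circ$, and let $\vec{x}_i,\vec{y}_i$ be the vectors after the $i$-th atomic step. The stopping condition of the outer loop forces $\vec{x}_T=\vec{y}_T=\vec{x}^\circ$. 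I then introduce the shadow optimum
\[
\vec{o}_i \;=\; (\vec{x}^*\lor\vec{x}_i)\land\vec{y}_i,
\]
which lies in $[\vec{x}_i,\vec{y}_i]$ throughout. Crucially, $\vec{o}_0=(\vec{x}^*\lor\vec{g}^\circ)\land\vec{h}^\circ$ and $\vec{o}_T=\vec{x}^\circ$, so $f(\vec{o}_0)-\mathbb{E}[f(\vec{o}_T)]$ is exactly the quantity I need to control.

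The heart of the argument is the per-step inequality
\[
\mathbb{E}\bigl[f(\vec{o}_{i-1})-f(\vec{o}_i)\bigr] \;\leq\; \tfrac{1}{2}\,\mathbb{E}\bigl[\bigl(f(\vec{x}_i)-f(\vec{x}_{i-1})\bigr)+\bigl(f(\vec{y}_i)-f(\vec{y}_{i-1})\bigr)\bigr],
\]
conditioned on the history up through step $i-1$. To prove it I would split on whether the $j_i$-th coordinate of $\vec{x}^*$ sits below $\vec{x}_{i-1}(j_i)$, strictly between $\vec{x}_{i-1}(j_i)$ and $\vec{y}_{i-1}(j_i)$, or above $\vec{y}_{i-1}(j_i)$. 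In each case $\vec{o}_i$ either equals $\vec{o}_{i-1}$ or differs from it by exactly $\pm\vec{e}_{j_i}$; writing $a=f(\vec{e}_{j_i}\mid\vec{x}_{i-1})$ and $b=f(-\vec{e}_{j_i}\mid\vec{y}_{i-1})$, dr-submodularity (together with $\vec{x}_{i-1}\preceq\vec{o}_{i-1}\preceq\vec{y}_{i-1}$ and the analogue $f(-\vec{e}_{j_i}\mid\vec{w}+\vec{e}_{j_i})=-f(\vec{e}_{j_i}\mid\vec{w})$) yields $a+b\geq 0$. The randomized choice rule $\Pr[\text{increment}]=a'/(a'+b')$ is then calibrated exactly so that the expected loss in $f(\vec{o})$ is at most half the expected gain in $f(\vec{x})+f(\vec{y})$; the algebra is identical to the set-function case of \cite{buchbinder2015tight} because each atomic step modifies a single coordinate by one unit, and dr-submodularity plays precisely the role that set submodularity plays there. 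This step-by-step accounting will be the main technical obstacle, but it is a direct transcription once the correct cases are lined up.

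Having the per-step bound, I would telescope from $i=1$ to $T$, using $f(\vec{x}_T)=f(\vec{y}_T)=f(\vec{x}^\circ)$:
\[
f(\vec{o}_0)-\mathbb{E}[f(\vec{x}^\circ)] \;\leq\; \tfrac{1}{2}\bigl(\mathbb{E}[f(\vec{x}^\circ)]-f(\vec{g}^\circ)+\mathbb{E}[f(\vec{x}^\circ)]-f(\vec{h}^\circ)\bigr).
\]
Rearranging gives
\[
2\,\mathbb{E}[f(\vec{x}^\circ)] \;\geq\; f((\vec{x}^*\lor\vec{g}^\circ)\land\vec{h}^\circ)+\tfrac{1}{2}\bigl(f(\vec{g}^\circ)+f(\vec{h}^\circ)\bigr),
\]
which is the claimed bound after dividing by $2$. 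The only nontrivial piece is the per-step inequality; telescoping and the final algebra are routine once dr-submodularity has been applied correctly in the case analysis.
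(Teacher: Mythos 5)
Your proposal is correct and is precisely the argument the paper invokes but omits: the paper's ``proof'' of this lemma is a one-line deferral to Lemma 3.1 of Buchbinder et al., and your shadow-optimum $\vec{o}_i=(\vec{x}^*\lor\vec{x}_i)\land\vec{y}_i$ with the per-step bound $\mathbb{E}[f(\vec{o}_{i-1})-f(\vec{o}_i)]\leq\tfrac{1}{2}\mathbb{E}[(f(\vec{x}_i)-f(\vec{x}_{i-1}))+(f(\vec{y}_i)-f(\vec{y}_{i-1}))]$ followed by telescoping is exactly the intended lattice transcription of that analysis. Your case analysis, the use of dr-submodularity to get $a+b\geq 0$ and to bound the shadow loss by $a'$ or $b'$, and the final algebra all check out.
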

\begin{proof}
	It can be extended from the proof of Lemma 3.1 in \cite{buchbinder2015tight}. This procedure is complicated, so we omit here because of space limitation.
\end{proof}
\begin{thm}
	For our CPM-MS problem, if we initialize Algorithm \ref{a1} by $[\vec{g}^\circ,\vec{h}^\circ]$, and $f(\vec{g}^\circ)+f(\vec{h}^\circ)\geq 0$ is satisfied, the marketing vector $\vec{x}^\circ$ returned by Algorithm \ref{a1} is a $(1/2)$-approximate solution.
\end{thm}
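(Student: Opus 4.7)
The plan is to deduce Theorem 5 directly by composing the two structural lemmas already established, namely Lemma 4 (pruning preserves every optimum) and Lemma 7 (the general bound for the lattice-based double greedy when initialized on an arbitrary sub-box), together with the non-negativity hypothesis $f(\vec{g}^\circ)+f(\vec{h}^\circ)\geq 0$. So essentially no fresh ideas are needed; the work lies in collapsing the awkward expression $(\vec{x}^*\lor\vec{g}^\circ)\land\vec{h}^\circ$ that appears in Lemma 7 back to $\vec{x}^*$ itself.

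First I would invoke Lemma 4, which guarantees that every optimal solution of the original problem still lies inside the pruned lattice box, i.e. $\vec{g}^\circ\preceq\vec{x}^*\preceq\vec{h}^\circ$. Because the componentwise $\lor$ and $\land$ with $\vec{g}^\circ$ and $\vec{h}^\circ$ act as identities on any vector sandwiched between them, this sandwich immediately yields the identity
$$(\vec{x}^*\lor\vec{g}^\circ)\land\vec{h}^\circ \;=\; \vec{x}^*\land\vec{h}^\circ \;=\; \vec{x}^*.$$
This is precisely the step that exploits the design principle of the iterative pruning procedure: the whole point of Algorithm 2 was to shrink the search box without losing optima, so that we are entitled to replace this lattice projection by $\vec{x}^*$ for free.

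Substituting this identity into the inequality of Lemma 7 gives
$$\mathbb{E}[f(\vec{x}^\circ)] \;\geq\; \frac{f(\vec{x}^*) + \tfrac{1}{2}\bigl(f(\vec{g}^\circ)+f(\vec{h}^\circ)\bigr)}{2}.$$
Now I would simply apply the hypothesis $f(\vec{g}^\circ)+f(\vec{h}^\circ)\geq 0$ to drop the second summand in the numerator, obtaining
$$\mathbb{E}[f(\vec{x}^\circ)] \;\geq\; \tfrac{1}{2}\, f(\vec{x}^*) \;=\; \tfrac{1}{2}\,\max_{\vec{x}\preceq\vec{b}} f(\vec{x}),$$
which is exactly the claimed $(1/2)$-approximation guarantee.

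The main obstacle has already been absorbed into Lemma 7, whose proof lifts Buchbinder et al.'s randomized double greedy analysis from the Boolean cube to the integer lattice by tracking the dr-submodular marginal gains inside each inner while-loop; the paper defers that (nontrivial) argument. Given Lemma 7, the remaining reasoning for Theorem 5 is the few-line calculation above, and the only conceptual point to highlight in the write-up is the role of Lemma 4 in turning the bound of Lemma 7 into a clean fraction of the true optimum.
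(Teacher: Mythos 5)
Your proposal is correct and follows essentially the same route as the paper's own proof: invoke Lemma 4 to get $\vec{g}^\circ\preceq\vec{x}^*\preceq\vec{h}^\circ$, collapse $(\vec{x}^*\lor\vec{g}^\circ)\land\vec{h}^\circ$ to $\vec{x}^*$ in the bound of Lemma 7, and drop the non-negative term $\tfrac{1}{2}(f(\vec{g}^\circ)+f(\vec{h}^\circ))$. No gaps; this matches the paper's argument step for step.
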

\begin{proof}
	Based on Lemma 4, we have $\vec{g}^\circ\preceq\vec{x}^*\preceq\vec{h}^\circ$, hence $(\vec{x}^*\lor\vec{g}^\circ)\land\vec{h}^\circ=\vec{x}^*$. If $f(\vec{g}^\circ)+f(\vec{h}^\circ)\geq 0$, we can get that $\mathbb{E}[f(\vec{x}^\circ)]\geq(1/2)\cdot(f(\vec{x}^*)+(1/2)(f(\vec{g}^\circ)+f(\vec{h}^\circ)))\geq(1/2)\cdot f(\vec{x}^*)=(1/2)\cdot\max_{\vec{x}\preceq\vec{b}}f(\vec{x})$.
\end{proof}

From Theorem 5, it enables us to obtain the same approximation ratio by applying the lattice-based double greedy algorithm initialized by using iterative pruning if we have $f(\vec{g}^\circ)+f(\vec{h}^\circ)\geq 0$. According to the Lemma 6, that is $f(\vec{0})+f(\vec{b})=f(\vec{g}_0)+f(\vec{h}_0)\leq f(\vec{g}_1)+f(\vec{h}_1)\leq\cdots\leq f(\vec{g}^\circ)+f(\vec{h}^\circ)$. To achieve this condition $f(\vec{g}^\circ)+f(\vec{h}^\circ)\geq 0$ is much easier than $f(\vec{0})+f(\vec{b})\geq 0$. Therefore, the applications of Algorithm \ref{a1} with a theoretical bound are extended greatly by the technique of lattice-based iterative prunning.

\section{Speedup by Sampling Techniques}
In this section, we analyze the time complexity of Algorithm \ref{a1} and Algorithm \ref{a2}, and then discuss how to reduce their running time by sampling techniques.

\subsection{Time Complexity}
First, we assume there is a value oracle for computing the marginal gain of increasing or decreasing component $i\in[d]$ by $1$. If we initialize $\vec{x}\leftarrow\vec{0}$ and $\vec{y}\leftarrow\vec{b}$ at the beginning of lattice-based double greedy algorithm (Algorithm \ref{a1}), we have to take $2\cdot\sum_{i=1}^{d}\vec{b}(i)$ times together for checking each component whether to increase or decrease it by $1$. Consider shrinking collection $[\vec{0},\vec{b}]$ to $[\vec{g}^\circ,\vec{h}^\circ]$ by applying lattice-based iterative pruning (Algorithm \ref{a2}) first, we use it to initialize $\vec{x}$ and $\vec{y}$ at the beginning of Algorithm \ref{a1}, and then running the Algorithm \ref{a1}. For each component $i\in[d]$, we check its marginal gain $\vec{g}^\circ(i)+(\vec{b}(i)-\vec{h}^\circ(i))$ times in the iterative pruning, thus totally $\sum_{i=1}^{d}(\vec{g}^\circ(i)+(\vec{b}(i)-\vec{h}^\circ(i)))$ times. Then, we are required to check $2\cdot\sum_{i=1}^{d}(\vec{h}^\circ(i)-\vec{g}^\circ(i))$ times in subsequent double greedy initialized by $[\vec{g}^\circ,\vec{h}^\circ]$. Combine together, we have to check $\sum_{i=1}^{d}(\vec{b}(i)+\vec{h}^\circ(i)-\vec{g}^\circ(i))\leq2\cdot\sum_{i=1}^{d}\vec{b}(i)$ times. Hence, the time complexity is $O(\left\|\vec{b}\right\|_1)$. However, to compute the marginal gain of profit is a time consuming process, and the running time is given by Theorem 3, which is not acceptable in a large-scale social graph as well as a large searching space.

\subsection{Sampling Techniques}
To overcome the \#P-hardness of computing the objective $f(\cdot)$, we borrow from the idea of reverse influence sampling (RIS) \cite{borgs2014maximizing}. In the beginning, consider traditional IM problem, we need to introduce the concept of reverse reachable set (RR-set) first. Given a triggering model $\Omega=(G,\mathcal{D})$, a random RR-set can be generated by selecting a node $u\in V$ uniformly and sampling a graph realization $g$ from $\Omega$, then collecting those nodes can reach $u$ in $g$. RR-sets rooted
at $u$ is the collected nodes that are likely to influence $u$. A larger expected influence spread a seed set $S$ has, the higher the probability that $S$ intersects with a random RR-set is. Given a seed set $S$ and
a random RR-set $R$, we have $\sigma(S)=n\cdot\Pr[R\cap S\neq\emptyset]$. 

Extended to the lattice domain, given a marketing vector $\vec{x}$, its expected influence spread under the triggering model $\Omega$ can be denoted by $\mu(\vec{x})=n\cdot\mathbb{E}_{R}[1-\prod_{u\in R}(1-h_u(\vec{x}))]$ \cite{chen2018scalable}. Let
$\mathcal{R}=\{R_1,R_2,\cdots,R_\theta\}$ be a collection of random RR-sets generated independently, we have
\begin{equation}
	\hat{f}(\mathcal{R},\vec{x})=\frac{n}{\theta}\cdot\sum_{R\in\mathcal{R}}\left(1-\prod_{u\in R}(1-h_u(\vec{x}))\right)-c(\vec{x})
\end{equation}
that is an unbiased estimator of $f(\vec{x})$. From here, the vector $\vec{x}$ that maximizes $\hat{f}(\mathcal{R},\vec{x})$ will be close to the optimal solution intuitively, and more and more close with the increase of $|\mathcal{R}|$. Similar to $f(\vec{x})$, fix the collection $\mathcal{R}$, the $\hat{f}(\mathcal{R},\vec{x})$ is dr-submodular, but not monotone as well. By Theorem 5, Algorithm \ref{a1} offers a $(1/2)$-approximation if $\hat{f}(\mathcal{R},\vec{g}^\circ)+\hat{f}(\mathcal{R},\vec{h}^\circ)\geq 0$ is satisfied after the process of pruning.

Now, we begin to design our algorithm based on the idea of reverse sampling. First, we need to sample the enough number of random RR-sets so that its estimation to the objective function is accurate. Let $\varepsilon_1$, $\varepsilon_2$, and $\varepsilon_3$ be three adjustable parameters, where they satisfy
\begin{equation}
	\varepsilon_2+(1/2)\cdot\varepsilon_3=\varepsilon
\end{equation}
where $\varepsilon_1,\varepsilon_2,\varepsilon_3>0$. Then, we can set that
\begin{flalign}
&\theta_1=\sqrt{\frac{n^2\cdot\ln(3\delta\cdot\prod_{i=1}^{d}(\vec{b}(i)+1))}{2\varepsilon_1^2}}\\
&\theta_2=\frac{n(2n+\varepsilon_2^2\cdot\underline{OPT})\cdot\ln(3\delta\cdot\prod_{i=1}^{d}(\vec{b}(i)+1))}{\varepsilon_2^2\cdot\underline{OPT}^2}\\
&\theta_3=\frac{2n^2\cdot\ln(3\delta)}{\varepsilon_3^2\cdot\underline{OPT}^2}
\end{flalign}
where the $\underline{OPT}$ is the lower bound of the optimal objective $f(\vec{x}^*)$. The algorithm that combining double greedy with reverse sampling and iterative pruning, called DG-IP-RIS algorithm, is shown in Algorithm \ref{a3}.

\begin{algorithm}[!t]
	\caption{\text{DG-IP-RIS}}\label{a3}
	\begin{algorithmic}[1]
		\renewcommand{\algorithmicrequire}{\textbf{Input:}}
		\renewcommand{\algorithmicensure}{\textbf{Output:}}
		\REQUIRE $\hat{f}:\mathcal{R}\times\mathbb{Z}^d_+\rightarrow\mathbb{R}$, $\vec{b}\in\mathbb{Z}^d_+$, $(\varepsilon_1,\varepsilon_2,\varepsilon_3)$, $\delta$
		\ENSURE $\hat{\vec{x}}\in\mathbb{Z}^d_+$
		\STATE Initialize: $\theta_1$ define on (16)
		\STATE $\underline{OPT}\leftarrow$ OptEstimation$(\hat{f},\vec{b},\theta_1)$
		\STATE Initialize: $\theta_2,\theta_3$ defined on (17) (18)
		\STATE $\theta\leftarrow\max(\theta_1,\theta_2,\theta_3)$
		\STATE Generate a collection of random RR-sets $\mathcal{R}$ with $|\mathcal{R}|=\theta$
		\STATE $[\hat{\vec{g}}^\circ,\hat{\vec{h}}^\circ]\leftarrow$ Lattice-basedPruning$(\hat{f}(\mathcal{R},\cdot),\vec{b})$
		\STATE $\hat{\vec{x}}\leftarrow$ Lattice-basedDoubleGreedy$(\hat{f}(\mathcal{R},\cdot),[\hat{\vec{g}}^\circ,\hat{\vec{h}}^\circ])$
		\RETURN $\hat{\vec{x}}$
	\end{algorithmic}
\end{algorithm}

In DG-IP-RIS algorithm, we estimate the number of random RR-sets $\theta$ in line 4, and then generate a collection $\mathcal{R}$ of random RR-sets with the size of $\theta$. The objective $\hat{f}(\mathcal{R},\cdot)$ is computed based on this $\mathcal{R}$, from which we are able to get a solution by iterative pruning and double greedy algorithm. In the first step, we require to compute a lower bound of optimal value $f(\vec{x}^*)$, which is shown in Algorithm \ref{a4}. Here, we increase the component by $1$ with the largest marginal gain at each iteration until there is no component having positive marginal gain. After the while loop, we can obtain a vector $\vec{x}$ and set $\underline{OPT}\leftarrow\hat{f}(\mathcal{R},\vec{x})-2\varepsilon_1$, because $\Pr[|\hat{f}(\mathcal{R},\vec{x})-f(\vec{x})|\leq\varepsilon_1]$ is satisfied with a high probability under the setting of $\theta_1$. Like this, we have $\underline{OPT}>0$ as well because the largest marginal gain should be greater than $0$ due to the dr-submodularity, or else the definition of our problem is not valid and meaningless. For convenience, given a random RR-set $R$, we denote $p(R,\vec{x})=1-\prod_{u\in R}(1-h_u(\vec{x}))$ in subsequent proof.

\begin{algorithm}[!t]
	\caption{\text{OptEstimation}}\label{a4}
	\begin{algorithmic}[1]
		\renewcommand{\algorithmicrequire}{\textbf{Input:}}
		\renewcommand{\algorithmicensure}{\textbf{Output:}}
		\REQUIRE $\hat{f}:\mathcal{R}\times\mathbb{Z}^d_+\rightarrow\mathbb{R}$, $\vec{b}\in\mathbb{Z}^d_+$, $\theta_1$
		\ENSURE $\underline{OPT}$
		\STATE Initialize: $\vec{x}\leftarrow0$, $t\leftarrow0$
		\STATE Generate a collection of random RR-sets $\mathcal{R}$ with $|\mathcal{R}|=\theta_1$
		\WHILE {$t<\sum_{i=0}^{d}\vec{b}(i)$}
		\STATE $i^*\leftarrow\arg\max_{i\in[d],\vec{x}(i)<\vec{b}(i)}\hat{f}(\vec{e}_i|\mathcal{R},\vec{x})$
		\IF {$\hat{f}(\vec{e}_{i^*}|\mathcal{R},\vec{x})\leq 0$}
		\STATE Break
		\ENDIF
		\STATE $\vec{x}\leftarrow\vec{x}+\vec{e}_{i^*}$, $t\leftarrow t+1$
		\ENDWHILE
		\STATE $\underline{OPT}\leftarrow\hat{f}(\mathcal{R},\vec{x})-2\varepsilon_1$
		\RETURN $\underline{OPT}$
	\end{algorithmic}
\end{algorithm}

\begin{lem}
	The $\underline{OPT}$ returned by Algorithm \ref{a4} satisfies $f(\vec{x}^*)\geq\underline{OPT}$ with at least $1-1/(3\delta)$ probability.
\end{lem}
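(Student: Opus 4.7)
The plan is to combine a concentration bound for the RIS estimator with a union bound over all candidate marketing vectors, and then use the optimality of $\vec{x}^*$ to conclude. First I would fix any marketing vector $\vec{x}\preceq\vec{b}$ and view $\hat{f}(\mathcal{R},\vec{x})+c(\vec{x})=\frac{n}{\theta_1}\sum_{R\in\mathcal{R}}p(R,\vec{x})$ as the empirical mean of $\theta_1$ i.i.d.\ random variables $n\cdot p(R,\vec{x})\in[0,n]$ whose expectation equals $\mu(\vec{x})$. Since $c(\vec{x})$ is deterministic, Hoeffding's inequality yields
\begin{equation*}
\Pr\bigl[\,|\hat{f}(\mathcal{R},\vec{x})-f(\vec{x})|\geq\varepsilon_1\,\bigr]\leq 2\exp\!\bigl(-2\theta_1\varepsilon_1^2/n^2\bigr).
\end{equation*}

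Next I would take a union bound over the entire lattice domain $[\vec{0},\vec{b}]$, whose cardinality is exactly $\prod_{i=1}^{d}(\vec{b}(i)+1)$. Plugging in the value of $\theta_1$ given by equation~(16), the right-hand side of the concentration bound is chosen precisely so that the union-bound failure probability is at most $1/(3\delta)$. Consequently, with probability at least $1-1/(3\delta)$ the event
\begin{equation*}
\mathcal{E}:\ \ |\hat{f}(\mathcal{R},\vec{x})-f(\vec{x})|\leq\varepsilon_1\quad\text{for every } \vec{x}\preceq\vec{b}
\end{equation*}
holds simultaneously. I would then condition on $\mathcal{E}$ and note that the vector $\vec{x}$ produced by the greedy loop of Algorithm~\ref{a4} is itself in $[\vec{0},\vec{b}]$, so $f(\vec{x})\geq\hat{f}(\mathcal{R},\vec{x})-\varepsilon_1\geq\hat{f}(\mathcal{R},\vec{x})-2\varepsilon_1=\underline{OPT}$. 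Combining this with the trivial inequality $f(\vec{x}^*)\geq f(\vec{x})$ that follows from the optimality of $\vec{x}^*$ gives $f(\vec{x}^*)\geq\underline{OPT}$.

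The main obstacle I anticipate is simply the calibration step: one has to verify that the specific expression for $\theta_1$ in equation~(16) is large enough to drive $2\prod_{i=1}^{d}(\vec{b}(i)+1)\cdot\exp(-2\theta_1\varepsilon_1^2/n^2)$ below $1/(3\delta)$. This is a routine but slightly finicky calculation of logarithms; once it checks out, the rest of the argument is a clean concentration-plus-union-bound-plus-optimality chain. A minor subtlety worth highlighting is that the concentration event $\mathcal{E}$ is declared uniformly over the whole lattice \emph{before} the greedy trajectory is sampled, which is what allows the data-dependent output $\vec{x}$ to inherit the Hoeffding bound despite being a random variable.
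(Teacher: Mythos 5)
Your proof follows essentially the same route as the paper's: Hoeffding concentration of the RIS estimator at each fixed lattice point, a union bound over all $\prod_{i=1}^{d}(\vec{b}(i)+1)$ candidate vectors, and then the chain $f(\vec{x}^*)\geq f(\vec{x})\geq\hat{f}(\mathcal{R},\vec{x})-\varepsilon_1\geq\underline{OPT}$ for the greedy output $\vec{x}$ (a step the paper leaves implicit). One caveat on the calibration you defer: your Hoeffding exponent is (correctly) linear in $\theta_1$, whereas the paper's proof writes $\exp(-2\theta_1^2\varepsilon_1^2/n^2)$ and defines $\theta_1$ in (16) with a square root to match that quadratic exponent; with the standard linear exponent the $\theta_1$ of (16) is too small, so your final verification only succeeds if (16) is read without the square root (i.e., $\theta_1=n^2\ln(3\delta\prod_{i=1}^{d}(\vec{b}(i)+1))/(2\varepsilon_1^2)$, up to the factor of $2$ from the two-sided bound).
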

\begin{proof}
	For any marketing vector $\vec{x}$, we want to obtain $\Pr[|\hat{f}(\mathcal{R},\vec{x})-f(\vec{x})|\geq\varepsilon_1]\leq1/(3\delta\cdot\prod_{i=1}^{d}\vec{b}(i))$. By the additive form of Chernoff-Hoeffding Inequality, it is equivalent to compute, that is
	\begin{equation*}
	\Pr\left[\left|\frac{1}{\theta_1}\sum p(R_i,\vec{x})-\frac{\mu(\vec{x})}{n}\right|\geq\frac{\varepsilon_1}{n}\right]\leq\exp\left(-\frac{2\theta_1^2\varepsilon_1^2}{n^2}\right)
	\end{equation*}
	When $\theta_1$ is defined as (16), we have $1/(3\delta\cdot\prod_{i=1}^{d}(\vec{b}(i)+1))=\exp(-2\theta_1^2\varepsilon_1^2/n^2)$ definitely. By the union bound, the above relationship holds for the $\vec{x}'$ generated in line 10 of Algorithm \ref{a3} with a probablity less than $1/(3\delta)$.
\end{proof}
\begin{rem}
	Given a marketing vector $\vec{x}\preceq\vec{b}$, for each component $i\in[d]$, the possible values of $\vec{x}(i)$ are $\{0,1,2,\cdots,\vec{b}(i)\}$, thus the number of possible values for $\vec{x}(i)$ is $\vec{b}(i)+1$. Thereby the total number of possible combinations for vector $\vec{x}$ is $\prod_{i=1}^{d}(\vec{b}(i)+1)$, which explains why the union bound in the previous lemma happened.
\end{rem}

\begin{lem}[Chernoff bounds \cite{motwani1995randomized}]
	Given a collection $\mathcal{Z}=\{Z_1,Z_2,\cdots,Z_\theta\}$, each $Z_i\in[0,1]$ is an i.i.d. random variable with $\mathbb{E}[Z_i]=\nu$, we have
	\begin{equation}
	\Pr\left[\sum\nolimits_{i=1}^\theta Z_i\geq(1+\gamma)\cdot\nu\theta\right]\leq\exp\left(-\frac{\gamma^2\cdot\nu\theta}{2+\gamma}\right)
	\end{equation}
	\begin{equation}
	\Pr\left[\sum\nolimits_{i=1}^\theta Z_i\leq(1-\gamma)\cdot\nu\theta\right]\leq\exp\left(-\frac{\gamma^2\cdot\nu\theta}{2}\right)
	\end{equation}
	where we assume that $\gamma>0$.
\end{lem}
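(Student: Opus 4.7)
The plan is to give the standard Cram\'er--Chernoff derivation through the moment generating function (MGF), since these are classical bounds and no structure beyond $Z_i\in[0,1]$ i.i.d.\ with mean $\nu$ is available. For the upper tail I would first apply Markov's inequality to the random variable $e^{t\sum_i Z_i}$ for an as-yet-unspecified parameter $t>0$, which converts a sum-tail probability into a ratio of expectations:
\begin{equation*}
\Pr\!\left[\sum\nolimits_{i=1}^\theta Z_i\geq(1+\gamma)\nu\theta\right]\leq \frac{\mathbb{E}[e^{t\sum_i Z_i}]}{e^{t(1+\gamma)\nu\theta}}.
\end{equation*}
By independence the numerator factors as $\prod_i \mathbb{E}[e^{tZ_i}]$, so the task reduces to bounding the MGF of a single $Z_i$.

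Next I would use the fact that $x\mapsto e^{tx}$ is convex on $[0,1]$ so that $e^{tZ_i}\leq 1-Z_i+Z_i e^{t}$ pointwise; taking expectations gives $\mathbb{E}[e^{tZ_i}]\leq 1+\nu(e^{t}-1)\leq \exp(\nu(e^{t}-1))$, where the final inequality uses $1+x\leq e^{x}$. Multiplying over the $\theta$ independent terms yields
\begin{equation*}
\Pr\!\left[\sum\nolimits_{i=1}^\theta Z_i\geq(1+\gamma)\nu\theta\right]\leq \exp\!\bigl(\nu\theta(e^{t}-1)-t(1+\gamma)\nu\theta\bigr).
\end{equation*}
I would then optimize by choosing $t=\ln(1+\gamma)$, which produces the raw bound $\bigl((e^\gamma/(1+\gamma)^{1+\gamma})\bigr)^{\nu\theta}$. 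The remaining step — the main analytic nuisance — is showing that this quantity is at most $\exp(-\gamma^2\nu\theta/(2+\gamma))$. This reduces to a one-variable inequality $\gamma-(1+\gamma)\ln(1+\gamma)\leq -\gamma^2/(2+\gamma)$ for $\gamma>0$, which I would verify by differentiating both sides or by a short Taylor/Pad\'e argument; this is really the only non-mechanical piece of the proof.

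For the lower tail $\Pr[\sum_i Z_i\leq(1-\gamma)\nu\theta]$ I would run the same MGF argument with $t<0$, i.e., apply Markov's inequality to $e^{-t\sum_i Z_i}$ with $t>0$, obtain the analogous bound $\exp(\nu\theta(e^{-t}-1)+t(1-\gamma)\nu\theta)$, and optimize at $t=-\ln(1-\gamma)$. This yields $\bigl(e^{-\gamma}/(1-\gamma)^{1-\gamma}\bigr)^{\nu\theta}$, after which a second elementary inequality $-\gamma-(1-\gamma)\ln(1-\gamma)\leq -\gamma^2/2$ for $0<\gamma<1$ finishes the job; that inequality is again a standard calculus check using $\ln(1-\gamma)\leq -\gamma-\gamma^2/2$.

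Since this is essentially a textbook derivation, I would likely just quote the result as stated in \cite{motwani1995randomized} rather than reproduce the full computation. The only substantive obstacle in a self-contained presentation is keeping track of the two elementary inequalities that convert the optimized Chernoff expressions into the clean $\exp(-\gamma^2\nu\theta/(2+\gamma))$ and $\exp(-\gamma^2\nu\theta/2)$ forms actually used later in the analysis.
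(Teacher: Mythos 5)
Your derivation is correct and is the standard Cram\'er--Chernoff argument (Markov on $e^{t\sum_i Z_i}$, the convexity bound $e^{tz}\leq 1-z+ze^{t}$ on $[0,1]$, optimization at $t=\ln(1+\gamma)$, and the two elementary inequalities $\gamma-(1+\gamma)\ln(1+\gamma)\leq-\gamma^{2}/(2+\gamma)$ and $-\gamma-(1-\gamma)\ln(1-\gamma)\leq-\gamma^{2}/2$). The paper itself offers no proof --- it simply imports the lemma from \cite{motwani1995randomized} --- and your sketch is exactly the derivation found there, so quoting the reference, as you suggest, matches what the authors do.
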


\begin{lem}
	Given a collection $\mathcal{R}$ with $|\mathcal{R}|=\theta_2$, for any marketing vector $\vec{x}\preceq\vec{b}$, it satisfies $\hat{f}(\mathcal{R},\vec{x})-f(\vec{x})<\varepsilon_2\cdot f(\vec{x}^*)$ with at least $1-1/(3\delta)$ probability.
\end{lem}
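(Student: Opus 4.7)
The plan is to apply the multiplicative Chernoff upper tail in Lemma~8 (Eq.~(19)) to the i.i.d.\ random variables $Z_i := p(R_i, \vec{x}) \in [0,1]$, and then take a union bound over the finite lattice of candidate marketing vectors $\vec{x} \preceq \vec{b}$. First I would fix an arbitrary $\vec{x} \preceq \vec{b}$ and set $\nu := \mu(\vec{x})/n = \mathbb{E}[Z_i]$. Since $\hat{f}(\mathcal{R},\vec{x}) - f(\vec{x}) = (n/\theta_2)\sum_{i=1}^{\theta_2} Z_i - n\nu$, the bad event $\hat{f}(\mathcal{R},\vec{x}) - f(\vec{x}) \geq \varepsilon_2 f(\vec{x}^*)$ coincides, in the non-degenerate case $\nu > 0$, with $\sum_i Z_i \geq (1+\gamma)\nu\theta_2$ for $\gamma := \varepsilon_2 f(\vec{x}^*)/(n\nu) > 0$ (when $\nu=0$ each $Z_i$ is identically zero, $\hat{f}(\mathcal{R},\vec{x})=f(\vec{x})$, and the event cannot occur since $f(\vec{x}^*)>0$). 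The Chernoff upper tail then bounds the probability of this event by $\exp\!\bigl(-\gamma^2 \nu \theta_2 / (2+\gamma)\bigr)$.

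The central step is to verify that the choice (17) of $\theta_2$ forces this per-vector bound to be at most $1/\bigl(3\delta \cdot \prod_{i=1}^{d}(\vec{b}(i)+1)\bigr)$. Substituting the value of $\gamma$ and simplifying, the Chernoff exponent equals $\varepsilon_2^2 f(\vec{x}^*)^2 \theta_2 / \bigl[n\bigl(2n\nu + \varepsilon_2 f(\vec{x}^*)\bigr)\bigr]$. I would then use the crude bound $\nu \leq 1$ to replace $2n\nu$ by $2n$ in the denominator, and invoke the guarantee of Lemma~8 (which is the event being conditioned upon throughout the sampled-RR-set analysis of DG-IP-RIS) that $f(\vec{x}^*) \geq \underline{OPT}$, combined with monotonicity of the map $t \mapsto t^2/(2n + \varepsilon_2 t)$ on $\mathbb{R}_+$, to lower-bound the exponent by the corresponding expression with $f(\vec{x}^*)$ replaced by $\underline{OPT}$. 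By the definition of $\theta_2$ in (17), this lower bound is exactly $\ln\!\bigl(3\delta \prod_{i=1}^d(\vec{b}(i)+1)\bigr)$, as required.

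Finally, by Remark~1 there are $\prod_{i=1}^{d}(\vec{b}(i)+1)$ distinct marketing vectors $\vec{x} \preceq \vec{b}$, so a union bound over all of them converts the per-vector failure probability of $1/\bigl(3\delta\prod_i(\vec{b}(i)+1)\bigr)$ into the claimed cumulative failure probability of at most $1/(3\delta)$. The main obstacle is the algebraic reduction in the middle step: one must align the monotonicity estimate and the loose bound $\nu\le 1$ so that the denominator produced by the Chernoff exponent matches the denominator appearing in $\theta_2$; once that alignment is in place, the result is a direct consequence of Lemma~8 and the union bound over a finite lattice.
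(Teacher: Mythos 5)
Your proposal is correct and follows essentially the same route as the paper's proof: apply the multiplicative Chernoff upper tail (Eq.~(19)) to $p(R_i,\vec{x})$ with $\gamma=\varepsilon_2 f(\vec{x}^*)/\mu(\vec{x})$, reduce the exponent using $\mu(\vec{x})\leq n$ and $f(\vec{x}^*)\geq\underline{OPT}$ so that the choice of $\theta_2$ drives the per-vector failure probability down to $1/\bigl(3\delta\prod_{i=1}^{d}(\vec{b}(i)+1)\bigr)$, and finish with a union bound over the $\prod_{i=1}^{d}(\vec{b}(i)+1)$ candidate vectors. Your treatment is in fact slightly more careful than the paper's (handling the degenerate case $\nu=0$ and making the monotonicity of $t\mapsto t^2/(2n+\varepsilon_2 t)$ explicit); the only cosmetic slip is attributing the Chernoff bound to Lemma~8 rather than Lemma~9.
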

\begin{proof}
	For any marketing vector $\vec{x}$, we want to obtain $\Pr[\hat{f}(\mathcal{R},\vec{x})-f(\vec{x})\geq\varepsilon_2\cdot f(\vec{x}^*)]\leq1/(3\delta\cdot\prod_{i=1}^{d}(\vec{b}(i)+1))$. By the Chernoff bound, defined as (19), it is equivalent to compute, that is
	\begin{flalign}
	&\Pr\left[\sum p(R_i,\vec{x})\geq\left(1+\frac{\varepsilon_2f(\vec{x}^*)}{\mu(\vec{x})}\right)\cdot\frac{\mu(\vec{x})\theta_2}{n}\right]\nonumber\\
	&\leq\exp\left(-\frac{\left(\frac{\varepsilon_2f(\vec{x}^*)}{\mu(\vec{x})}\right)^2\cdot\frac{\mu(\vec{x})\theta_2}{n}}{2+\frac{\varepsilon_2f(\vec{x}^*)}{\mu(\vec{x})}}\right)
	\end{flalign}
	From Lemma 9 and $\mu(\vec{x})\leq n$, we have
	\begin{equation*}
		(21)\leq\exp\left(-\frac{\theta_2\cdot\varepsilon_2^2\cdot\underline{OPT}^2}{n\cdot(2n+\varepsilon_2\cdot\underline{OPT})}\right)\leq\frac{1}{3\delta\cdot\prod_{i=1}^{d}(\vec{b}(i)+1)}
	\end{equation*}
	By the union bound, the above relationship holds for any $\vec{x}\preceq\vec{b}$ with at most $1/(3\delta)$ probability.
\end{proof}

\begin{lem}
	Given a collection $\mathcal{R}$ with $|\mathcal{R}|=\theta_3$, for an optimal solution $\vec{x}^*$, it satisfies $\hat{f}(\mathcal{R},\vec{x}^*)-f(\vec{x}^*)>-\varepsilon_3\cdot f(\vec{x}^*)$ with at least $1-1/(3\delta)$ probability.
\end{lem}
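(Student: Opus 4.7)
The plan is to mirror the structure of Lemma 10, but with two simplifications: (i) the event concerns only the single vector $\vec{x}^*$, so no union bound over all of $[\vec{0},\vec{b}]$ is needed, which is exactly why $\theta_3$ in (18) lacks the $\prod_{i=1}^{d}(\vec{b}(i)+1)$ factor present in $\theta_2$; and (ii) we want a lower-tail (underestimation) bound, so we invoke (20) rather than (19).

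First, I would view $\hat{f}(\mathcal{R},\vec{x}^*)$ as the average of i.i.d.\ bounded contributions. Concretely, for each $R_i\in\mathcal{R}$ let $Z_i=p(R_i,\vec{x}^*)\in[0,1]$; then $\{Z_i\}_{i=1}^{\theta_3}$ are i.i.d.\ with $\mathbb{E}[Z_i]=\mu(\vec{x}^*)/n$, and by (14) we have $\hat{f}(\mathcal{R},\vec{x}^*)-f(\vec{x}^*)=\tfrac{n}{\theta_3}\sum_{i=1}^{\theta_3}Z_i-\mu(\vec{x}^*)$. Thus the desired event $\hat{f}(\mathcal{R},\vec{x}^*)-f(\vec{x}^*)\leq -\varepsilon_3 f(\vec{x}^*)$ rewrites as $\sum_i Z_i\leq (1-\gamma)\cdot \tfrac{\mu(\vec{x}^*)\theta_3}{n}$ with the choice $\gamma=\varepsilon_3 f(\vec{x}^*)/\mu(\vec{x}^*)$.

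Second, I would apply the lower-tail Chernoff bound (20) from Lemma 9, obtaining a failure probability at most $\exp\!\bigl(-\gamma^2\cdot\mu(\vec{x}^*)\theta_3/(2n)\bigr)=\exp\!\bigl(-\varepsilon_3^2 f(\vec{x}^*)^2 \theta_3/(2n\,\mu(\vec{x}^*))\bigr)$. To simplify this, I would bound $\mu(\vec{x}^*)\leq n$ in the denominator and, using Lemma 8 which gives $f(\vec{x}^*)\geq\underline{OPT}>0$ with probability at least $1-1/(3\delta)$, replace $f(\vec{x}^*)^2$ by $\underline{OPT}^2$ in the exponent. This yields a bound of $\exp\!\bigl(-\varepsilon_3^2\,\underline{OPT}^2\,\theta_3/(2n^2)\bigr)$, which upon substituting $\theta_3$ from (18) is exactly $1/(3\delta)$.

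The only subtlety — and what I would flag as the main obstacle — is being careful about the sign of $\gamma$: one needs $f(\vec{x}^*)>0$ to ensure $\gamma>0$ so that (20) is applicable, and this is guaranteed by the construction of $\underline{OPT}$ in Algorithm \ref{a4} (the first greedy addition has strictly positive marginal gain, since otherwise the CPM-MS instance is trivial). Apart from this monotonicity issue, the argument is a direct one-vector Chernoff calculation, which is precisely why $\theta_3$ is a factor of $\ln\prod_i(\vec{b}(i)+1)$ smaller than $\theta_2$.
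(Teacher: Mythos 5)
Your proposal is correct and follows essentially the same route as the paper's proof: rewrite the deviation event in terms of $\sum_i p(R_i,\vec{x}^*)$ (the cost term cancels), apply the lower-tail Chernoff bound (20) with $\gamma=\varepsilon_3 f(\vec{x}^*)/\mu(\vec{x}^*)$, and then use $\mu(\vec{x}^*)\leq n$ together with $f(\vec{x}^*)\geq\underline{OPT}$ to reduce the exponent to $\varepsilon_3^2\,\underline{OPT}^2\,\theta_3/(2n^2)=\ln(3\delta)$. Your added remarks (no union bound needed for a single vector, and the need for $f(\vec{x}^*)>0$ so that $\gamma>0$) are accurate clarifications of points the paper leaves implicit.
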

\begin{proof}
	For an optimal solution $\vec{x}^*$, we want to obtain $\Pr[\hat{f}(\mathcal{R},\vec{x}^*)-f(\vec{x}^*)\leq-\varepsilon_3\cdot f(\vec{x}^*)]\leq1/(3\delta)$. By the Chernoff bound, defined as (20), it is equivalent to compute, that is
	\begin{flalign}
		&\Pr\left[\sum p(R_i,\vec{x}^*)\leq\left(1-\frac{\varepsilon_3f(\vec{x}^*)}{\mu(\vec{x}^*)}\right)\cdot\frac{\mu(\vec{x}^*)\theta_2}{n}\right]\nonumber\\
		&\leq\exp\left(-\frac{\left(\frac{\varepsilon_3f(\vec{x}^*)}{\mu(\vec{x}^*)}\right)^2\cdot\frac{\mu(\vec{x}^*)\theta_3}{n}}{2}\right)
	\end{flalign}
	From Lemma 9 and $\mu(\vec{x})\leq n$, we have
	\begin{equation*}
		(22)\leq\exp\left(-\frac{\theta_3\cdot\varepsilon_3^2\cdot\underline{OPT}^2}{2n^2}\right)\leq\frac{1}{3\delta}
	\end{equation*}
	The above relationship holds for the optimal solution $\vec{x}^*$ with at most $\leq1/(3\delta)$ probability.
\end{proof}

Let $\hat{\vec{x}}^\circ$ be the result returned by Algorithm \ref{a3}. If $\hat{f}(\mathcal{R},\hat{\vec{x}}^\circ)$ and $\hat{f}(\mathcal{R},\vec{x}^*)$ are accurate estimations to the $f(\hat{\vec{x}}^\circ)$ and $f(\vec{x}^*)$, we can say this solution $\hat{\vec{x}}$ has an effective approximation guarantee, which is shown in Theorem 6.

\begin{thm}
	For our CPM-MS problem, if it satisfies $\hat{f}(\mathcal{R},\hat{\vec{g}}^\circ)+\hat{f}(\mathcal{R},\hat{\vec{h}}^\circ)\geq 0$, for any $\varepsilon\in(0,1/2)$ and $\delta>0$, the marketing vector $\hat{\vec{x}}^\circ$ returned by Algorithm \ref{a3} is a $(1/2-\varepsilon)$-approximation, that is
	\begin{equation}
		f(\hat{\vec{x}}^\circ)\geq(1/2-\varepsilon)\cdot f(\vec{x}^*)
	\end{equation}
	holds with at least $1-1/\delta$ probability.
\end{thm}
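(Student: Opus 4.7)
The plan is to combine the deterministic approximation guarantee of Theorem 5, applied to the empirical objective $\hat{f}(\mathcal{R},\cdot)$, with the three concentration bounds established in Lemmas 7, 10, and 11, and then discharge everything through a single union bound. Since each of those three lemmas fails with probability at most $1/(3\delta)$, the event $\mathcal{E}$ that all three inequalities hold simultaneously has probability at least $1-1/\delta$. The whole argument will condition on $\mathcal{E}$, so the final bound will be deterministic once we are inside this high-probability event.

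Conditional on $\mathcal{E}$, I would proceed in three steps. First, since $\hat{f}(\mathcal{R},\cdot)$ is dr-submodular in its second argument and, by hypothesis, $\hat{f}(\mathcal{R},\hat{\vec{g}}^\circ)+\hat{f}(\mathcal{R},\hat{\vec{h}}^\circ)\geq 0$, Theorem 5 applied to $\hat{f}(\mathcal{R},\cdot)$ with the pruned box $[\hat{\vec{g}}^\circ,\hat{\vec{h}}^\circ]$ yields
\begin{equation*}
\hat{f}(\mathcal{R},\hat{\vec{x}}^\circ)\ \geq\ \tfrac{1}{2}\cdot\max_{\vec{x}\preceq\vec{b}}\hat{f}(\mathcal{R},\vec{x})\ \geq\ \tfrac{1}{2}\cdot\hat{f}(\mathcal{R},\vec{x}^*),
\end{equation*}
where the second inequality holds because $\vec{x}^*$ is a particular feasible point and Lemma 4 (applied to $\hat{f}$) guarantees the maximum of $\hat{f}(\mathcal{R},\cdot)$ over $[\vec{0},\vec{b}]$ survives pruning. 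Second, Lemma 11 gives $\hat{f}(\mathcal{R},\vec{x}^*)\geq(1-\varepsilon_3)\cdot f(\vec{x}^*)$, so $\hat{f}(\mathcal{R},\hat{\vec{x}}^\circ)\geq\tfrac{1}{2}(1-\varepsilon_3)\cdot f(\vec{x}^*)$. Third, Lemma 10 applied to the specific vector $\hat{\vec{x}}^\circ$ yields $f(\hat{\vec{x}}^\circ)\geq\hat{f}(\mathcal{R},\hat{\vec{x}}^\circ)-\varepsilon_2\cdot f(\vec{x}^*)$. Chaining these gives
\begin{equation*}
f(\hat{\vec{x}}^\circ)\ \geq\ \left(\tfrac{1}{2}-\tfrac{\varepsilon_3}{2}-\varepsilon_2\right)\cdot f(\vec{x}^*)\ =\ \left(\tfrac{1}{2}-\varepsilon\right)\cdot f(\vec{x}^*),
\end{equation*}
where the equality uses the choice $\varepsilon_2+\varepsilon_3/2=\varepsilon$ from (15).

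The subtlest point, and the one I would be most careful about, is that the pruning and the double greedy inner step are run on the empirical objective $\hat{f}(\mathcal{R},\cdot)$ rather than on the true objective $f$, whereas the target bound is in terms of $f(\vec{x}^*)$. The trick is that Theorem 5 only needs dr-submodularity and the non-negativity condition for the function actually being optimized, and I never need $\vec{x}^*$ itself to lie in the pruned lattice $[\hat{\vec{g}}^\circ,\hat{\vec{h}}^\circ]$; it is enough that $\vec{x}^*\preceq\vec{b}$, because Theorem 5 compares $\hat{f}(\mathcal{R},\hat{\vec{x}}^\circ)$ to the maximum of $\hat{f}(\mathcal{R},\cdot)$ over the full box, which dominates $\hat{f}(\mathcal{R},\vec{x}^*)$. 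A secondary point is that Lemma 10 is a uniform statement over all $\vec{x}\preceq\vec{b}$ (by the union bound over $\prod_i(\vec{b}(i)+1)$ points), which is what licenses its use on the random vector $\hat{\vec{x}}^\circ$. Finally, Lemma 7 is used implicitly through $\underline{OPT}\leq f(\vec{x}^*)$, which is what makes the sample-complexity choices $\theta_2,\theta_3$ in (17)--(18) actually enforce the desired concentration relative to $f(\vec{x}^*)$; without this, the multiplicative error bounds would not translate cleanly into the final inequality.
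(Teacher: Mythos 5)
Your proposal is correct and follows essentially the same chain as the paper's own proof: apply Theorem 5 to the empirical objective $\hat{f}(\mathcal{R},\cdot)$ under the stated non-negativity hypothesis, translate between $\hat{f}$ and $f$ via Lemmas 10 and 11, use $\varepsilon_2+\varepsilon_3/2=\varepsilon$, and close with a union bound over the three $1/(3\delta)$ failure events. Your only slip is a label: the guarantee $f(\vec{x}^*)\geq\underline{OPT}$ is Lemma 8, not Lemma 7; otherwise your added care about why $\max_{\vec{x}\preceq\vec{b}}\hat{f}(\mathcal{R},\vec{x})\geq\hat{f}(\mathcal{R},\vec{x}^*)$ suffices, and why Lemma 10's uniformity licenses its use on the random output $\hat{\vec{x}}^\circ$, only tightens what the paper leaves implicit.
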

\begin{proof}
	Based on Lemma 10, $\hat{f}(\mathcal{R},\hat{\vec{x}}^\circ)-f(\hat{\vec{x}}^\circ)<\varepsilon_2\cdot f(\vec{x}^*)$ holds with at least $1-1/(3\delta)$ probability, and on Theorem 5, we have $\hat{f}(\mathcal{R},\hat{\vec{x}}^\circ)\geq(1/2)\cdot\hat{f}(\mathcal{R},\vec{x}^*)$. Thus, $f(\hat{\vec{x}}^\circ)\geq\hat{f}(\mathcal{R},\hat{\vec{x}}^\circ)-\varepsilon_2\cdot f(\vec{x}^*)\geq(1/2)\cdot\hat{f}(\mathcal{R},\vec{x}^*)-\varepsilon_2\cdot f(\vec{x}^*)$. By Lemma 11, $\hat{f}(\mathcal{R},\vec{x}^*)-f(\vec{x}^*)>-\varepsilon_3\cdot f(\vec{x}^*)$ holds with at least $1-1/(3\delta)$ probability, thus we have $f(\hat{\vec{x}}^\circ)\geq(1/2-(\varepsilon_2+1/2\cdot\varepsilon_3))\cdot f(\vec{x}^*)=(1/2-\varepsilon)\cdot f(\vec{x}^*)$. Combined with that $f(\vec{x}^*)\geq\underline{OPT}$ holds with $1-1/(3\delta)$, by the union bound, (23) holds with at least $1-1/\delta$ probability.
\end{proof}
Finally, we consider the running time of Algorithm \ref{a3}. Given a collection $R$ with $|R|=\theta=\max\{\theta_1,\theta_2,\theta_3\}$, we have $\theta=O(n^2)$. To compute $\hat{f}(R,\cdot)$, it takes $O(\theta n)$ time, and to generate a random RR-set, it takes $O(m)$ times. Thus, the time complexity of Algorithm \ref{a3} is $O(m\theta+\left\|\vec{b}\right\|_1\cdot n\theta)=O((m+n)n^2)$. Besides, this running time can be reduced further. Look at the forms of (16) (17) and (18), $\theta_1$ is apparently less than $\theta_2$ and $\theta_3$. Therefore, we are able to select the remaining two parameters $(\varepsilon_2,\varepsilon_3)$ such that $(\varepsilon_2,\varepsilon_3)=\arg\min_{\varepsilon_2+(1/2)\cdot\varepsilon_3=\varepsilon}\max\{\theta_2,\theta_3\}$.

\section{Experiments}
In this section, we carry out several experiments on different datasets to validate the performance of our proposed algorithms. It aims to test the efficiency of DG-IP-RIS algorithm (Algorithm \ref{a3}) and its effectiveness compared to other heuristic algorithms. All of our experiments are programmed by python, and run on Windows machine with a 3.40GHz, 4 core Intel CPU and 16GB RAM. There are four datasets used in our experiments: (1) NetScience \cite{nr}: a co-authorship network, co-authorship among scientists to publish papers about network science; (2) Wiki \cite{nr}: a who-votes-on-whom network, which come from the collection Wikipedia voting; (3) HetHEPT \cite{snapnets}: an academic collaboration relationship on high energy physics area; (4) Epinions \cite{snapnets}: a who-trust-whom online social network on Epinions.com, a general consumer review site. The statistics information of these four datasets is represented in Table \ref{table1}. For undirected graph, each undirected edge is replaced with two reversed directed edges.

\begin{table}[h]
	\renewcommand{\arraystretch}{1.3}
	\caption{The datasets statistics $(K=10^3)$}
	\label{table1}
	\centering
	\begin{tabular}{|c|c|c|c|c|}
		\hline
		\bfseries Dataset & \bfseries n & \bfseries m & \bfseries Type & \bfseries Avg.Degree\\
		\hline
		NetScience & 0.4K & 1.01K & undirected & 5.00\\
		\hline
		Wiki & 1.0K & 3.15K & directed & 6.20\\
		\hline
		HetHEPT & 12.0K & 118.5K & undirected & 19.8\\
		\hline
		Epinions & 75.9K & 508.8K & directed & 13.4\\
		\hline
	\end{tabular}
\end{table}

\begin{figure}[!t]
	\centering
	\subfigure[NetScience, Performance]{
		\includegraphics[width=0.48\columnwidth]{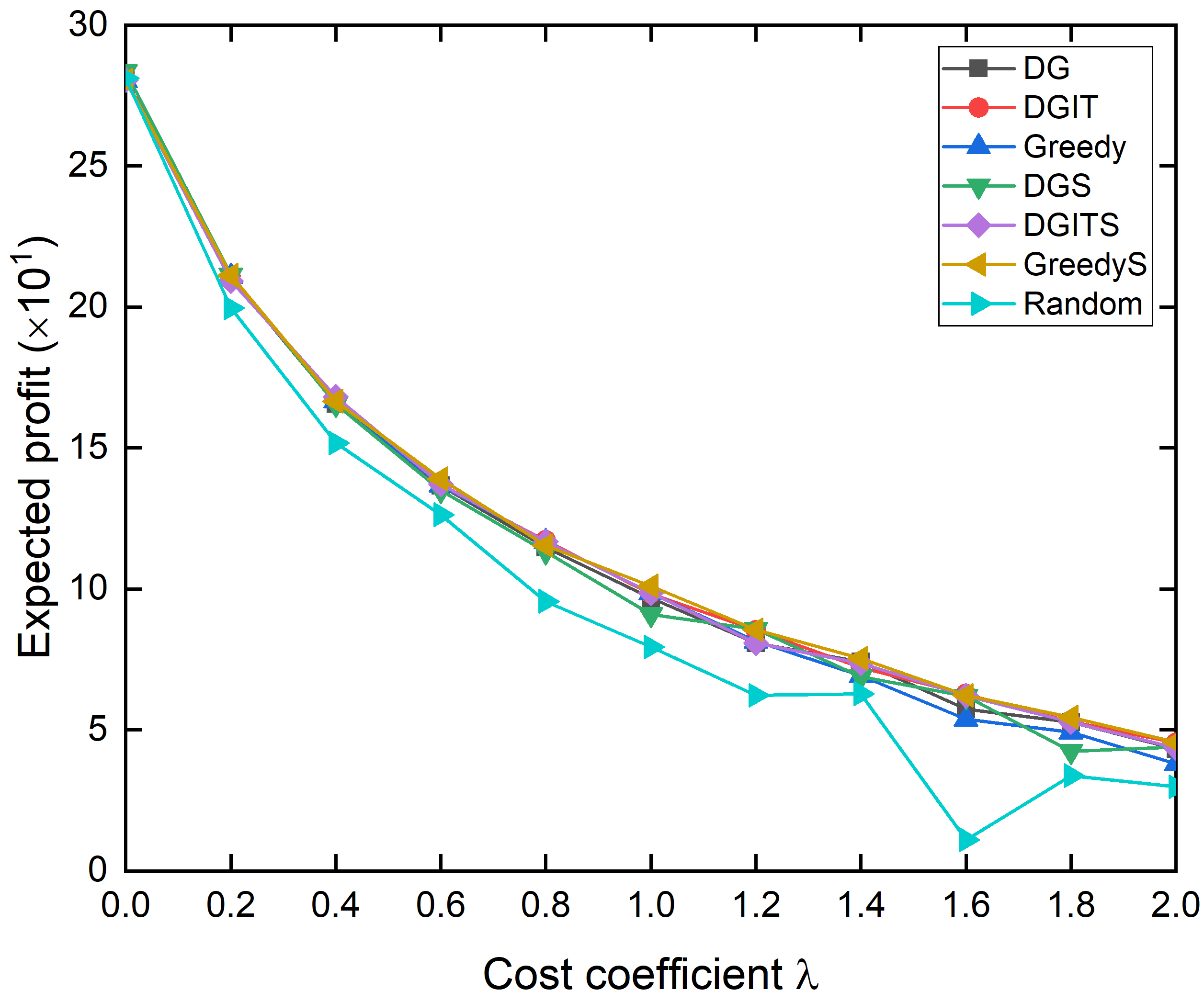}
		%\caption{fig1}
	}%
	\subfigure[NetScience, Time (s)]{
		\centering
		\includegraphics[width=0.48\columnwidth]{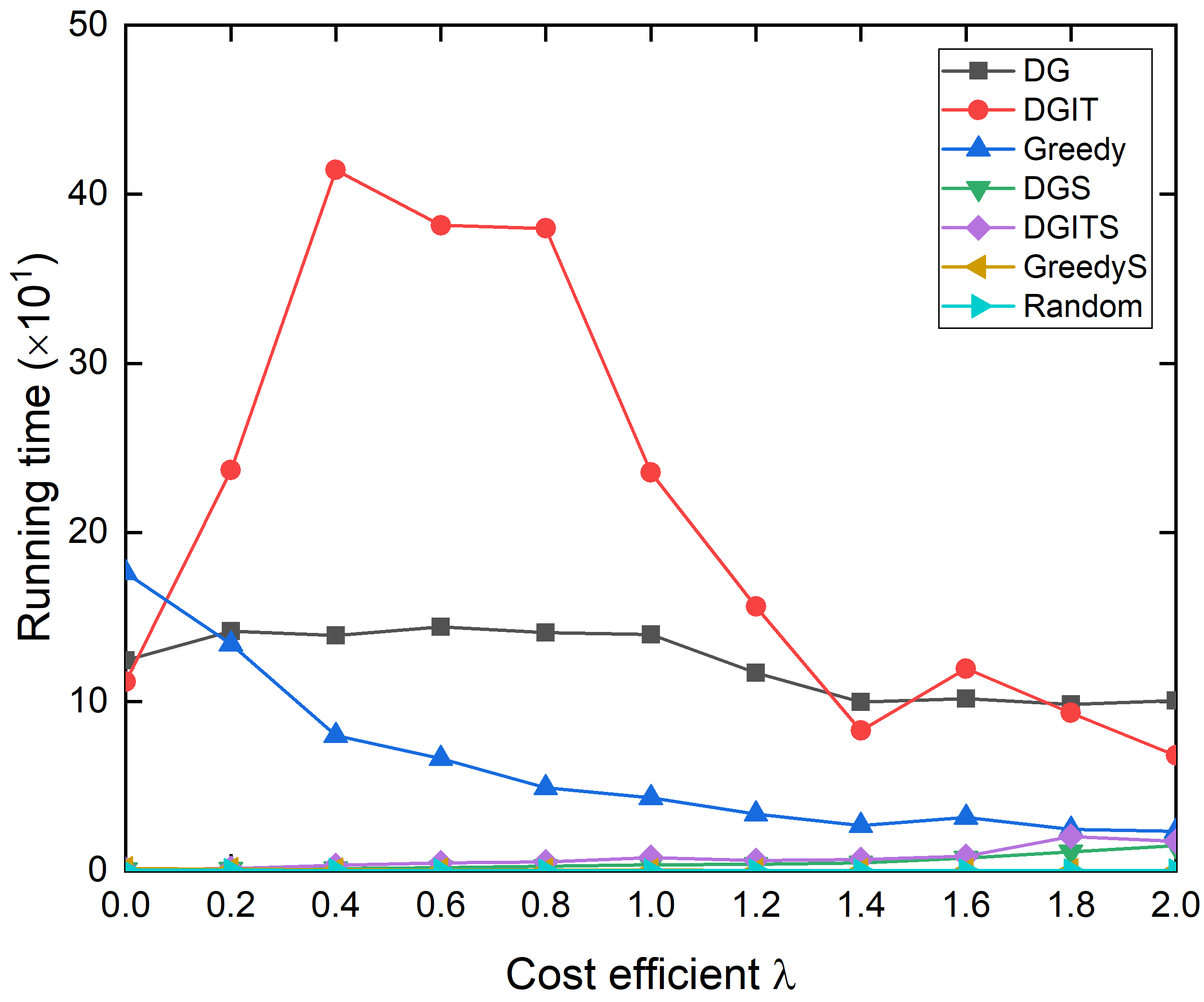}
		%\caption{fig2}
	}%
	
	\subfigure[Wiki, Performance]{
		\centering
		\includegraphics[width=0.48\columnwidth]{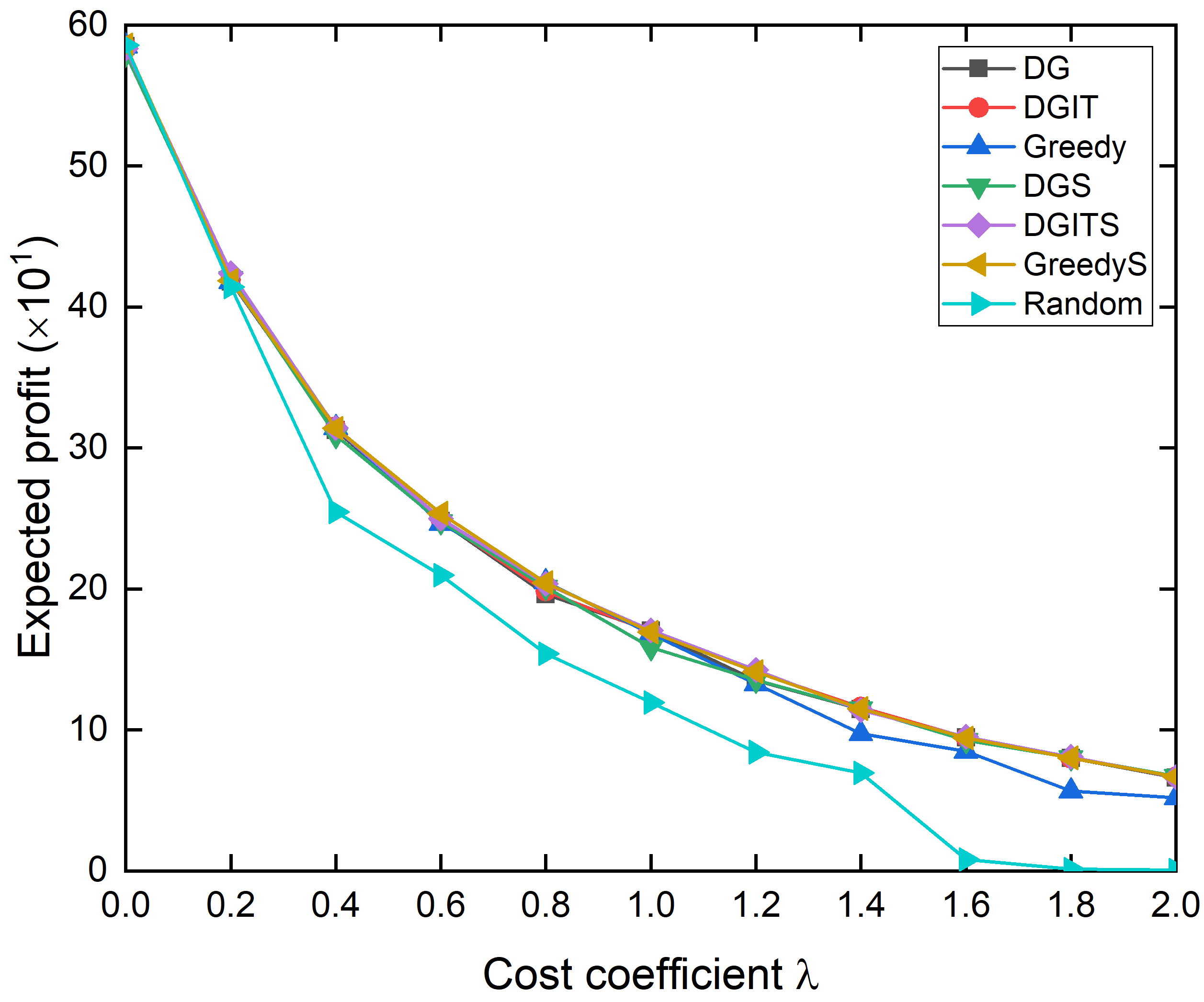}
		%\caption{fig2}
	}%
	\subfigure[Wiki, Time (s)]{
		\centering
		\includegraphics[width=0.48\columnwidth]{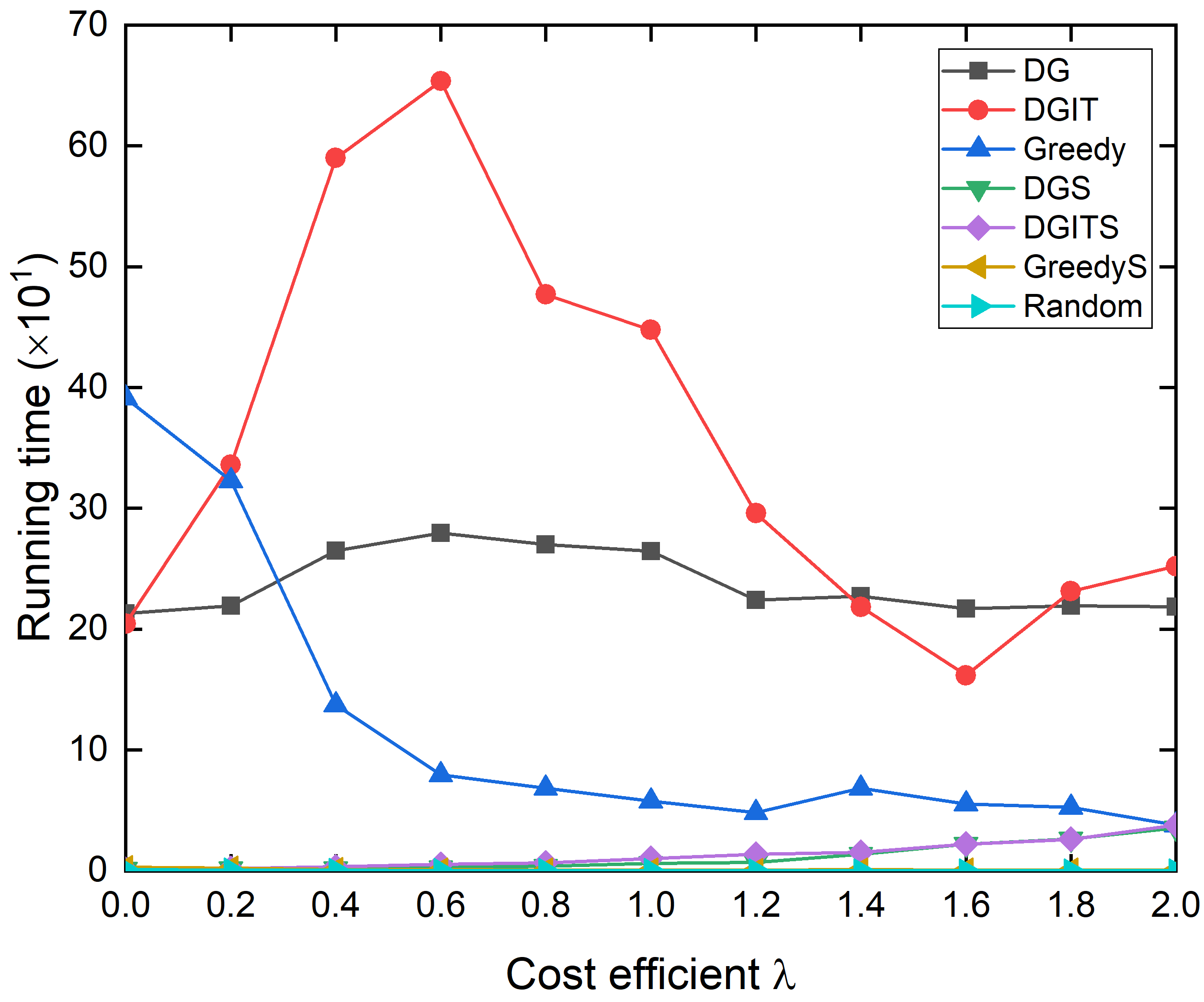}
		%\caption{fig2}
	}%
	
	\subfigure[HetHEPT, Performance]{
		\centering
		\includegraphics[width=0.48\columnwidth]{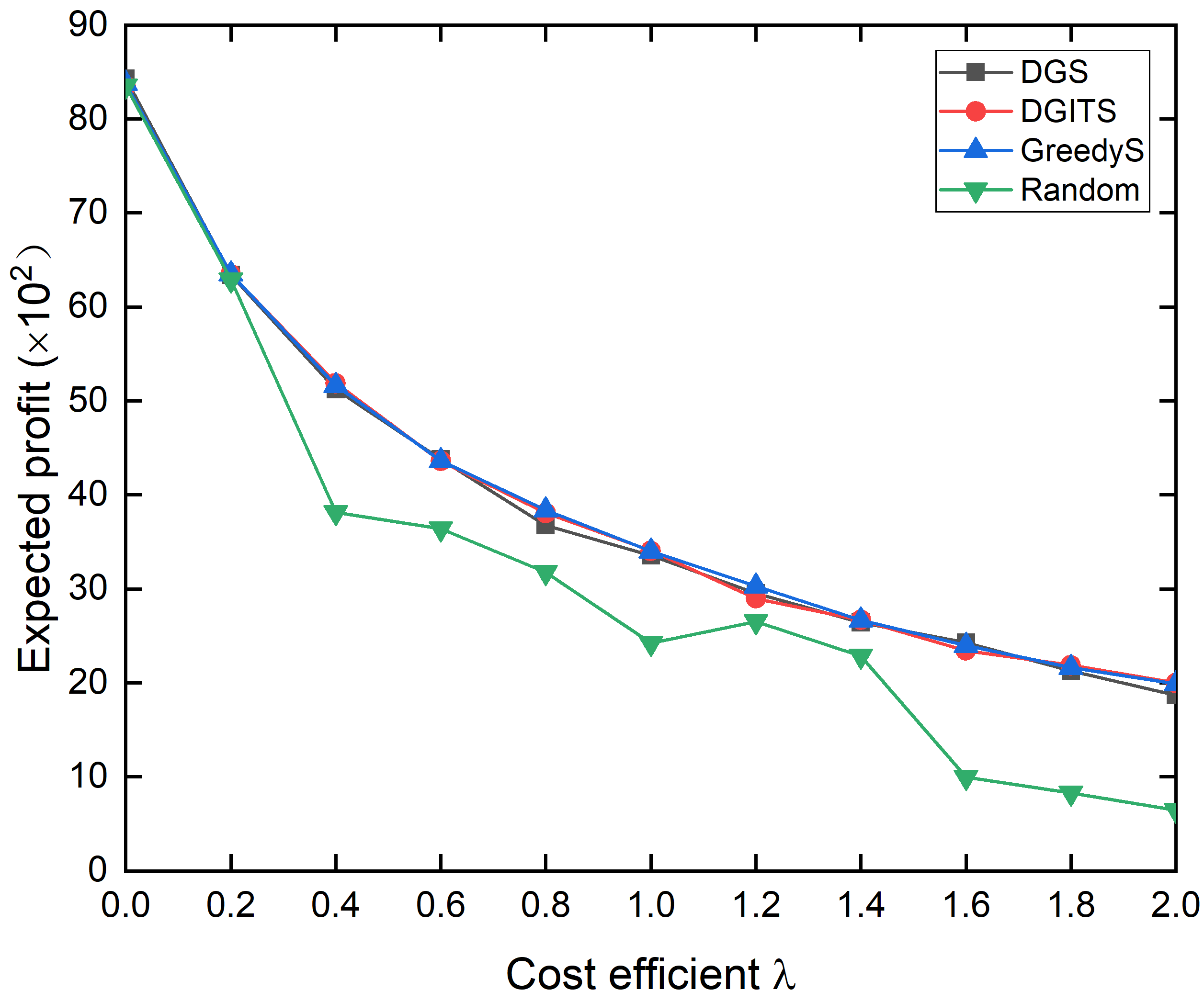}
		%\caption{fig2}
	}%
	\subfigure[HetHEPT, Time (s)]{
		\centering
		\includegraphics[width=0.48\columnwidth]{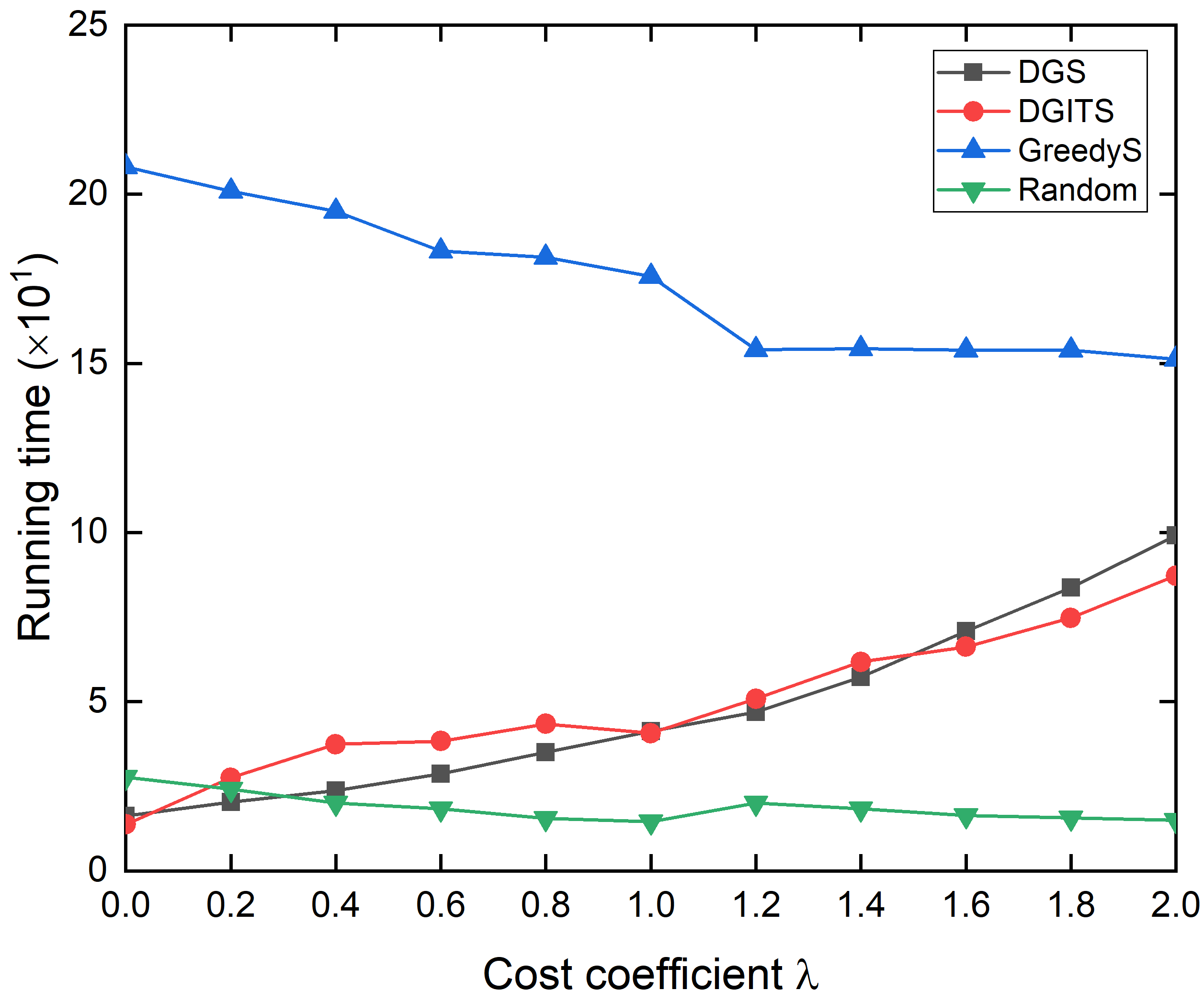}
		%\caption{fig2}
	}%
	
	\subfigure[Epinions, Performance]{
		\centering
		\includegraphics[width=0.48\columnwidth]{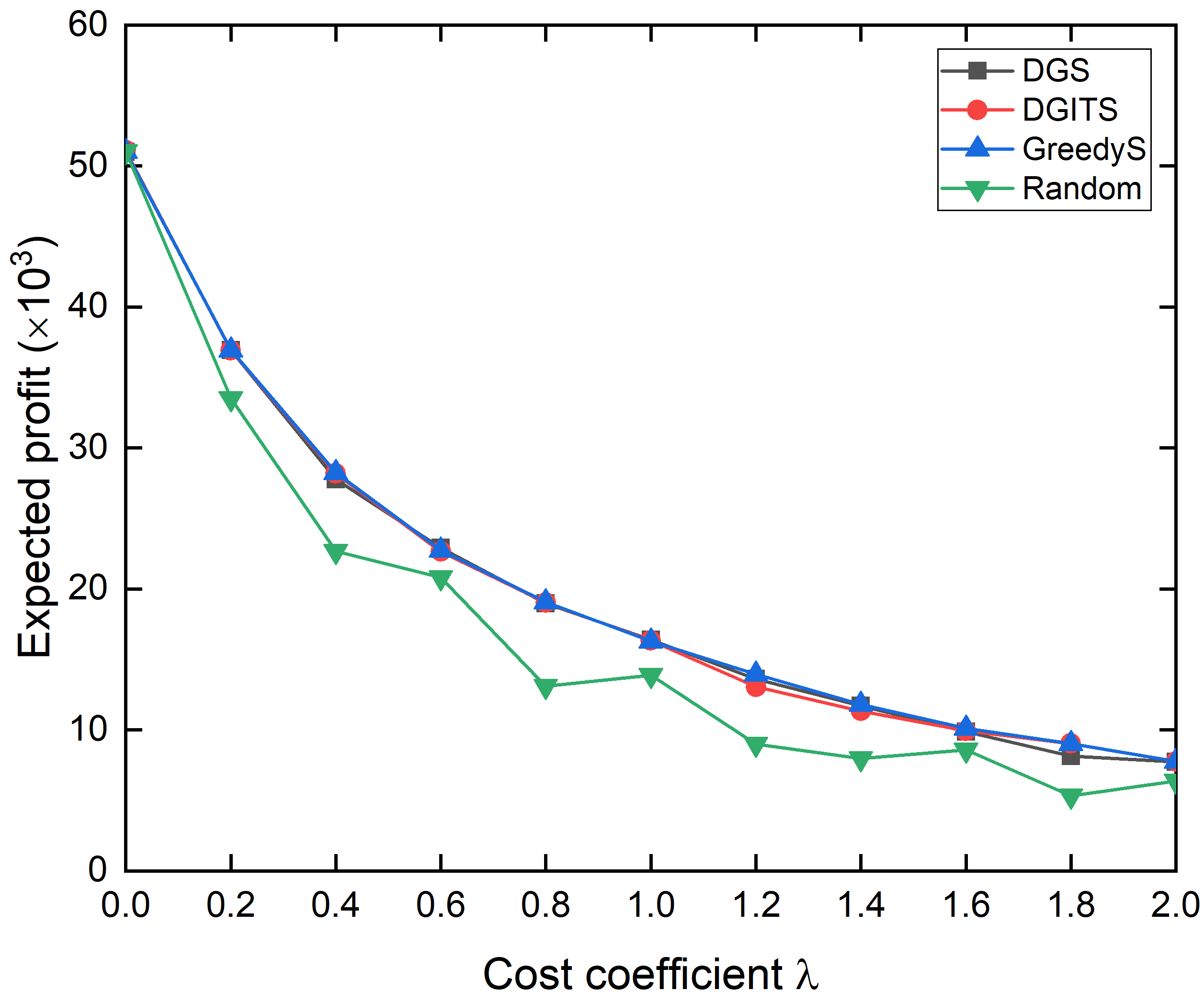}
		%\caption{fig2}
	}%
	\subfigure[Epinions, Time (s)]{
		\centering
		\includegraphics[width=0.48\columnwidth]{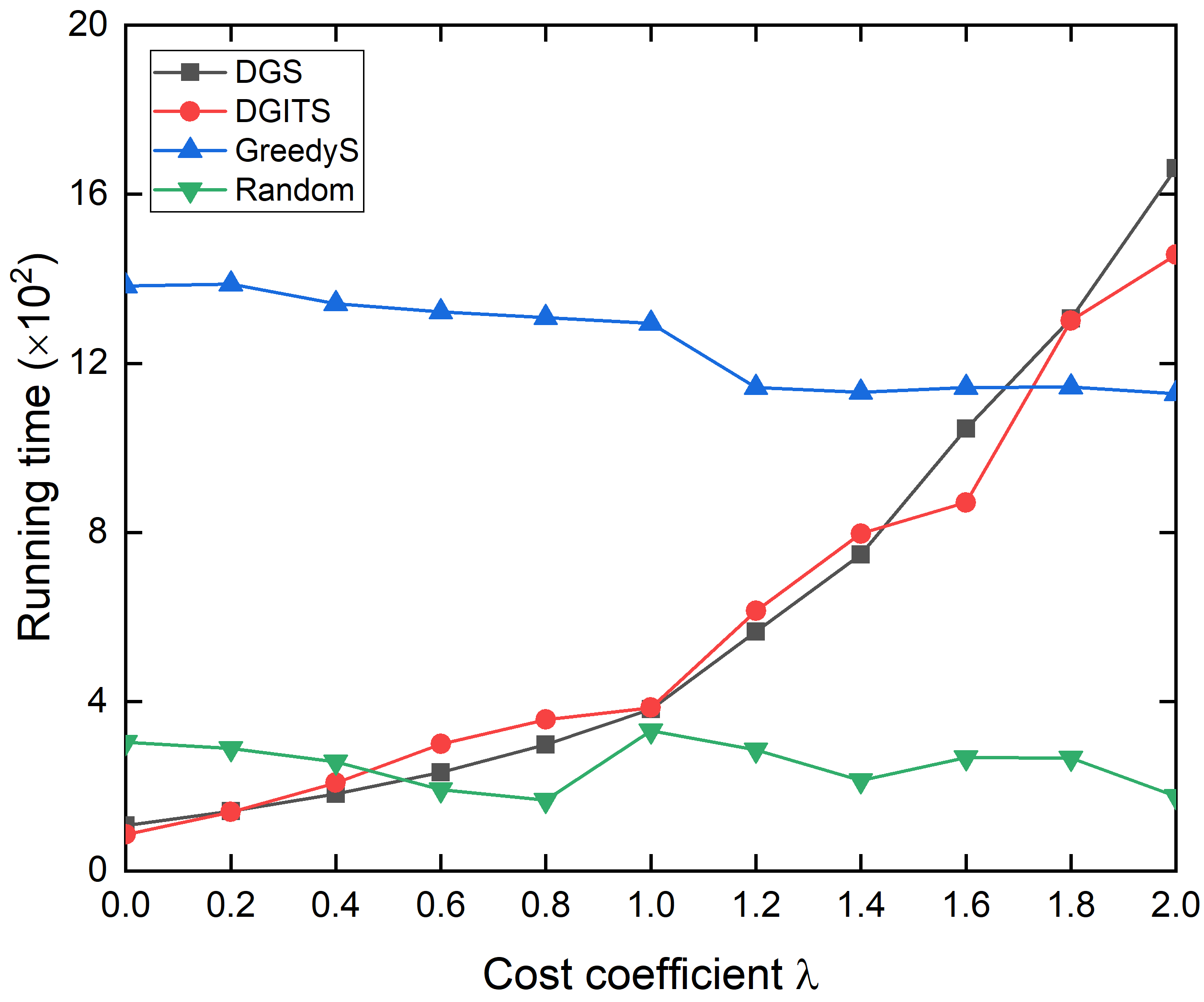}
		%\caption{fig2}
	}%
	\centering
	\caption{The performance and running time comparisons among different algorithms under the IC-model.}
	\label{fig1}
\end{figure}

\begin{figure}[!t]
	\centering
	\subfigure[NetScience, Performance]{
		\includegraphics[width=0.48\columnwidth]{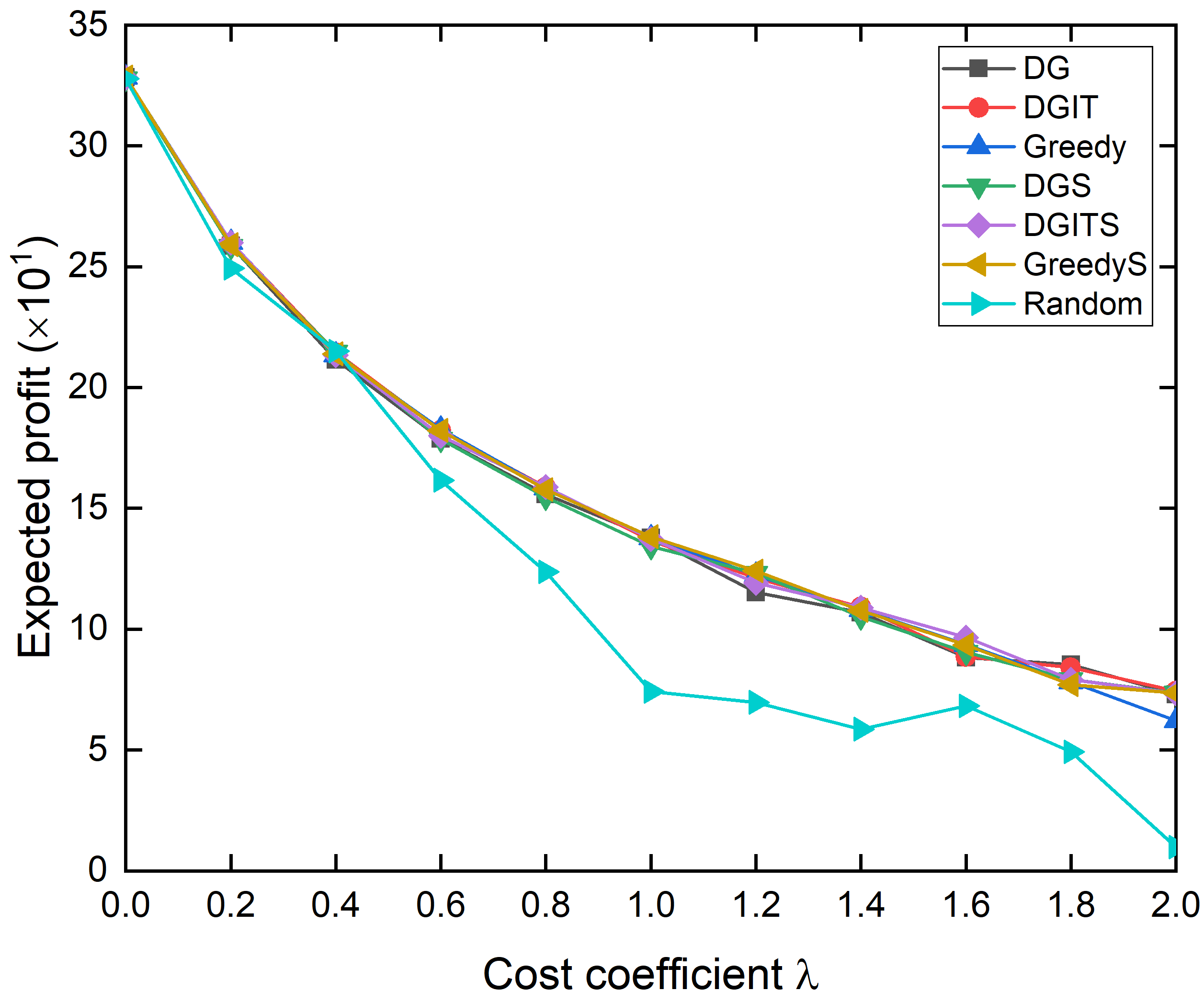}
		%\caption{fig1}
	}%
	\subfigure[NetScience, Time (s)]{
		\centering
		\includegraphics[width=0.48\columnwidth]{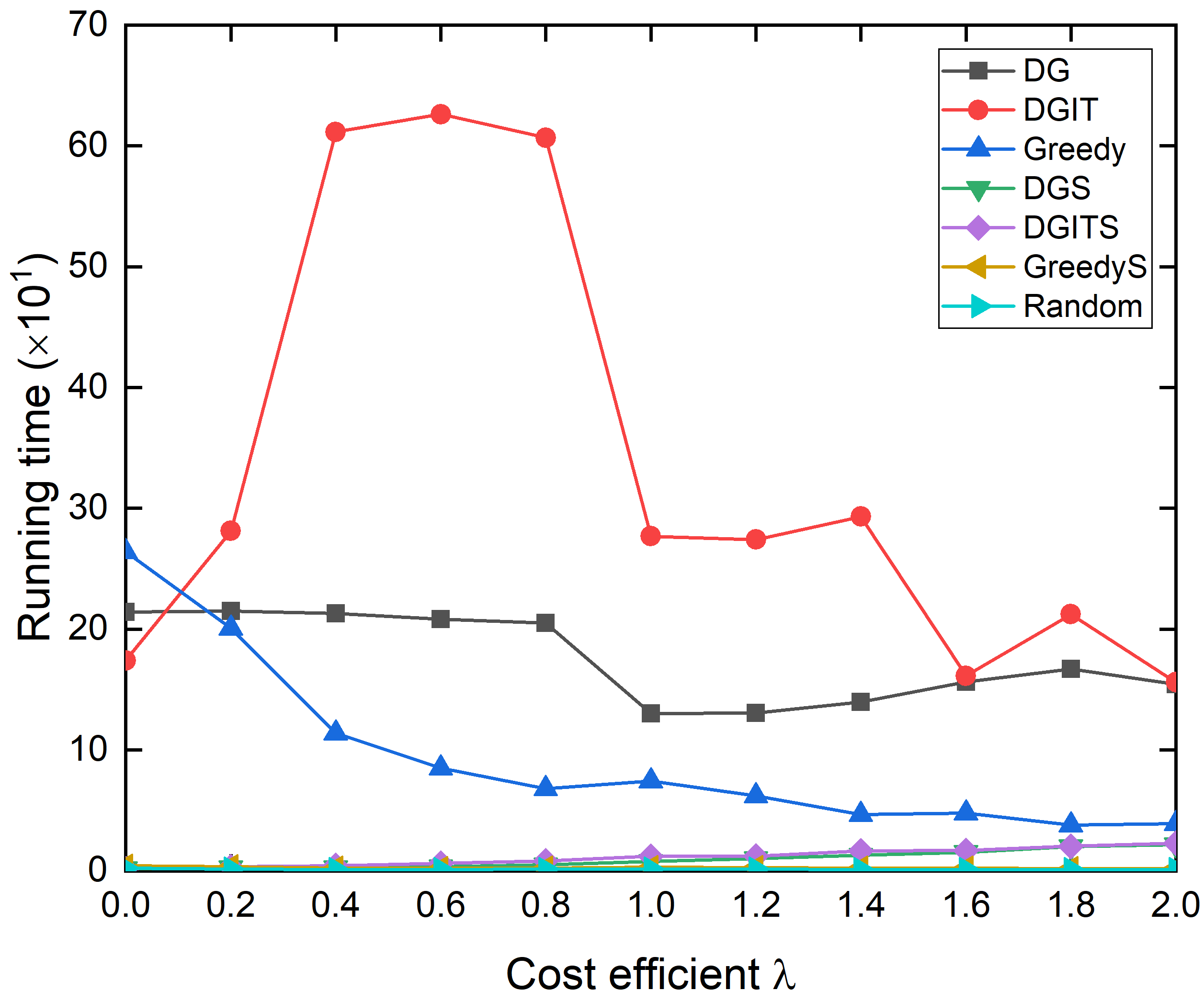}
		%\caption{fig2}
	}%
	
	\subfigure[Wiki, Performance]{
		\centering
		\includegraphics[width=0.48\columnwidth]{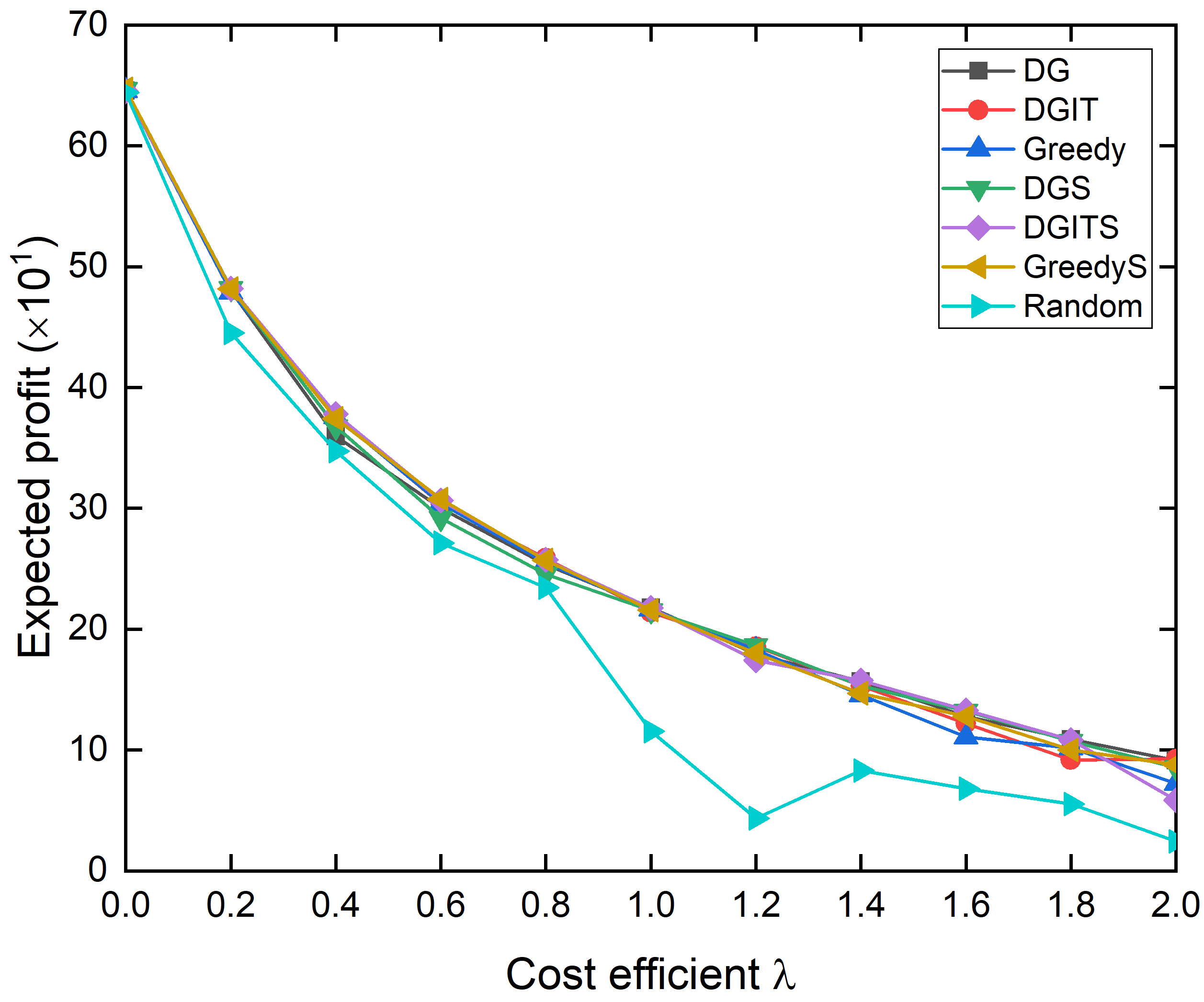}
		%\caption{fig2}
	}%
	\subfigure[Wiki, Time (s)]{
		\centering
		\includegraphics[width=0.48\columnwidth]{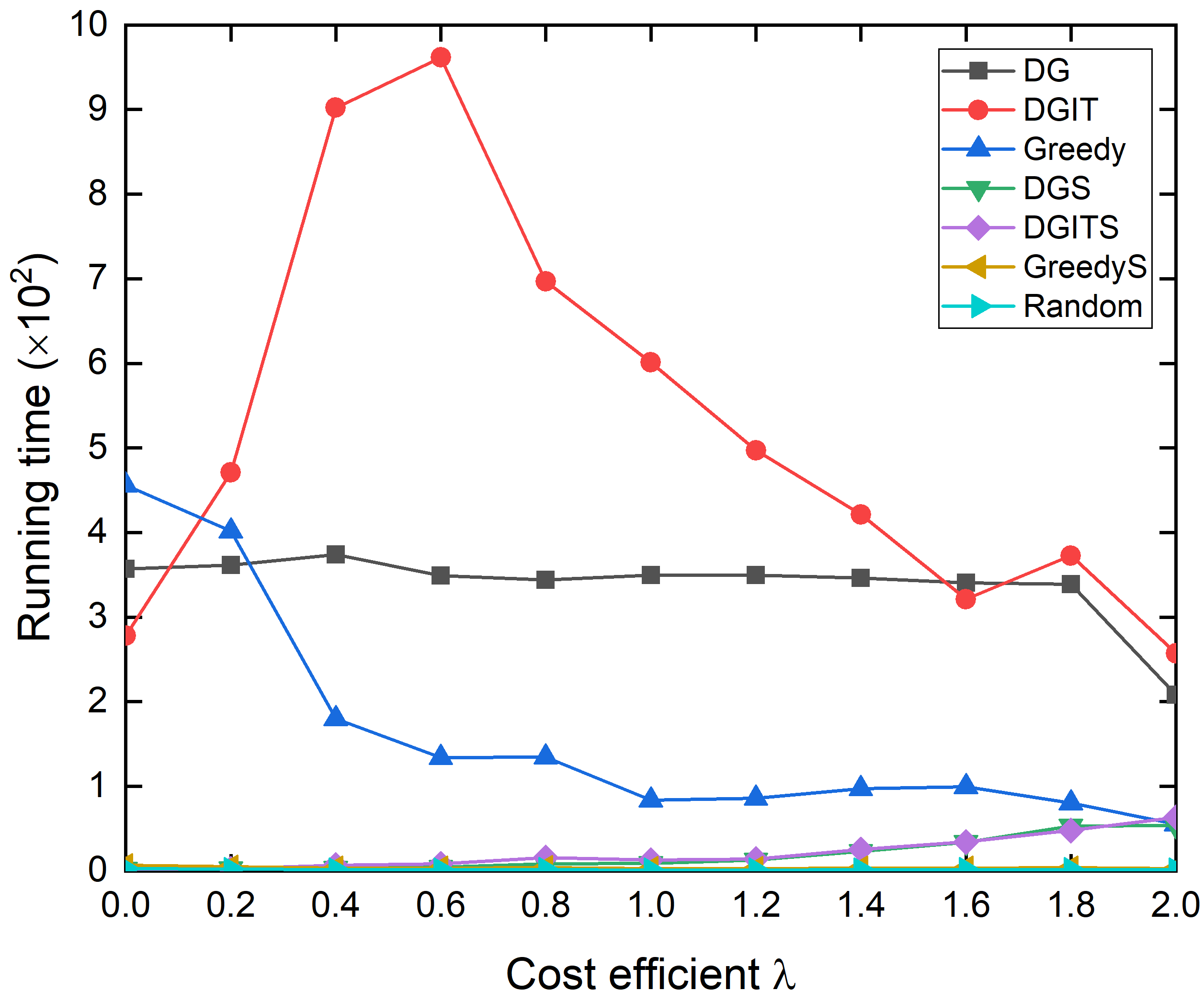}
		%\caption{fig2}
	}%
	
	\subfigure[HetHEPT, Performance]{
		\centering
		\includegraphics[width=0.48\columnwidth]{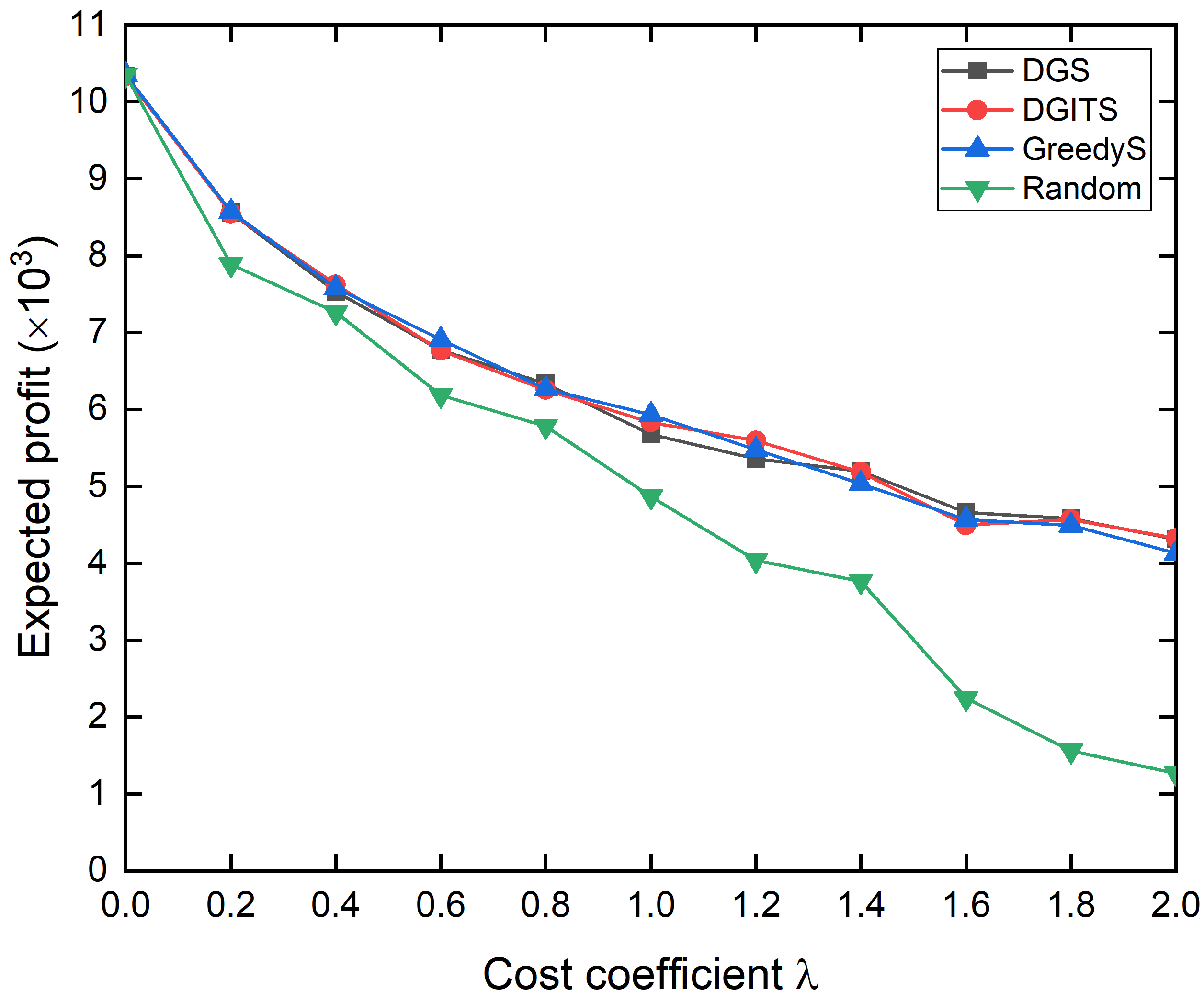}
		%\caption{fig2}
	}%
	\subfigure[HetHEPT, Time (s)]{
		\centering
		\includegraphics[width=0.48\columnwidth]{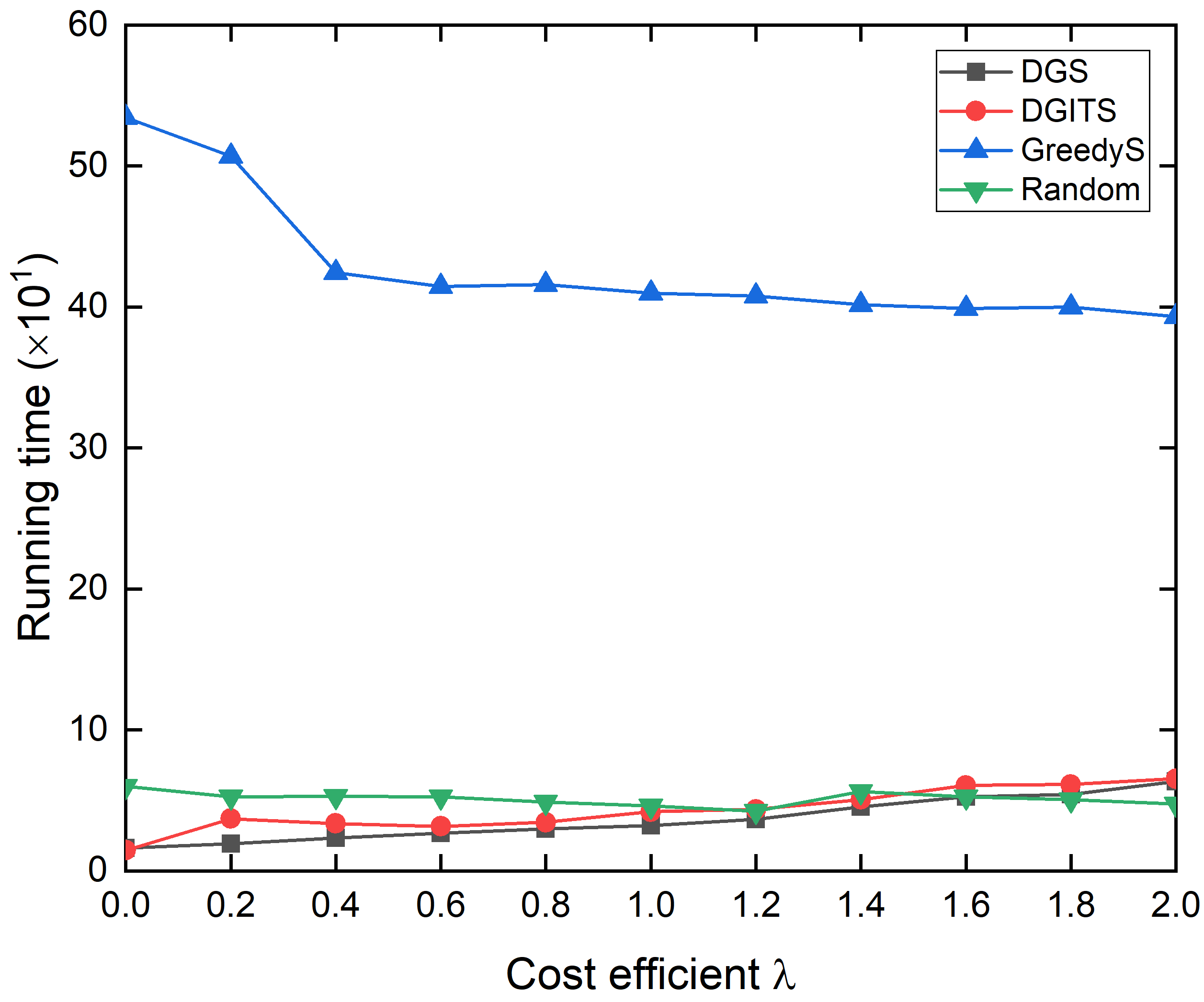}
		%\caption{fig2}
	}%
	
	\subfigure[Epinions, Performance]{
		\centering
		\includegraphics[width=0.48\columnwidth]{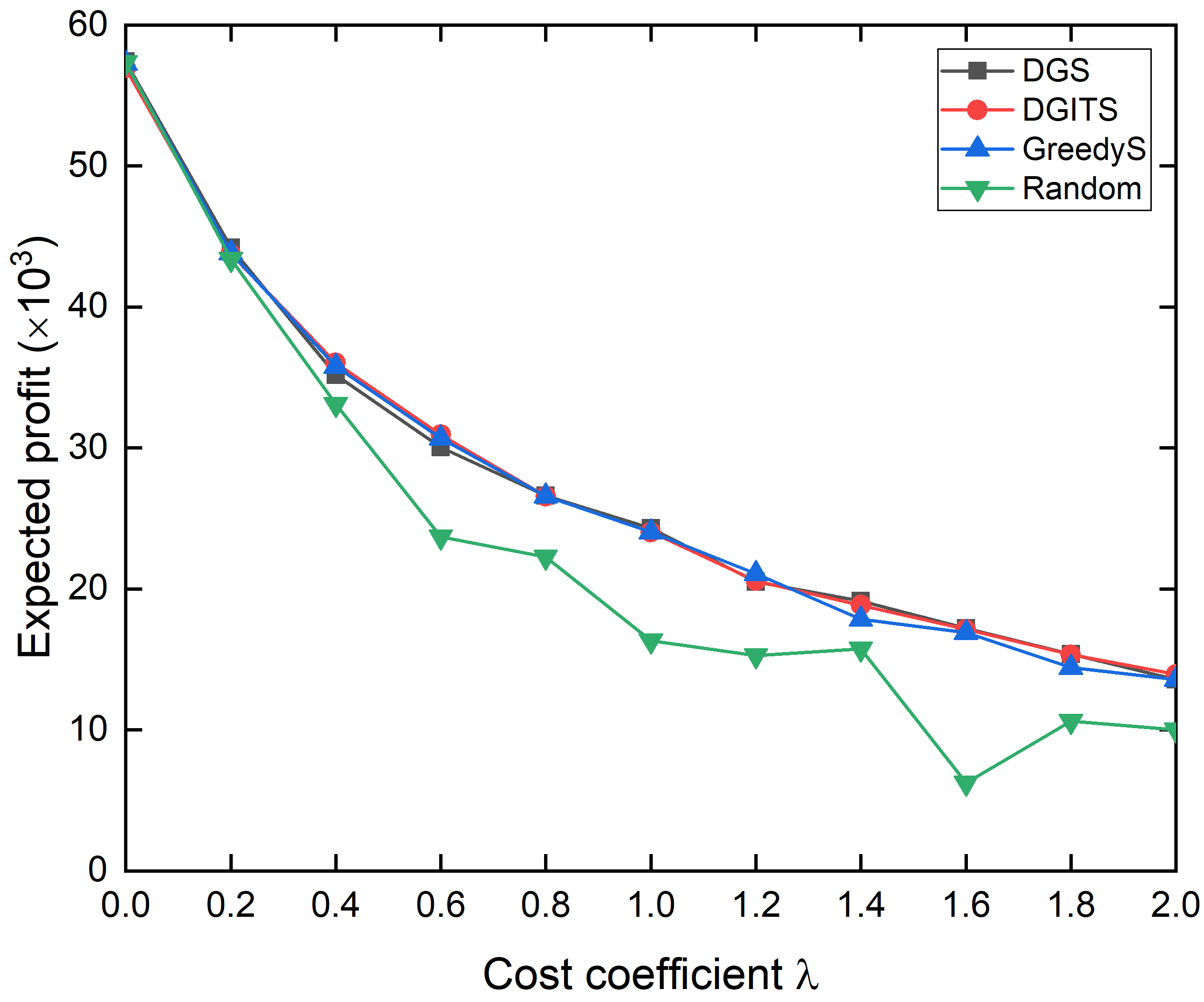}
		%\caption{fig2}
	}%
	\subfigure[Epinions, Time (s)]{
		\centering
		\includegraphics[width=0.48\columnwidth]{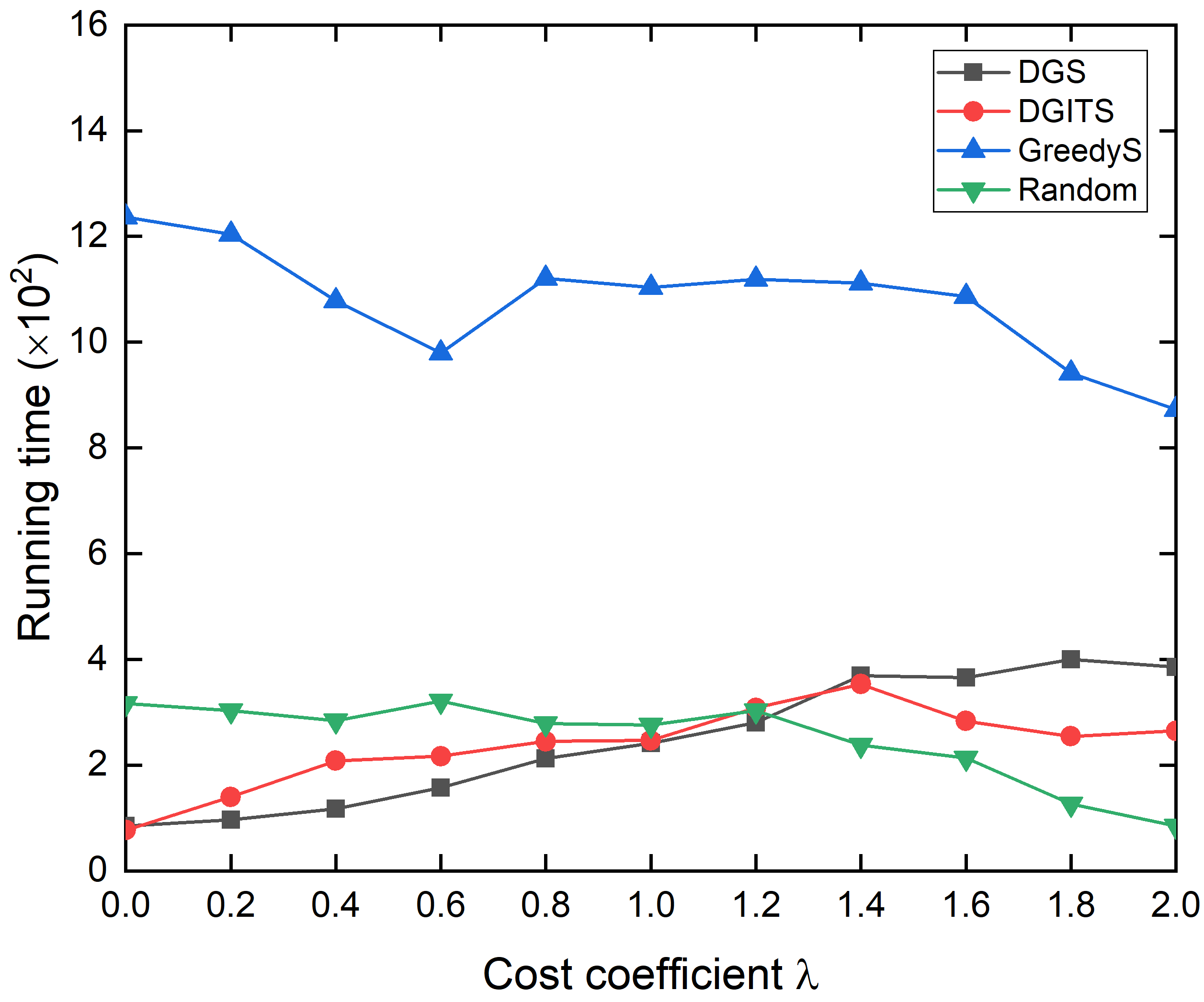}
		%\caption{fig2}
	}%
	\centering
	\caption{The performance and running time comparisons among different algorithms under the LT-model.}
	\label{fig2}
\end{figure}

\subsection{Experimental Settings}
We test different algorithms based on IC/LT-model. For the IC-model, the diffusion probability $p_{uv}$ for each $(u,v)\in E$ is set to the inverse of $v$'s in-degree, i.e., $p_{uv}=1/|N^-(v)|$, and for the LT-model, the weight $b_{uv}=1/|N^-(v)|$ for each $(u,v)\in E$ is set as well, which are adopted by previous studies of IM widely \cite{kempe2003maximizing} \cite{borgs2014maximizing} \cite{tang2014influence} \cite{tang2015influence} \cite{nguyen2016stop}. Then, we need to consider our strategy function, that is
\begin{equation}
	h_u(\vec{x})=1-\prod_{i\in[d]}\prod_{j=1}^{\vec{x}(i)}\left(1-\eta^{j-1}\cdot r_{ui}\right)
\end{equation}
where $\eta\in(0,1)$ is an attenuation coefficient, and $r_{ui}\in[0,1]$ for $u\in V$ and $i\in[d]$, where a unit of investment to marketing action $M_i$ activates user $u$ to be a seed with the probability $r_{ui}$, and each activation is independent. Here, we define vector $\vec{r}_u=(r_{u1},r_{u2},\cdots,r_{ud})$. 

We assume there are five marketing action totally, namely $\vec{x}=(x_1,x_2,\cdots,x_5)$ and $d=5$, and $\vec{b}=\{5\}^d$. Thus, $\vec{x}(i)\leq5$ for each $i\in[5]$. Besides, we set $\eta=0.8$, $\{r_{u1},r_{u3}\}$ is sampled from $[0,0.1]$ and $\{r_{u2},r_{u4},r_{u5}\}$ is sampled from $[0,0.05]$ uniformly. Apparently, $h_u(\vec{x})$ is monotone and dr-submodular with respect to $\vec{x}$. For example, consider a marketing vector $\vec{x}=(1,3,0,0,2)$ and a node $u$ with $\vec{r}_u=(0.1,0.04,0.08,0,0.05)$, we have $h_u(\vec{x})=1-[(1-0.1)][(1-0.04)(1-0.8\times0.04)(1-0.8^2\times0.04)][(1-0.05)(1-0.8\times0.05)]=0.257$ definitely. For the cost function $c$, we adopt a uniform cost distribution. The cost $c_i$ for a unit of marketing action $M_i$, $i\in[d]$, is set as $c_i=\lambda\cdot n/\left\|\vec{b}\right\|_1$, where $\lambda\geq0$ is a cost coefficient. The cost coefficient $\lambda$ defined above is used to regulate the effect of cost on objective function. For example, $f(\cdot)$ is monotone dr-submodular if $\lambda=0$; When we set $\lambda=1$, it implies $f(\vec{b})=0$ if all users in a given social network can be influenced by full marketing vector $\vec{b}$, or else this profit is negative; If $\lambda>1$, we have $f(\vec{b})<0$ definitely.

In addition, the number of Monte-Carlo simulations for each estimation to profit function is $2000$. For those algorithms that adopt speedup by sampling techniques, the parameters setting of four datasets is shown in Table \ref{table2}. Next, we denote ``XXX'' is achieved by Monte-Carlo simulations, but ``XXXS'' is achieved with speedup by sampling techniques. The algorithms we compare in this experiment are shown as follows: (1) DG(S): lattice-based double greedy feed with $[\vec{0},\vec{b}]$; DGIT(S): lattice-based double greedy feed with the collection returned by lattice-based iterative pruning; (3) Greedy(S): select the component with maximum marginal gain until no one has postive gain; (4) Random: select the component randomly until reaching negative marginal gain.
\begin{table}[h]
	\renewcommand{\arraystretch}{1.3}
	\caption{The parameters setting for algorithms that adopt speedup by sampling techniques}
	\label{table2}
	\centering
	\begin{tabular}{|c|c|c|c|c|}
		\hline
		\bfseries Dataset & \bfseries $\varepsilon_1$ & \bfseries $\varepsilon_2$ & \bfseries $\varepsilon_3$ & \bfseries $\delta$\\
		\hline
		NetScience & 0.10 & 0.10 & 0.10 & 10.00\\
		\hline
		Wiki & 0.10 & 0.10 & 0.10 & 10.00\\
		\hline
		HetHEPT & 0.15 & 0.10 & 0.10 & 10.00\\
		\hline
		Epinions & 1.00 & 0.10 & 0.10 & 10.00\\
		\hline
	\end{tabular}
\end{table}

\subsection{Experimental Results}
Fig. \ref{fig1} and Fig. \ref{fig2} draws the expected profit and running time produced by different algorithms under the IC-model and LT-model. From the left columns of Fig. \ref{fig1} and Fig. \ref{fig2}, we can see that the expected profits decrease with the increase of cost coefficient, which is obvious because a larger cost coefficient implies larger cost for a unit of investment. Its trend is close to the inverse proportional relationship, namely $f\propto(1/\lambda)$. Then, the expected profits achieved by DG, DGIT, DGS, and DGITS(DG-IP-RIS) only have very slight even negligible gaps. By comparing the performance between DG and DGS (between DGIT and DGITS), it can show that speedup by sampling techinques is completely effective, which can estimate the objective function accurately. By comparing the performance between DG and DGIT (between DGS and DGITS), it can prove that the optimal solution lies in the shrinked collection returned by itertive pruning, because DGIT does not make the performance of original DG worse. It means that the expected profit will not be reduced at least if we initial double greedy with the shrinked collection returned by iterative pruning. However, do such a thing, it can provide a theoretical bound, so as to avoid some extreme situations. In addition, even if Greedy(S) gives a satisfactory solution in our experiment, there are still some exceptions, for example, (c) in Fig. \ref{fig1} and (g) in Fig. \ref{fig2}. It often happens in some positions with larger cost coefficients.

From the right columns of Fig. \ref{fig1} and Fig. \ref{fig2}, the trend of running time with cost coefficient is a little complex, but there are two apparent characteristics. First, by comparing between DG and DGS (between DGIT and DGITS, or between Greedy and GreedyS), we can see that their running times are reduced significantly by our sampling techniques. Here, in order to test the running time of different algorithms, we do not use parallel acceleration in our implementations. Generally speaking, the running times of algorithms implemented by sampling do not exceed 10\% of the corrsponding algorithms implemented by Monte-Carlo simulations in average. Second, look at DGS and DGITS, their running times increase with the increase of cost coeffient. This is because the lower bound of optimal solution returned by Algorithm \ref{a4} will be smaller and smaller as cost efficient grows, resulting in a larger $\theta_2$ and $\theta_3$. Hence, the number of random RR-sets needed to be generated and searched will increases certainly. Third, by comparing between DGS and DGITS, their running times are roughly equal, shown as (f) (h) in Fig. \ref{fig1} and Fig. \ref{fig2}. It infers that initializing by iterative prunning will not increase the time complexity actually, which is very meaningful.

Table \ref{table3} and Table \ref{table4} shows the effect of lattice-based iterative prunning on the sum of initialized objective values under the IC-model and LT-model, where we denote $A=f(\vec{0})+f(\vec{b})$ and $B=f(\vec{g}^\circ)+f(\vec{h}^\circ)$ for convenience. When cost coefficient $\lambda\geq1$, $A<0$ in all cases, thus there is no approximation guarantee if we run double greedy algorithm feed with $[\vec{0},\vec{b}]$ directly. However, with the help of iterative prunning, $B\geq 0$ holds for most of cases. Like this, our DGIT(S) algorithm is able to offer a $(1/2-\varepsilon)$-approximate solution according to Theorem 5 and Theorem 6.

\begin{table}[h]
	\renewcommand{\arraystretch}{1.3}
	\caption{Sum of initialized objective value under the IC-model}
	\label{table3}
	\centering
	\begin{tabular}{|c|c|c|c|c|c|c|}
		\hline
		\bfseries & \multicolumn{2}{|c|}{NetScience} & \multicolumn{2}{|c|}{Wiki} & \multicolumn{2}{|c|}{HetHEPT} \\
		\hline
		\bfseries $\lambda$ & \bfseries $A$ & \bfseries $B$ & \bfseries $A$ & \bfseries $B$ & \bfseries $A$ & \bfseries $B$ \\
		\hline
		0.8 & -22 & \textbf{219} & -127 & \textbf{379} & -586 & \textbf{7050} \\
		\hline
		1.0 & -97 & \textbf{178} & -303 & \textbf{256} & -2860 & \textbf{848} \\
		\hline
		1.2 & -174 & \textbf{101} & -481 & \textbf{213} & -5065 & \textbf{751} \\
		\hline
		1.4 & -250 & -2 & -658 & \textbf{147} & -7329 & -317 \\
		\hline
		1.6 & -325 & \textbf{82} & -836 & -11 & -9523 & -463 \\
		\hline
		1.8 & -401 & \textbf{55} & -1015 & -269 & -11795 & \textbf{500} \\
		\hline
		2.0 & -477 & -17 & -1192 & -792 & -14031 & -137 \\
		\hline
	\end{tabular}
\end{table}

\begin{table}[h]
	\renewcommand{\arraystretch}{1.3}
	\caption{Sum of initialized objective value under the LT-model}
	\label{table4}
	\centering
	\begin{tabular}{|c|c|c|c|c|c|c|}
		\hline
		\bfseries & \multicolumn{2}{|c|}{NetScience} & \multicolumn{2}{|c|}{Wiki} & \multicolumn{2}{|c|}{HetHEPT} \\
		\hline
		\bfseries $\lambda$ & \bfseries $A$ & \bfseries $B$ & \bfseries $A$ & \bfseries $B$ & \bfseries $A$ & \bfseries $B$ \\
		\hline
		0.8 & 24 & 294 & -64 & \textbf{482} & 1406 & 1410 \\
		\hline
		1.0 & -51 & \textbf{229} & -242 & \textbf{390} & -869 & \textbf{1186} \\
		\hline
		1.2 & -126 & \textbf{33} & -420 & \textbf{177} & -3080 & \textbf{887} \\
		\hline
		1.4 & -202 & \textbf{159} & -599 & \textbf{228} & -5332 & \textbf{392} \\
		\hline
		1.6 & -278 & -6 & -776 & \textbf{166} & -7603 & -948 \\
		\hline
		1.8 & -353 & \textbf{11} & -954 & \textbf{66} & -9807 & -106 \\
		\hline
		2.0 & -429 & \textbf{93} & -1132 & -778 & -12063 & \textbf{229} \\
		\hline
	\end{tabular}
\end{table}

\section{Conclusion}
In this paper, we propose the continuous profit maximization problem first, and based on it, we study unconstrained dr-submodular problem further. For UDSM problem, lattice-based double greedy is an effective algorithm, but there is not approximation guarantee unless all objective values are non-negative. To solve it, we propose lattice-based iterative pruning, and derive it step by step. With the help of this technique, the possibility of satisfying non-negative is enhanced greatly. Our approach can be used as a flexible framework to address UDSM problem. Then, back to CPM-MS problem, we design a speedup strategy by using sampling techniques, which reduce its running time significantly without losing approximation guarantee. Eventually, we evaluate our proposed algorithms on four real networks, and the results validate their effectiveness and time efficiency thoroughly.

% if have a single appendix:
%\appendix[Proof of the Zonklar Equations]
% or
%\appendix  % for no appendix heading
% do not use \section anymore after \appendix, only \section*
% is possibly needed

% use appendices with more than one appendix
% then use \section to start each appendix
% you must declare a \section before using any
% \subsection or using \label (\appendices by itself
% starts a section numbered zero.)
%

% use section* for acknowledgment
\section*{Acknowledgment}

This work is partly supported by National Science Foundation under grant 1747818.

% Can use something like this to put references on a page
% by themselves when using endfloat and the captionsoff option.
\ifCLASSOPTIONcaptionsoff
  \newpage
\fi

% trigger a \newpage just before the given reference
% number - used to balance the columns on the last page
% adjust value as needed - may need to be readjusted if
% the document is modified later
%\IEEEtriggeratref{8}
% The "triggered" command can be changed if desired:
%\IEEEtriggercmd{\enlargethispage{-5in}}

% references section

% can use a bibliography generated by BibTeX as a .bbl file
% BibTeX documentation can be easily obtained at:
% http://mirror.ctan.org/biblio/bibtex/contrib/doc/
% The IEEEtran BibTeX style support page is at:
% http://www.michaelshell.org/tex/ieeetran/bibtex/
%\bibliographystyle{IEEEtran}
% argument is your BibTeX string definitions and bibliography database(s)
%\bibliography{IEEEabrv,../bib/paper}
%
% <OR> manually copy in the resultant .bbl file
% set second argument of \begin to the number of references
% (used to reserve space for the reference number labels box)

\bibliographystyle{IEEEtran}
\bibliography{references}

% biography section
% 
% If you have an EPS/PDF photo (graphicx package needed) extra braces are
% needed around the contents of the optional argument to biography to prevent
% the LaTeX parser from getting confused when it sees the complicated
% \includegraphics command within an optional argument. (You could create
% your own custom macro containing the \includegraphics command to make things
% simpler here.)
%\begin{IEEEbiography}[{\includegraphics[width=1in,height=1.25in,clip,keepaspectratio]{mshell}}]{Michael Shell}
% or if you just want to reserve a space for a photo:

% You can push biographies down or up by placing
% a \vfill before or after them. The appropriate
% use of \vfill depends on what kind of text is
% on the last page and whether or not the columns
% are being equalized.

%\vfill

% Can be used to pull up biographies so that the bottom of the last one
% is flush with the other column.
%\enlargethispage{-5in}

% that's all folks
\end{document}